\def\@makefnmark}
\def\@makefnmark{}\def\useless@macro}
\title{Quant GANs: \\Deep Generation of Financial Time Series}
\author[ ]{Magnus Wiese\textsuperscript{1, 2,}\thanks{Corresponding author: \texttt{quant.gans@gmail.com}}}
\author[2]{Robert Knobloch}
\author[1, 2]{Ralf Korn}
\author[1, 2]{\\ \vspace{1em}\hspace{2.5em} Peter Kretschmer}
\affil[1]{TU Kaiserslautern, Gottlieb-Daimler-Straße 48, 67663 Kaiserslautern, Germany
}
\affil[2]{Fraunhofer ITWM, Fraunhofer-Platz 1, 67663 Kaiserslautern, Germany
}
\date{\today}
\newcommand{\parencite}[2][]{\cite[#1]{#2}}
\newcommand{\autocite}[2][]{\cite[#1]{#2}}
\begin{document}

\maketitle
\begin{abstract}
Modeling financial time series by stochastic processes is a challenging task and a central area of research in financial mathematics. As an alternative, we introduce Quant GANs, a data-driven model which is inspired by the recent success of generative adversarial networks (GANs). Quant GANs consist of a generator and discriminator function, which utilize temporal convolutional networks (TCNs) and thereby achieve to capture long-range dependencies such as the presence of volatility clusters. The generator function is explicitly constructed such that the induced stochastic process allows a transition to its risk-neutral distribution. Our numerical results highlight that distributional properties for small and large lags are in an excellent agreement and dependence properties such as volatility clusters, leverage effects, and serial autocorrelations can be generated by the generator function of Quant GANs, demonstrably in high fidelity.
\end{abstract}

\section{Introduction}
Since the ground-breaking results of AlexNet \parencite{alexnet} at the ImageNet competition, neural networks (NNs) excel in various areas ranging from generating realistic audio waves \parencite{wavenet} to surpassing human-level performance on ImageNet \parencite{prelu_he} or beating the world champion in the game Go \parencite{alphago}. While NNs have already become standard tools in image analysis, the application in finance is still in early stages. Citing a few, \cite{deep_hedging, Buehler2019b} use NNs to hedge large portfolios of derivatives, \cite{deep_optimal_stopping} solve optimal stopping problems and \cite{deep_fraud_detection, Schreyer2019b, Schreyer2019a} detect anomalies in accounting data and show how auditing firms can be attacked by DeepFakes.

In this article, we consider the problem of approximating a realistic asset price simulator by using NNs and adversarial training techniques. Such a path simulator is useful as it can be used to extend and enrich limited real-world datasets, which in turn can be used to fine-tune or robustify financial trading strategies. 

To approximate a data-driven path simulator we propose \emph{Quant GANs}. Quant GANs are based on the application of generative adversarial networks (GANs) \parencite{gans} and are located between pure data-based approaches such as historical simulation and model-driven methods such as Monte Carlo simulation assuming an underlying stock price model like the Black Scholes model \parencite{blackscholes}, the Heston stochastic volatility model \parencite{heston} or L\'evy process based modeling \parencite{tankov2003financial}. 

Using two different NNs as opponents is the fundamental principle of GANs. While one NN, the so-called generator, is responsible for the generation of stock price paths, the second one, the discriminator, has to judge whether the generated paths are synthetic or from the same underlying distribution as the data (i.e. the past prices). 

Various pitfalls exist when training GANs ranging from limited convergence when optimizing both networks simultaneously (cf. \cite{Arjovsky2017, mescheder_which_gan}) to extrapolation problems when using recurrent generation schemes. To address the latter issue we propose the use of \emph{temporal convolutional networks} (\emph{TCNs}), also known as \textit{WaveNets} \parencite{wavenet}, as the generator architecture. A TCN generator comes with multiple benefits: it is particularly suited for modeling long-range dependencies, allows for parallization and guarantees stationarity. 

\begin{figure}[H]
	\centering
	\begin{subfigure}[b]{0.5\textwidth}
		\includegraphics[width=\textwidth]{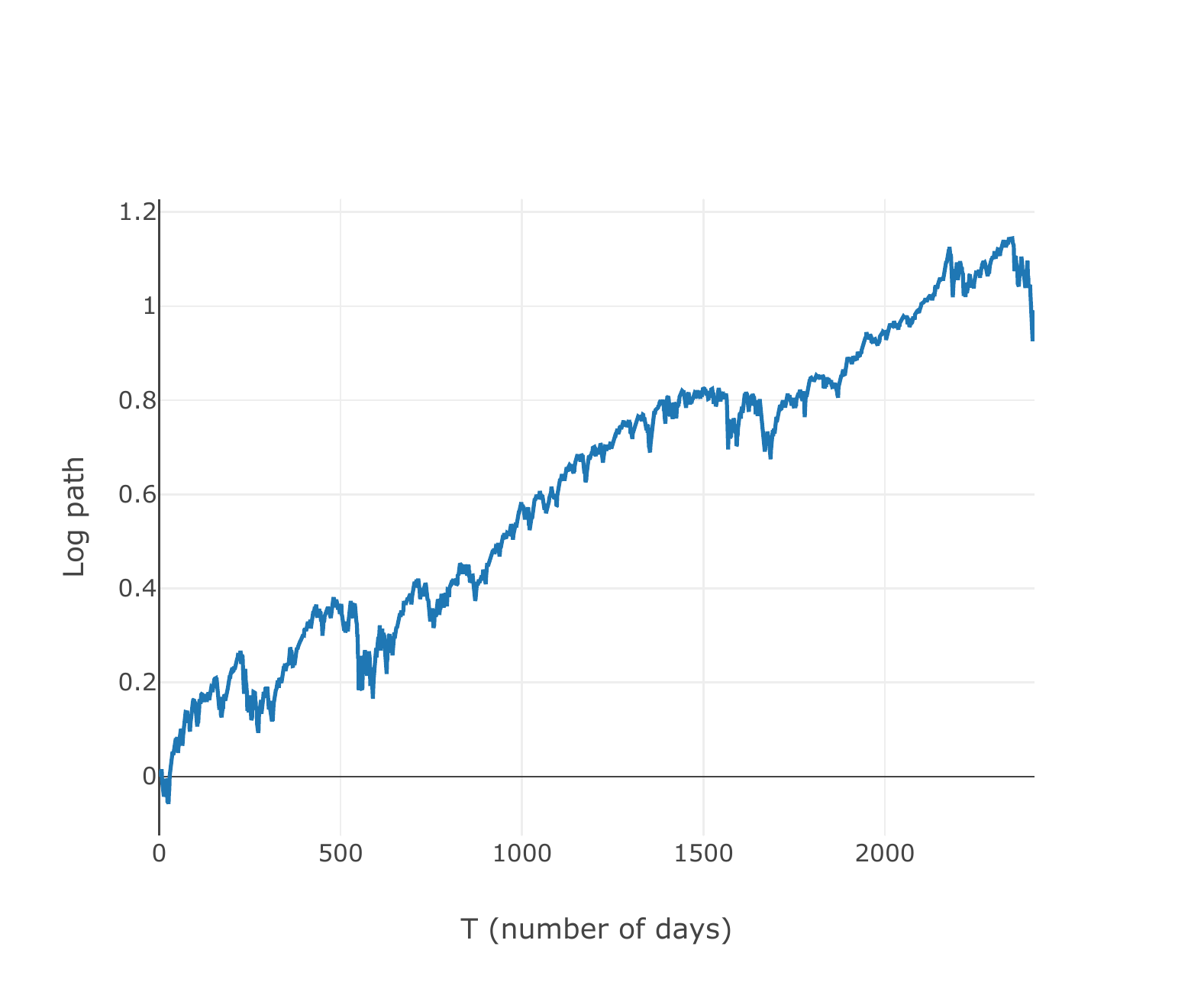}
		\caption{S\&P 500 log path}
		\label{subfig:spy_logrtn}
	\end{subfigure}
	\hspace{-2em}
	\begin{subfigure}[b]{0.5\textwidth}
		\includegraphics[width=\textwidth]{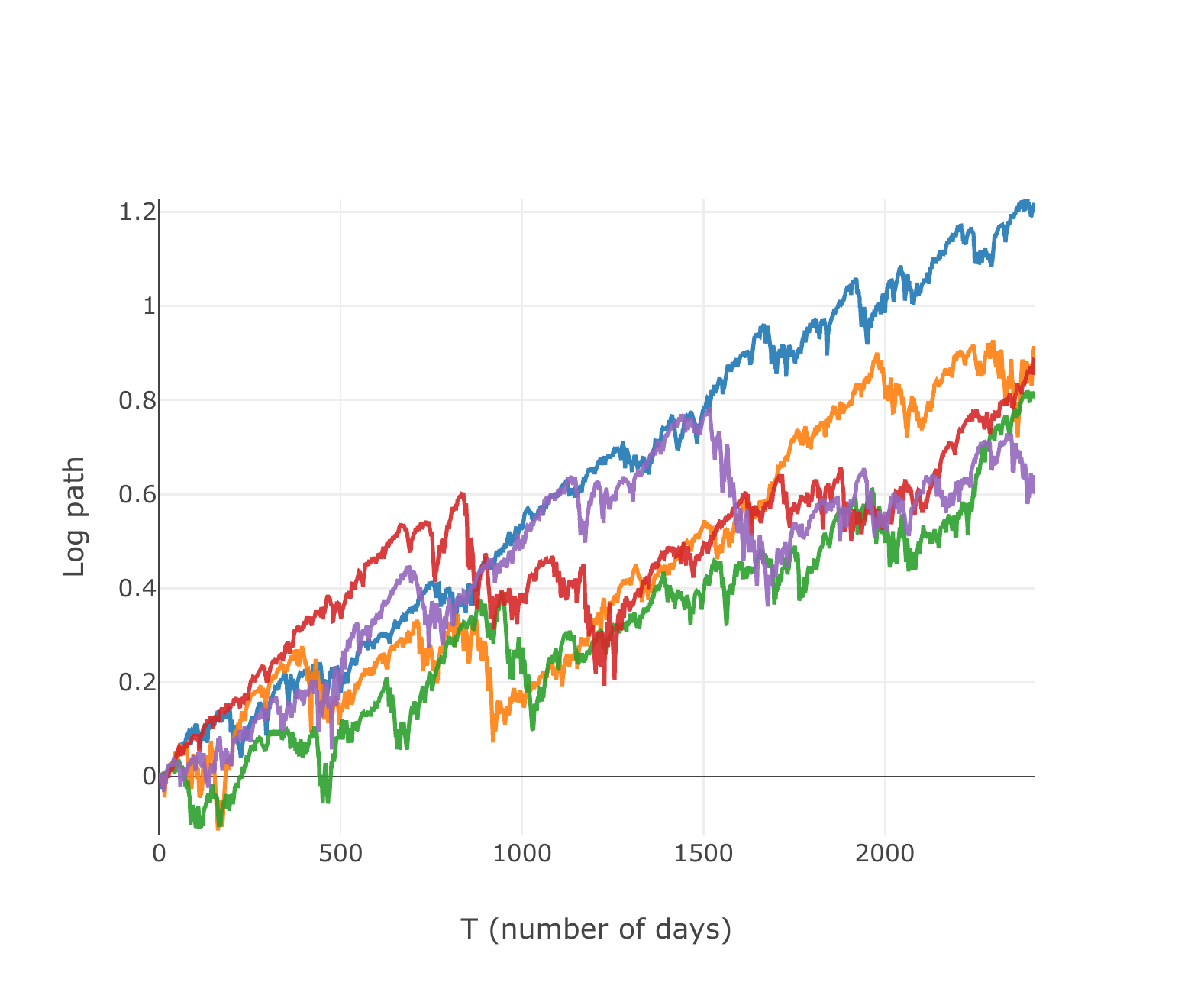}
		\caption{Generated log paths of a Quant GAN}
		\label{subfig:qgan_logrtn_sample}
	\end{subfigure}
	\caption{Comparison of the S\&P 500 index in log-space (a) with generated log paths of a calibrated Quant GAN (b).}
	\label{fig:comparison_sp500_qgan}
\end{figure}

One of our main contributions is the rigorous mathematical definition of TCNs for the first time in the literature. As this definition requires the use of heavy notation, we illustrate it by simple examples and graphical representations. This should help to make the definition accessible for mathematicians that are not familiar with the specialized language of neural nets. 

After introducing NNs (and in particular TCNs) in \hyperref[sec:neural_network_topologies]{Section~\ref*{sec:neural_network_topologies}} and GANs in \hyperref[sec:gans]{Section~\ref*{sec:gans}}, we  define the proposed generator architecture of Quant GANs, namely \textit{Stochastic Volatility Neural Networks (SVNNs)} in \hyperref[sec:svnns]{Section~\ref*{sec:svnns}}. In spirit of stochastic volatility models, the SVNN architecture consists of a volatility and drift TCN and an innovation NN. SVNNs are constructed such that the generated paths can be evaluated under their risk-neutral distribution and, as a special case, constrained to exhibit conditionally normal log returns.

\begin{figure}[htp]
	\begin{subfigure}[b]{0.48\textwidth}
		\includegraphics[width=\textwidth]{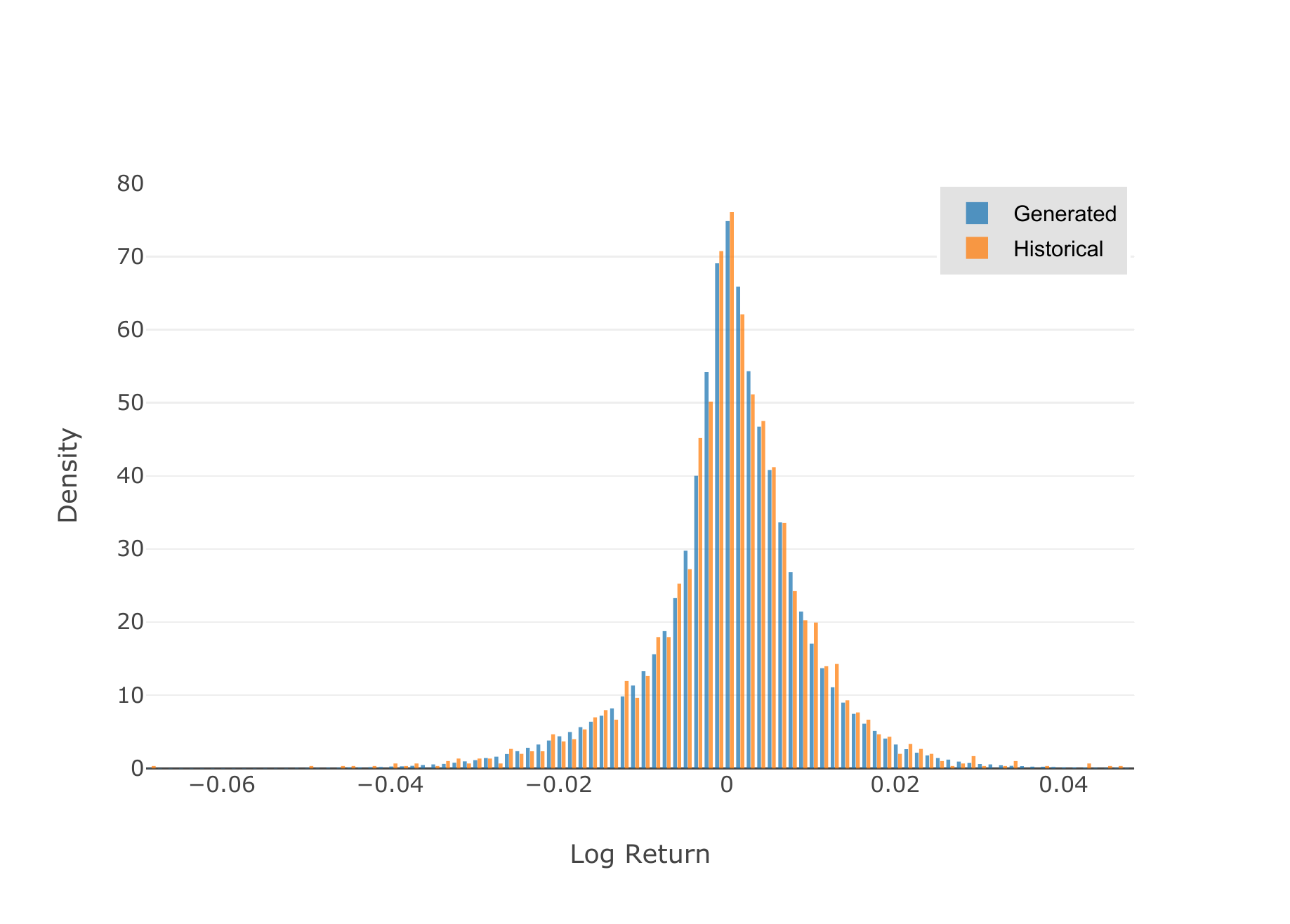}
		\caption{Histogram of daily log returns}
		\label{subfig:hist1_after_crisis}
	\end{subfigure}
	\begin{subfigure}[b]{0.48\textwidth}
		\includegraphics[width=\textwidth]{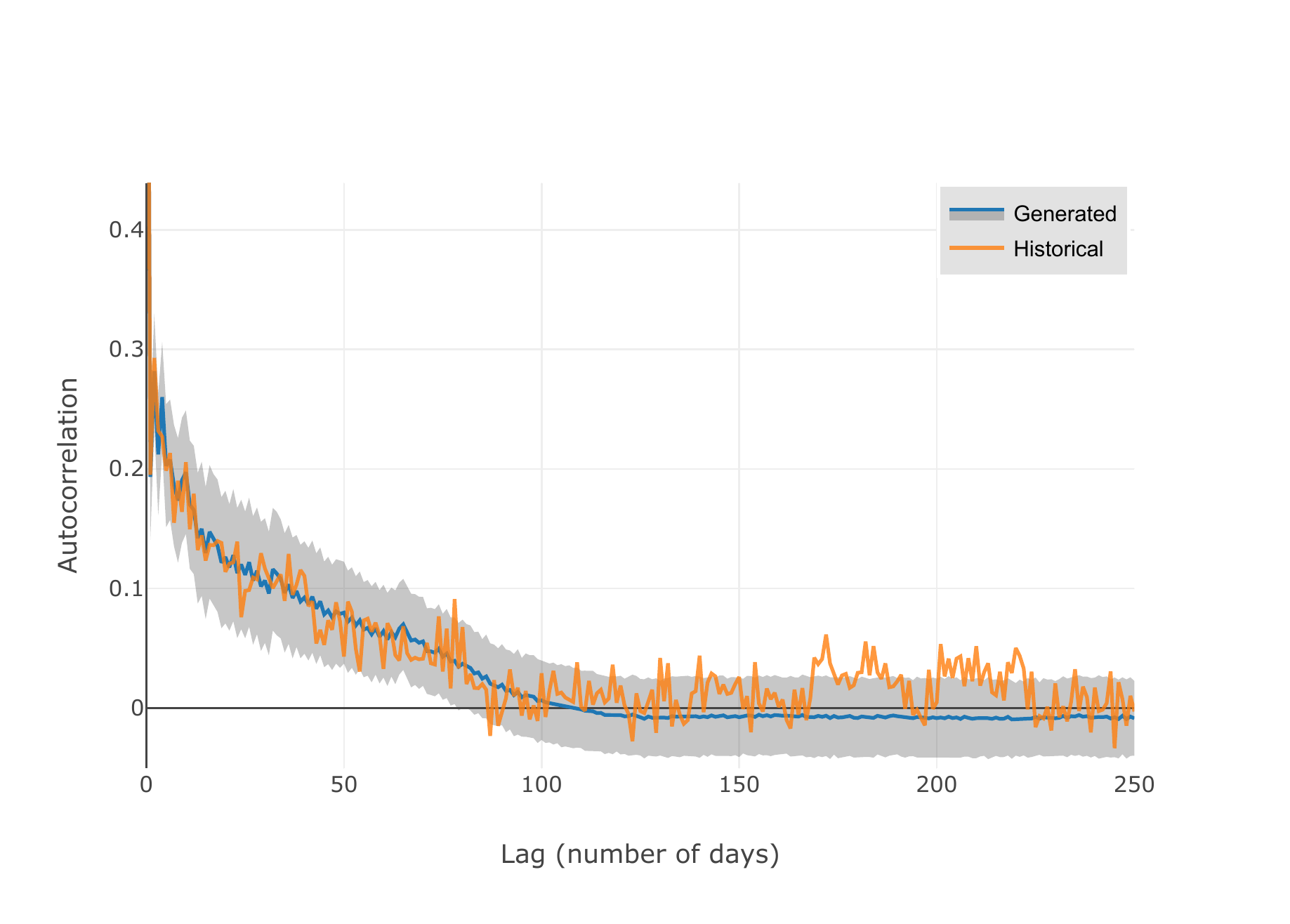}
		\caption{ACF of absolute log returns}
		\label{subfig:acf_abs}
	\end{subfigure}
	\caption{Figures (a) and (b) display a comparison of the generated Quant GAN (blue) and empirical S\&P 500 (orange) samples.}
	\label{fig:sample_properties}
\end{figure}

For SVNNs we prove theoretical $L^p$-space-related claims, which demonstrate that all moments of the process exist. Since financial time series are understood to exhibit heavy-tails, we show that the Lambert W transformation plays an essential role for generating heavy-tailed stock price returns by using methods for normalized Gaussian data. 

In \hyperref[sec:numerical]{Section~\ref*{sec:numerical}}, we present a numerical study applying Quant GANs to the S\&P 500 index from May 2009 - December 2018 (see \autoref{fig:comparison_sp500_qgan}). Our results demonstrate that SVNNs and TCNs outperform a typical {GARCH}-model with respect to distributional and dependence properties. \autoref{fig:sample_properties} illustrates a comparison of some sample properties of synthetic paths with the S\&P 500 index.

\section{Generative modeling of financial time series}
\label{sec:stylized_facts}
In this section, we briefly collect some facts and modeling issues of financial time series, which motivate our choice of NN types. Furthermore, we review existing literature related to (data-driven) modeling of financial time series. 

First, note that the performance of a stock over a certain period (as e.g. a day, a month or a year) is given by its relative return, either $R_t = (S_t - S_{t-1}) / S_{t-1}$ or its log return $R_t = \log(S_t) - \log(S_{t-1})$. Therefore, the generation of asset returns is the main objective of this paper. The characteristic properties of asset returns are well-studied and commonly known as \textit{stylized facts}. A list of the most important stylized facts includes (see e.g. \cite{stylized_facts_chakraborti, stylized_facts}):
\begin{itemize}
\item asset returns admit heavier tails than the normal distribution, 
\item the distribution of asset returns seems to be more peaked than the normal one,
\item asset returns admit phases of high activity and low activity in terms of price changes, an effect which is called \textit{volatility clustering},
\item the volatility of asset returns is negatively correlated with the return process, an effect named \textit{leverage effect},
\item empirical asset returns are considered to be uncorrelated but not independent.
\end{itemize}

A large literature of financial time series models exist ranging from a variety of discrete-time GARCH-inspired models \parencite{garch} to models in continuous-time such as \cite{blackscholes}, \cite{heston} and rough extensions \parencite{rough_heston}. However, the development and innovation of new models is difficult: it took 20 years to extend the Black-Scholes model, which assumes that asset prices can be described through geometric Brownian motions, to the more sophisticated Heston model, which accounted for stochastic volatility and the leverage effect. 

Pure data-driven modeling with NNs is a relatively new sub-field of research, which bears the potential of being able to model complicated statistical (perhaps unknown) dynamics. As a consequence, literature is rather sparse and can be summarized by four main manuscripts. \cite{Koshiyama2019} approximate a conditional model with GANs and define a score to select simulator / parameter candidates. \cite{deMeerPardo2019a} uses Wasserstein and relativistic GANs (cf. \cite{Gulrajani2017, Martineau2018}) and presents synthetic paths with volatility clusters and similar statistics. They also demonstrate that the use of synthetic paths can increase the accuracy of a trading strategy (cf. \cite{deMeerPardo2019b}). \cite{Takahashi2019} demonstrates that GANs can approximate various stylized facts. \cite{Wiese2019b} show that recurrent architectures can be used to generate equity option markets that satisfy static arbitrage constraints by using adversarial training and discrete local volatilities \parencite{BuehlerDLV}. 

However, \cite{deMeerPardo2019a} and \cite{Takahashi2019} both lack details on their NN architecture and do not state whether their proposed algorithm approximates the conditional or unconditional distribution. For all of the mentioned papers code is not available such that benchmarks are difficult to develop due to limited reproducibility.

\section{Neural network topologies}
\label{sec:neural_network_topologies}
In this section, we introduce the NN topologies that are essential to construct SVNNs. First, the {\it multilayer perceptron (MLP)} is defined. Afterward, we provide a formal definition of TCNs. 
\subsection{Multilayer perceptrons}
The MLP lies at the core of deep learning models. It is constructed by composing affine transformations with so-called \textit{activation functions}; non-linearities that are applied element-wise. \autoref{fig:nn_tikz} depicts the construction of an MLP with two hidden layers, where the input is three-dimensional and the output is one-dimensional. We begin with the activation function as the crucial ingredient and then define the MLP formally. 
\begin{definition}[Activation function]
	\label{def:activation_function}
	A function $\phi:\R\to\R$ that is Lipschitz continuous and monotonic is called {\it activation function}.
\end{definition}
\begin{remark}
	\autoref{def:activation_function} comprises a large class of functions used in the deep learning literature. Examples of activation functions in the sense of \autoref{def:activation_function} include $\tanh$, rectifier linear units  (\textit{ReLUs}) \parencite{relu_nair}, parametric ReLUs (\textit{PReLUs}) \parencite{prelu_he}, MaxOut \parencite{maxout} and a vast amount of other functions in the literature \cite{elu, selu, efficient_backprop, leaky_relu}. 
	Note that $\phi(0)=0$ is a desired property for the optimization of the networks parameters (cf. \cite{cs231n}) and is satisfied by the above activation functions. 
\end{remark}

\begin{definition}[Multilayer perceptron]
	\label{def:mlp}
	Let $L, N_0, \dots, N_{L +1} \in\N$, $\phi$ an activation function, $\Theta$ an Euclidean vector space and for any $ l \in \{1, \dots, L+1\} $ let $ \aff_l : \R^{N_{l-1}}  \rightarrow \R^{N_l}$ be an affine mapping. A function $f: \R^{N_0} \times \Theta \to \R^{N_{L+1}} $, defined by
	\begin{equation*}
	f( x, \theta) = \aff_{L+1} \circ f_{L} \circ \cdots \circ f_{1}( x),
	\end{equation*}
	where $\circ$ denotes the composition operator,
	\begin{equation*}
	f_l = \phi \circ \aff_{l} \quad \textrm{for all} \quad l\in\{1,\dots,L\}
	\end{equation*}
	and $\phi$ being applied component-wise, is called a {\it multilayer perceptron with $L$ hidden layers}. In this setting $N_0$ represents the \textit{input dimension}, $N_{L+1}$ the {\it output dimension}, $N_1,\dots,N_{L}$ the {\it hidden dimensions} and $a_{L+1}$ the \textit{output layer}. Furthermore, for any $l \in \{1,\dots,L+1\}$ the function $\aff_l$ takes the form $\aff_l: x \mapsto  W^{(l)} x +  b^{(l)}$ for some \textit{weight matrix} $ W^{(l)} \in \R^{N_l\times N_{l-1}}$ and \textit{bias} $ b^{(l)} \in  \R^{N_l}$. With this representation, the MLP's parameters are defined by
	\begin{equation*}
	\theta := \left( W^{(1)}, \dots,  W^{(L+1)},  b^{(1)}, \dots,  b^{(L+1)} \right) \in \Theta.
	\end{equation*}
\end{definition}
\begin{remark}
	\label{rem:nn_lip_prop}
	We call a function $f:\R^{d_0} \times \Theta \to \R^{d_1}$ with parameter space $\Theta$ a \textit{network}, if it is Lipschitz continuous.
\end{remark}
\def\layersep{2.5cm}
\def\numhidden{11}
\def\scalermlp{0.8}
\begin{figure}[h]
	\centering
\begin{tikzpicture}[shorten >=1pt,->,draw=black!50, node distance=\layersep]
    \tikzstyle{every pin edge}=[<-,shorten <=1pt]
    \tikzstyle{neuron}=[circle,fill=black!25,minimum size=12pt,inner sep=0pt, draw=black!80]
    \tikzstyle{input neuron}=[neuron, fill=blue!50];
    \tikzstyle{output neuron}=[neuron, fill=red!50];
    \tikzstyle{hidden1 neuron}=[neuron, fill=white!50];
    \tikzstyle{hidden2 neuron}=[neuron, fill=white!50];
    \tikzstyle{annot} = [text width=4em, text centered]

    \foreach \name / \y in {1,...,3}
    \pgfmathsetmacro{\ysc}{\y*\scalermlp}
        \node[input neuron] (I-\name) at (0,-\ysc) {};

    \foreach \name / \y in {1,...,\numhidden}
    \pgfmathsetmacro{\ysc}{\y*\scalermlp}
        \path[yshift=3.3cm]
            node[hidden1 neuron] (H1-\name) at (\layersep,-\ysc cm) {};
	
	\foreach \name / \y in {1,...,\numhidden}
	\pgfmathsetmacro{\ysc}{\y*\scalermlp}
	\path[yshift=3.3cm]
	node[hidden2 neuron] (H2-\name) at (2*\layersep,-\ysc cm) {};
	
    \node[output neuron, right of=H2-6] (O) {};

    \foreach \source in {1,...,3}
        \foreach \dest in {1,...,\numhidden}
            \path (I-\source) edge (H1-\dest);
	
	\foreach \source in {1,...,\numhidden}
		\foreach \dest in {1,...,\numhidden}
			\path (H1-\source) edge (H2-\dest);
	
    \foreach \source in {1,...,\numhidden}
        \path (H2-\source) edge (O);

    \node[annot,above of=H1-1, node distance=1cm] (hl1) {Hidden Layer};
    \node[annot,above of=H2-1, node distance=1cm] (hl2) {Hidden Layer};
    \node[annot,left of=hl1] {Input Layer};
    \node[annot,right of=hl2] {Output Layer};
\end{tikzpicture}
\caption{Two-layer MLP with a three-dimensional input space ($N_0=3$) and an one-dimensional output space ($N_3=1$). The hidden dimensions of layer one and two are eleven ($N_1=N_2=11$).}
\label{fig:nn_tikz}
\end{figure}
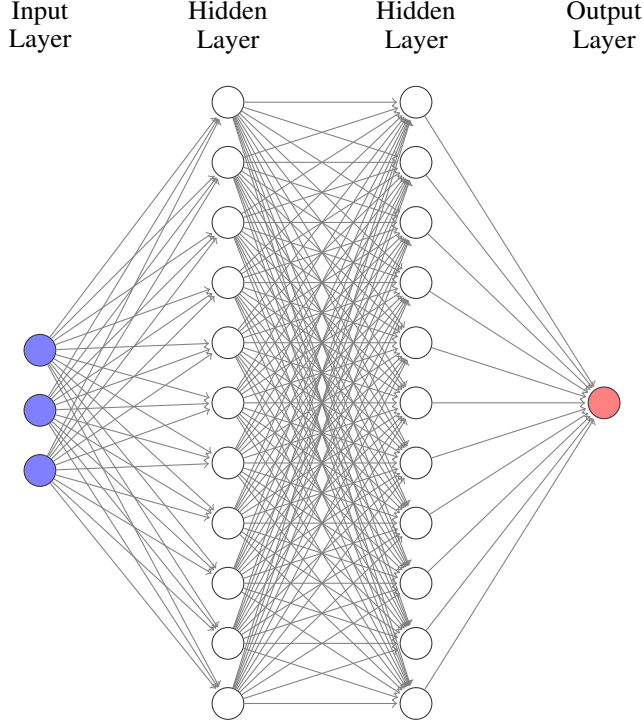
A well-known result that justifies the high applicability of MLPs is the so-called \textit{universal approximation theorem} for one-layer perceptrons with output dimension $d_1 = 1$, see for instance Theorem 1 and Theorem 2 in \cite{hornik} or Theorem 4.2 in \cite{deep_hedging}. Let us point out that the universal approximation theorem easily carries over to the case of MLPs with output dimension $d_1 > 1$ and more than one hidden layer, which corresponds to the situation under consideration in the present paper.

\subsection{Temporal convolutional networks}
As a result of volatility clusters being present in the market, the log return process is often decomposed into a stochastic volatility process and an innovations process. In order to model stochastic volatility, we propose the use of TCNs.

TCNs are convolutional architectures, which have recently shown to be competitive on many sequence-related modeling tasks \cite{bai_empirical}. In particular, empirical results suggest that TCNs are able to capture long-range dependencies in sequences more effectively than well-known recurrent architectures \parencite[Chapter 10]{deeplearningbook} such as the LSTM \parencite{lstm_original} or the GRU \parencite{gru}. One of the main advantages of TCNs compared to recurrent NNs is the absence of exponentially vanishing and exploding gradients through time \parencite{difficulty_of_training_rnns}, which is one of the main issues why RNNs are difficult to optimize. Although LSTMs address this issue by using gated activations, empirical studies show that TCNs perform better on supervised learning benchmarks \parencite{bai_empirical}. 

The construction of TCNs is simple. The crucial ingredient are so-called \textit{dilated causal convolutions}. Causal convolutions are convolutions, where the output only depends on past sequence elements. Dilated convolutions, also referred to as \textit{atrous} convolutions, are convolutions ``with holes''. \autoref{fig:tcn_no_dilation} illustrates a {\it Vanilla TCN} (cf. \autoref{def:vanilla_tcn}) with four hidden layers, a kernel size of two ($K=2$) and a dilation factor of one ($D=1$). \autoref{fig:tcn_dilation} depicts a TCN with a kernel size and a dilation factor equal to two ($K=D=2$). Note that as $D=2$, the dilation increases by a factor of two in every layer. Comparing both networks it becomes clear that the use of increasing dilations in each layer is essential to capture and model long-range dependencies. 

Below we define the TCN as well as related concepts formally. We first give the definition of the dilated causal convolutional operator, which is the basic building block of the {\it convolutional layer}. Composing several convolutional layers (together with activation functions) gives the Vanilla TCN. For the rest of this section, let $N_I, N_O, K, D, T \in \N$.

\begin{definition}[$*_D$ operator]
	\label{def:causal_convolution}
	Let $X \in \R^{N_I \times T}$ be an $N_I$-variate sequence of length $T $ and $W \in \R^{K\times N_I \times N_O}$ a tensor. Then for $t\in\{D(K-1)+1, \dots, T\}$ and $m \in \{1\dots, N_O\}$ the operator $*_D$, defined by
	\[
	\normalbrack{W *_D X}_{m,t} \coloneqq \sum_{i=1}^{K} \sum_{j =1}^{N_I} W_{i,j,m} \cdot X_{j,t-D(K-i)} ~,
	\]
	is called \textit{dilated causal convolutional operator} with \textit{dilation} $D$ and kernel size $K$.
\end{definition}

A visualization of the operator for different dilations and kernel sizes is given in \autoref{tikz:colvolution_operator}. For $K=D=1$ (see \autoref{tikz:colvolution_operator_a}), each element of the output sequence only depends on the element of the input sequence at the same time step. In the case of $K=2$ and $D=1$, each element of the output sequence originates from the elements of the input sequence at the same and previous time step. In the case of $K=D=2$, the distance between the elements that pass on the information is two.
\def\layersep{1cm}
\def\scaler{0.5}
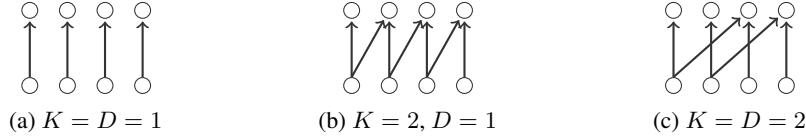
\begin{figure}[H]
	\begin{subfigure}{0.3\textwidth}
	\centering
	\begin{tikzpicture}[shorten >=1pt,->,draw=black!80, node distance=\layersep]
	\tikzstyle{every pin edge}=[<-,shorten <=1pt]
	\tikzstyle{neuron}=[circle,fill=black!25,minimum size=6pt,inner sep=0pt, draw=black!80]
	\tikzstyle{neuronI}=[neuron, fill=white!50];
	\tikzstyle{neuronII}=[neuron, fill=white!50];
	\tikzstyle{annot} = [text width=4em, text centered]
	
	\foreach \name / \y in {1,...,4}
	\pgfmathsetmacro{\ysc}{\y*\scaler}
	\node[neuronI] (I-\name) at (-\ysc, 0) {};
	
	\foreach \name / \y in {1,...,4}
	\pgfmathsetmacro{\ysc}{\y*\scaler}
	\path node[neuronII] (II-\name) at (-\ysc cm, \layersep) {};
	
	\path (I-1.north) edge [thick] (II-1);
	\path (I-2.north) edge [thick] (II-2);
	\path (I-3.north) edge [thick] (II-3);
	\path (I-4.north) edge [thick] (II-4);
	\end{tikzpicture}
	\caption{$K=D=1$}
	\label{tikz:colvolution_operator_a}
\end{subfigure}
\begin{subfigure}{0.3\textwidth}
	\centering
	\begin{tikzpicture}[shorten >=1pt,->,draw=black!80, node distance=\layersep]
	\tikzstyle{every pin edge}=[<-,shorten <=1pt]
	\tikzstyle{neuron}=[circle,fill=black!25,minimum size=6pt,inner sep=0pt, draw=black!80]
	\tikzstyle{neuronI}=[neuron, fill=white!50];
	\tikzstyle{neuronII}=[neuron, fill=white!50];
	\tikzstyle{annot} = [text width=4em, text centered]
	
	\foreach \name / \y in {1,...,4}
	\pgfmathsetmacro{\ysc}{\y*\scaler}
	\node[neuronI] (I-\name) at (-\ysc, 0) {};
	
	\foreach \name / \y in {1,...,4}
	\pgfmathsetmacro{\ysc}{\y*\scaler}
	\path node[neuronII] (II-\name) at (-\ysc cm, \layersep) {};
	
	\path (I-1.north) edge [thick] (II-1);
	\path (I-2.north) edge [thick] (II-1);
	\path (I-2.north) edge [thick] (II-2);
	\path (I-3.north) edge [thick] (II-2);
	\path (I-3.north) edge [thick] (II-3);
	\path (I-4.north) edge [thick] (II-3);
	\path (I-4.north) edge [thick] (II-4);
	\end{tikzpicture}
	\caption{$K=2$, $D=1$}
\end{subfigure}
\begin{subfigure}{0.3\textwidth}
	\centering
	\begin{tikzpicture}[shorten >=1pt,->,draw=black!80, node distance=\layersep]
	\tikzstyle{every pin edge}=[<-,shorten <=1pt]
	\tikzstyle{neuron}=[circle,fill=black!25,minimum size=6pt,inner sep=0pt, draw=black!80]
	\tikzstyle{neuronI}=[neuron, fill=white!50];
	\tikzstyle{neuronII}=[neuron, fill=white!50];
	\tikzstyle{annot} = [text width=4em, text centered]
	
	\foreach \name / \y in {1,...,4}
	\pgfmathsetmacro{\ysc}{\y*\scaler}
	\node[neuronI] (I-\name) at (-\ysc, 0) {};
	
	\foreach \name / \y in {1,...,4}
	\pgfmathsetmacro{\ysc}{\y*\scaler}
	\path node[neuronII] (II-\name) at (-\ysc cm, \layersep) {};
	
	\path (I-1.north) edge [thick] (II-1);
	\path (I-2.north) edge [thick] (II-2);
	\path (I-3.north) edge [thick] (II-1);
	\path (I-3.north) edge [thick] (II-3);
	\path (I-4.north) edge [thick] (II-2);
	\path (I-4.north) edge [thick] (II-4);
	\end{tikzpicture}
	\caption{$K=D=2$}
	\end{subfigure}
	\caption{Dilated causal convolutional operator for different dilations $D$ and kernel sizes $K$}
	\label{tikz:colvolution_operator}
\end{figure}

\begin{definition}[Causal convolutional layer]
	\label{def:ccl}
	Let $W$ be as in \autoref{def:causal_convolution} and $b \in \R^{N_O}$. A function 
	\begin{align*}
	\ccl: \R^{N_I \times T} &\to \R^{N_O \times (T-D(K-1))}
	\end{align*}
	defined for $ t\in\{D(K-1)+1, \dots, T\}$ and $m \in \{1,...,N_O\}$ by 
	\[ \ccl(X)_{m,t} := \normalbrack{W *_D X}_{m,t} + b_m \]
	is called \textit{causal convolutional layer with dilation $D$}.
\end{definition}
\begin{remark}
	The quadruple $(N_I, N_O, K, D)$ will be called the \textit{arguments} of a causal convolution $w$ and represent the \textit{input dimension}, \textit{output dimension}, \textit{kernel size}, and \textit{dilation}, respectively.
\end{remark}

\def\layersep{1cm}
\def\scaler{0.5}
\begin{figure}[H]
	\centering
	\begin{tikzpicture}[shorten >=1pt,->,draw=black!80, node distance=\layersep]
	\tikzstyle{every pin edge}=[<-,shorten <=1pt]
	\tikzstyle{neuron}=[circle,fill=black!25,minimum size=6pt,inner sep=0pt, draw=black!80]
	\tikzstyle{input neuron}=[neuron, fill=blue!50];
	\tikzstyle{output neuron}=[neuron, fill=red!50];
	\tikzstyle{hidden1 neuron}=[neuron, fill=white!50];
	\tikzstyle{hidden2 neuron}=[neuron, fill=white!50];
	\tikzstyle{annot} = [text width=4em, text centered]
	
	\foreach \name / \y in {1,...,16}
	\pgfmathsetmacro{\ysc}{\y*\scaler}
	\node[input neuron] (I-\name) at (-\ysc, 0) {};
	
	\foreach \name / \y in {1,...,16}
		\pgfmathsetmacro{\ysc}{\y*\scaler}
	\path node[hidden1 neuron] (H1-\name) at (-\ysc cm, \layersep) {};
	
	\foreach \name / \y in {1,...,16}
	\pgfmathsetmacro{\ysc}{\y*\scaler}
	\path node[hidden2 neuron] (H2-\name) at (-\ysc cm, 2*\layersep) {};
	
	\foreach \name / \y in {1,...,16}
	\pgfmathsetmacro{\ysc}{\y*\scaler}
	\path node[hidden2 neuron] (H3-\name) at (-\ysc cm, 3*\layersep) {};
	
	\foreach \name / \y in {1,...,16}
	\pgfmathsetmacro{\ysc}{\y*\scaler}
	\path node[hidden2 neuron] (H4-\name) at (-\ysc cm, 4*\layersep) {};
	
	\foreach \name / \y in {1,...,16}
	\pgfmathsetmacro{\ysc}{\y*\scaler}
	\path node[output neuron](O-\name) at (-\ysc cm, 5*\layersep) {};
	
%
%
%
%
%
	
	\path (H4-1.north) edge [thick] (O-1);
	
	\path (H3-1.north) edge [thick] (H4-1);
	\path (H3-2.north) edge [thick] (H4-1);
	
	\path (H2-1.north) edge [thick] (H3-1);
	\path (H2-2.north) edge [thick] (H3-1);
	\path (H2-2.north) edge [thick] (H3-2);
	\path (H2-3.north) edge [thick] (H3-2);
	
	\path (H1-1.north) edge [thick] (H2-1);
	\path (H1-2.north) edge [thick] (H2-1);
	\path (H1-2.north) edge [thick] (H2-2);
	\path (H1-3.north) edge [thick] (H2-2);
	\path (H1-3.north) edge [thick] (H2-3);
	\path (H1-4.north) edge [thick] (H2-3);
	
	\path (I-1.north) edge [thick] (H1-1);
	\path (I-2.north) edge [thick] (H1-1);
	\path (I-2.north) edge [thick] (H1-2);
	\path (I-3.north) edge [thick] (H1-2);
	\path (I-3.north) edge [thick] (H1-3);
	\path (I-4.north) edge [thick] (H1-3);
	\path (I-4.north) edge [thick] (H1-4);
	\path (I-5.north) edge [thick] (H1-4);
	

	\node[annot,right of=I-1, node distance=1.4cm, text width=3cm] (o1) {\footnotesize Input Layer};
	\node[annot,right of=H1-1, node distance=1.4cm, text width=3cm] (hl1) {\footnotesize Hidden Layer};
	\node[annot,right of=H2-1, node distance=1.4cm, text width=3cm] (hl2) {\footnotesize Hidden Layer};
	\node[annot,right of=H3-1, node distance=1.4cm, text width=3cm] (hl3) {\footnotesize Hidden Layer};
	\node[annot,right of=H4-1, node distance=1.4cm, text width=3cm] (hl4) {\footnotesize Hidden Layer};
	\node[annot,right of=O-1, node distance=1.4cm, text width=3cm] (o1) {\footnotesize Output Layer};
	\end{tikzpicture}
	\caption{Vanilla TCN with $4$ hidden layers, kernel size $K=2$ and dilation factor $D=1$ (cf. \cite{wavenet}).}
	\label{fig:tcn_no_dilation}
\end{figure}
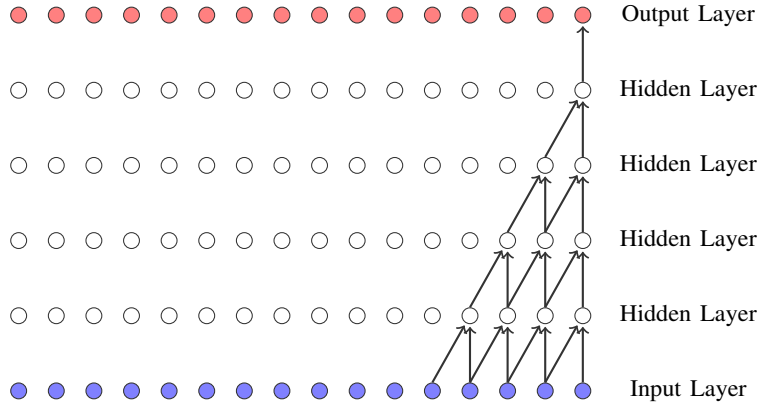

\begin{example}[$1 \times 1$ convolutional layer]
	\label{ex:1_times_1_conv}
	Let $X \in \R^{N_I \times T}$ be an $N_I$-variate sequence and $\ccl: \R^{N_I \times T} \to \R^{N_O \times T}$ a causal convolutional layer with arguments $(N_I, N_O, 1, 1)$. We call such a layer a $1\times1$ \textit{convolutional layer}.\footnote{Note that using a $1\times 1$ convolution is equivalent to applying an affine transformation along the time dimension of $X$.}
\end{example}

In the previous section we constructed MLPs by composing affine transformations with activation functions. The Vanilla TCN construction follows a similar approach: dilated causal convolutional layers are composed with activation functions. In order to allow for more expressive transformations, the TCN uses a \textit{block module} construction and thereby generalizes the Vanilla TCN. For completeness, both definitions are given below.

\begin{definition}[Block module]
	Let $S \in \N$. A function $\psi: \R^{N_I \times T} \to \R^{N_O \times (T-S)}$ that is Lipschitz continuous is called \textit{block module} with arguments $(N_I, N_O, S)$. 
\end{definition}

\begin{definition}[Temporal convolutional network]
	\label{def:tcn}
	\sloppy
	Let $T_0, L, N_0, \dots, N_{L+1} \in \N $.
	Moreover, for $l \in \{1,\dots, L\}$ let $S_l \in \N$ such that $\sum_{l=1}^{L} S_l \leq T_0 -1$. Hence, for $T_l \coloneqq T_{l-1} - S_l$ it holds
	\begin{equation}
	\label{eq:t_L_rfs}
	T_L = T_0 - \sum_{l=1}^{L} S_l \geq 1 ~.
	\end{equation}
	Furthermore, let $\psi_{l}: \R^{N_{l-1} \times T_{l-1}} \to \R^{N_{l}  \times T_{l}}$ for $l \in \{1,\dots,L\}$ represent block modules and ${\ccl:\R^{N_{L} \times T_L} \to \R^{N_{L+1} \times T_L}}$ a $1\times 1$ convolutional layer.
	A function $f: \R^{N_0 \times T_0} \times \Theta \to \R^{ N_{L+1} \times T_{L} } $, defined by 
	\begin{equation*}
	f(X, \theta) = w \circ \psi_{L} \circ \cdots \circ \psi_{1}(X) ~,
	\end{equation*}
	is called {\it temporal convolutional network} with $L$ hidden layers. The class of TCNs with $L$ hidden layers mapping from $\R^{d_0}$ to $\R^{d_1}$ will be denoted by $\tcn_{d_0, d_1, L}$ ($d_0 = N_0, d_1 = N_{L+1}$).
\end{definition}

\begin{definition}[Vanilla TCN]
	\label{def:vanilla_tcn}
	Let $f \in \tcn_{N_0, N_{L+1}, L}$ such that for all $l \in \{1,\dots, L\}$ each block module $\psi_l$ is defined as a composition of a causal convolutional layer $\ccl_l$ with arguments $(N_{l-1}, N_l, K_l, D_l)$ and an activation function $\phi$, i.e. $\psi_l = \phi \circ \ccl_l$. Then we call $f: \R^{N_0 \times T_0} \times \Theta \to \R^{N_{L+1} \times T_{L}} $ a \textit{Vanilla TCN}. Moreover, if $D_l = D^{l-1}$ for all $l \in \{1,\dots, L\}$, we call $f$ a \textit{Vanilla TCN with dilation factor $D$}. Whenever $ K_l = K $ for all $l \in \{1,\dots, L\}$, we say that $f$ has \textit{kernel size $K$}.
\end{definition}

\begin{figure}[h]
	\centering
	\begin{tikzpicture}[shorten >=1pt,->,draw=black!80, node distance=\layersep]
	\tikzstyle{every pin edge}=[<-,shorten <=1pt]
	\tikzstyle{neuron}=[circle,fill=black!25,minimum size=6pt,inner sep=0pt, draw=black!80]
	\tikzstyle{input neuron}=[neuron, fill=blue!50];
	\tikzstyle{output neuron}=[neuron, fill=red!50];
	\tikzstyle{hidden1 neuron}=[neuron, fill=white!50];
	\tikzstyle{hidden2 neuron}=[neuron, fill=white!50];
	\tikzstyle{annot} = [text width=4em, text centered]
	
	\foreach \name / \y in {1,...,16}
	\pgfmathsetmacro{\ysc}{\y*\scaler}
	\node[input neuron] (I-\name) at (-\ysc, 0) {};
	
	\foreach \name / \y in {1,...,16}
	\pgfmathsetmacro{\ysc}{\y*\scaler}
	\path node[hidden1 neuron] (H1-\name) at (-\ysc cm, \layersep) {};
	
	\foreach \name / \y in {1,...,16}
	\pgfmathsetmacro{\ysc}{\y*\scaler}
	\path node[hidden2 neuron] (H2-\name) at (-\ysc cm, 2*\layersep) {};
	
	\foreach \name / \y in {1,...,16}
	\pgfmathsetmacro{\ysc}{\y*\scaler}
	\path node[hidden2 neuron] (H3-\name) at (-\ysc cm, 3*\layersep) {};
	
	\foreach \name / \y in {1,...,16}
	\pgfmathsetmacro{\ysc}{\y*\scaler}
	\path node[hidden2 neuron] (H4-\name) at (-\ysc cm, 4*\layersep) {};
	
	\foreach \name / \y in {1,...,16}
	\pgfmathsetmacro{\ysc}{\y*\scaler}
	\path node[output neuron](O-\name) at (-\ysc cm, 5*\layersep) {};
	
%
%
%
%
%
	
	\path (H4-1.north) edge [thick] (O-1);
	
	\path (H3-1.north) edge [thick] (H4-1);
	\path (H3-9.north) edge [thick] (H4-1);
	
	\path (H2-1.north) edge [thick] (H3-1);
	\path (H2-5.north) edge [thick] (H3-1);
	\path (H2-9.north) edge [thick] (H3-9);
	\path (H2-13.north) edge [thick] (H3-9);
	
	\path (H1-1.north) edge [thick] (H2-1);
	\path (H1-3.north) edge [thick] (H2-1);
	\path (H1-5.north) edge [thick] (H2-5);
	\path (H1-7.north) edge [thick] (H2-5);
	\path (H1-9.north) edge [thick] (H2-9);
	\path (H1-11.north) edge [thick] (H2-9);
	\path (H1-13.north) edge [thick] (H2-13);
	\path (H1-15.north) edge [thick] (H2-13);
	
	\path (I-1.north) edge [thick] (H1-1);
	\path (I-2.north) edge [thick] (H1-1);
	\path (I-3.north) edge [thick] (H1-3);
	\path (I-4.north) edge [thick] (H1-3);
	\path (I-5.north) edge [thick] (H1-5);
	\path (I-6.north) edge [thick] (H1-5);
	\path (I-7.north) edge [thick] (H1-7);
	\path (I-8.north) edge [thick] (H1-7);
	\path (I-9.north) edge [thick] (H1-9);
	\path (I-10.north) edge [thick] (H1-9);
	\path (I-11.north) edge [thick] (H1-11);
	\path (I-12.north) edge [thick] (H1-11);
	\path (I-13.north) edge [thick] (H1-13);
	\path (I-14.north) edge [thick] (H1-13);
	\path (I-15.north) edge [thick] (H1-15);
	\path (I-16.north) edge [thick] (H1-15);
	
	\node[annot,right of=I-1, node distance=1.4cm, text width=3cm] (o1) {\footnotesize Input Layer};
	\node[annot,right of=H1-1, node distance=1.4cm, text width=3cm] (hl1) {\footnotesize Hidden Layer};
	\node[annot,right of=H2-1, node distance=1.4cm, text width=3cm] (hl2) {\footnotesize Hidden Layer};
	\node[annot,right of=H3-1, node distance=1.4cm, text width=3cm] (hl3) {\footnotesize Hidden Layer};
	\node[annot,right of=H4-1, node distance=1.4cm, text width=3cm] (hl4) {\footnotesize Hidden Layer};
	\node[annot,right of=O-1, node distance=1.4cm, text width=3cm] (o1) {\footnotesize Output Layer};
	\end{tikzpicture}
	\caption{Vanilla TCN with $4$ hidden layers, kernel size $K=2$ and dilation factor $D=2$ (cf. \cite{wavenet}).}
	\label{fig:tcn_dilation}
\end{figure}
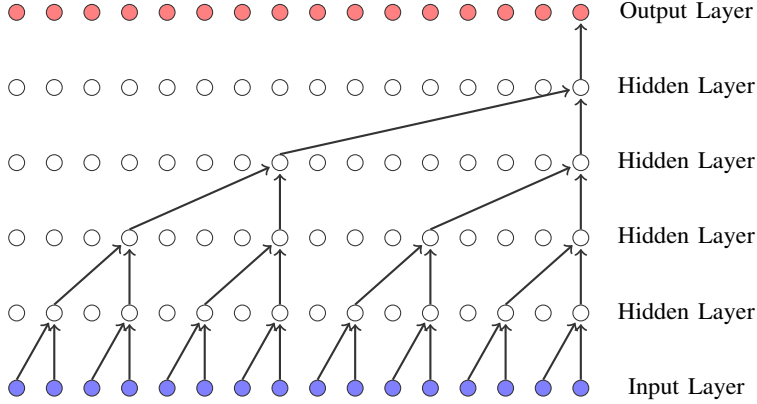

TCN's ability to model long-range dependencies becomes ultimately apparent when comparing the two Vanilla TCNs displayed in \autoref{fig:tcn_no_dilation} and \autoref{fig:tcn_dilation}. In \autoref{fig:tcn_no_dilation}, the network is a function of $5$ sequence elements, whereas the network in \autoref{fig:tcn_dilation} has $16$ sequence elements as input. We call the number of sequence elements that the TCN can capture the \textit{receptive field size} and give a formal definition below:
\begin{definition}[Receptive field size]
	Let $f \in \tcn_{d_0, d_1, L}$ and let $S_1, \dots, S_L$ be as in \autoref{def:tcn}. The constant
	\begin{equation*}
	\rfs \coloneqq 1 + \sum_{l=1}^{L} S_l
	\end{equation*}
	is called \textit{receptive field size} (\textit{RFS}). 
\end{definition}
\begin{remark}
	For Vanilla TCNs with kernel size $K$ and dilation factor $D>1$, the RFS $\rfs$ can easily be computed using the formula for the sum of a geometric sequence with finite length:
	\begin{equation*}
		\rfs =1+ (K-1) \cdot \normalbrack{\dfrac{D^{L} - 1}{D - 1}} ~.
	\end{equation*}
	Therefore, the RFS $\rfs$ is the minimum initial time dimension $T_0$ of an input $X \in \R^{N_0 \times T_0}$ such that the sequence $X$ can be inferred (compare \autoref{eq:t_L_rfs}).
\end{remark}
\begin{remark}
	Note that an MLP can be seen as a Vanilla TCN in which each causal convolution is a $1\times 1$ convolution. Thus, MLPs are a subclass of TCNs with an RFS equal to one. 
\end{remark}

The idea of residual connections can also be used.

\begin{definition}[TCN with skip connections]
\label{def:tcn_skip}
Assume the notation from \autoref{def:tcn} and for $N_{skip} \in \N$ let 
\[
\gamma_l : \R^{N_{l-l} \times T_{l-1}} \to \R^{N_{l} \times T_{l}} \times \R^{N_{skip} \times T_L } \quad \textrm{for} \ l \in \{1,\dots, L\}
\]
denote block modules. Moreover, let $\gamma$ be a block module with arguments $(N_{skip}, N_{L+1}, 0)$. 
If the output $Y \in \R^{N_{L+1} \times T_{L}}$ of a TCN $f:\R^{N_0 \times T_0} \times \Theta \to \R^{N_{L+1} \times T_{L}} $ is defined recursively by
\begin{align*}
\left( X^{(l)}, H^{(l)} \right) &= \gamma_l\left(X^{(l-1)}\right) \quad \textrm{for} \ l \in \{1,\dots, L\}\\
Y &= \gamma\left(\sum_{l=1}^L H^{(l)}\right)  ~,
\end{align*}
where $X^{(0)} \in \R^{N_0 \times T_0}$, then $f$ is called a \textit{temporal convolutional network with skip connections}.
\end{definition}

One of the downsides of TCNs is that the length of time series to be processed is restricted to the TCN's RFS. Hence, in order to model long-range dependencies we require a RFS $\rfs \gg 1$ leading to computational bottlenecks. Furthermore, it becomes questionable if such large networks can be trained to model something meaningful and if sufficient data is available. Although interesting extensions to TCNs to model long-range dependencies (cf. \cite{Dielemann2018}), we leave it as future work to develop these methods. 

\begin{figure}[H]
	\centering
	\begin{tikzpicture}[shorten >=1pt,->,draw=black!80, node distance=\layersep]
	\tikzstyle{every pin edge}=[<-,shorten <=1pt]
	\tikzstyle{neuron}=[circle,fill=black!25,minimum size=6pt,inner sep=0pt, draw=black!80]
	\tikzstyle{operation}=[circle,fill=black!25,minimum size=12pt,inner sep=0pt, draw=black!80]
	\tikzstyle{input neuron}=[neuron, fill=blue!50];
	\tikzstyle{output neuron}=[neuron, fill=red!50];
	\tikzstyle{hidden1 neuron}=[neuron, fill=white!50];
	\tikzstyle{hidden2 neuron}=[neuron, fill=white!50];
	\tikzstyle{annot} = [text width=4em, text centered]
	
	
	\foreach \name / \y in {1,...,16}
	\pgfmathsetmacro{\ysc}{\y*\scaler}
	\path node[input neuron] (H0-\name) at (-\ysc cm, 0*\layersep) {};
	
	\foreach \name / \y in {1,...,16} 
	\pgfmathsetmacro{\ysc}{\y*\scaler}
	\path node[hidden1 neuron] (H1-\name) at (-\ysc cm, \layersep) {};
	
	\foreach \name / \y in {1,...,16} 
	\pgfmathsetmacro{\ysc}{\y*\scaler}
	\path node[hidden2 neuron] (H2-\name) at (-\ysc cm, 2*\layersep) {};

	\foreach \name / \y in {1,...,16} 
	\pgfmathsetmacro{\ysc}{\y*\scaler}
	\path node[hidden2 neuron] (H3-\name) at (-\ysc cm, 3*\layersep) {};
	
	\foreach \name / \y in {1,...,16} 
	\pgfmathsetmacro{\ysc}{\y*\scaler}
	\path node[hidden2 neuron, anchor=south](O-\name) at (-\ysc cm, 4*\layersep) {};

	\path (H3-1.north) edge [thick] (O-1);
	\path (H3-9.north) edge [thick] (O-1);
	
	\path (H2-1.north) edge [thick] (H3-1);
	\path (H2-5.north) edge [thick] (H3-1);
	\path (H2-9.north) edge [thick] (H3-9);
	\path (H2-13.north) edge [thick] (H3-9);
	
	\path (H1-1.north) edge [thick] (H2-1);
	\path (H1-3.north) edge [thick] (H2-1);
	\path (H1-5.north) edge [thick] (H2-5);
	\path (H1-7.north) edge [thick] (H2-5);
	\path (H1-9.north) edge [thick] (H2-9);
	\path (H1-11.north) edge [thick] (H2-9);
	\path (H1-13.north) edge [thick] (H2-13);
	\path (H1-15.north) edge [thick] (H2-13);
	
	\path (H0-1.north) edge [thick] (H1-1);
	\path (H0-2.north) edge [thick] (H1-1);
	\path (H0-3.north) edge [thick] (H1-3);
	\path (H0-4.north) edge [thick] (H1-3);
	\path (H0-5.north) edge [thick] (H1-5);
	\path (H0-6.north) edge [thick] (H1-5);
	\path (H0-7.north) edge [thick] (H1-7);
	\path (H0-8.north) edge [thick] (H1-7);
	\path (H0-9.north) edge [thick] (H1-9);
	\path (H0-10.north) edge [thick] (H1-9);
	\path (H0-11.north) edge [thick] (H1-11);
	\path (H0-12.north) edge [thick] (H1-11);
	\path (H0-13.north) edge [thick] (H1-13);
	\path (H0-14.north) edge [thick] (H1-13);
	\path (H0-15.north) edge [thick] (H1-15);
	\path (H0-16.north) edge [thick] (H1-15);
	

	\path node[operation] (Add) at (1 cm, 2*\layersep) {};
	\node[annot,right of=Add, node distance=0cm, text width=0.3cm] (Add) {\footnotesize +};
	
	\path node[output neuron] (O) at (2 cm, 2*\layersep) {};
	
	\path (H0-1.east) edge [thick] (Add);
	\path (H1-1.east) edge [thick] (Add);
	\path (H2-1.east) edge [thick] (Add);
	\path (H3-1.east) edge [thick] (Add);
	\path (O-1.east) edge [thick] (Add);
	
	\path (Add) edge [thick] (O);
	
	\end{tikzpicture}
	\caption{Vanilla TCN with skip connections.}
	\label{fig:tcn_skip}
\end{figure}
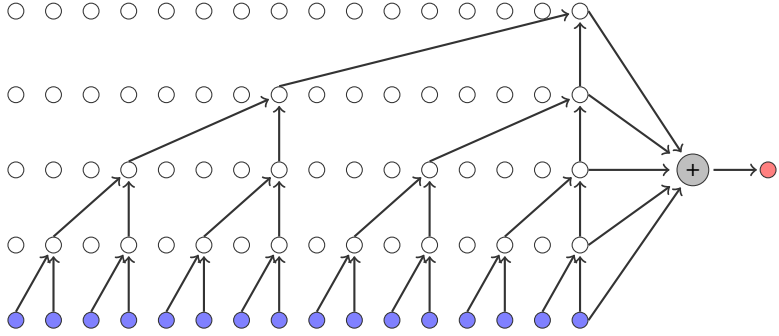

The NN topologies introduced in this section are the core components used to achieve the results presented in \hyperref[sec:numerical]{Section~\ref*{sec:numerical}}. In order to train these networks to generate time series, we now formulate GANs in the setting of random variables and stochastic processes.

\section{Generative adversarial networks}
\label{sec:gans}
Generative adversarial networks (GANs) \parencite{gans} are a relatively new class of algorithms to learn a generative model of a sample of a random variable, i.e. a dataset, or the distribution of a random variable itself. Originally, GANs were applied to generate images. In this section, we introduce GANs for random variables and then extend them to discrete-time stochastic processes with TCNs.

Throughout this section let $N_Z, N_X \in \N$ and let $(\Omega, \mathcal{F}, \Prob)$ be a probability space. Furthermore, assume that $X$ and $Z$ are $\data$ and $\latent$-valued random variables, respectively. The distribution of a random variable $X$ will be denoted by $\Prob_X$. 

\subsection{Formulation for random variables}
In the context of GANs, $(\latent, \B(\latent))$ and $(\data, \B(\data))$ are called the \textit{latent} and the \textit{data measure space}, respectively. The random variable $Z$ represents the \textit{noise prior} and $X$ the \textit{targeted (or data) random variable}. The goal of GANs is to train a network $g:\latent\times\Theta^{(g)} \to \data$ such that the induced random variable $g_\theta(Z) \coloneqq g_\theta \circ Z$ for some parameter $\theta \in \Theta^{(g)}$ and the targeted random variable $X$ have the same distribution,  i.e. $g_\theta(Z)\eqdist X$. To achieve this, \citeauthor{gans} proposed the adversarial modeling framework for NNs and introduced the \textit{generator} and the \textit{discriminator} as follows:

\begin{definition}[Generator]
	\label{def:generator}
	Let $g: \latent \times \Theta^{(g)} \to \data $ be a network with parameter space $\Theta^{(g)}$. The random variable $\tilde{X}$, defined by 
	\begin{align*}
	\tilde{X}: \Omega \times \Theta^{(g)} &\to \data\\
	(\omega, \theta)&\mapsto g_\theta(Z(\omega)) ~,
	\end{align*}
	is called the {\it generated random variable}. Furthermore, the network $g$ is called {\it generator} and $\tilde{X}_\theta$ the {\it generated random variable with parameter $\theta$}.\footnote{The subscript $\theta$ of $\tilde{X}_\theta$ represents the dependency with respect to the neural operator's parameters $\theta$.}
\end{definition}

\begin{definition}[Discriminator]
	\label{def:discriminator}
	Let $\tilde{d}: \data \times \Theta^{(d)} \to \R $ be a network with parameters $\eta \in \Theta^{(d)}$ and $\sigma:\R\to[0,1] : x \mapsto \frac{1}{1 + e^{-x}}$ be the sigmoid function. A function $d: \data \times \Theta^{(d)} \to [0,1]$ defined by $d:(x, \eta) \mapsto \sigma \circ \tilde{d}_{\eta}(x)$
	is called a {\it discriminator}.
\end{definition}

Throughout this section we assume the notation used in \autoref{def:generator} and \autoref{def:discriminator}. 

\begin{definition}[Sample]
	A collection $\{Y_i\}_{i=1}^M$ of $M$ independent copies of some random variable $Y$ is called \textit{$Y$-sample} of size $M$. The notation $\{y_i\}_{i=1}^M$ refers to a realisation $\{Y_i(\omega)\}_{i=1}^M$ for some $\omega\in\Omega$.
\end{definition}

In the adversarial modeling framework two agents, the {generator} and the {discriminator} (also referred to as the adversary), are contesting with each other in a game-theoretic zero-sum game. Roughly speaking, the generator aims at generating samples $\left\lbrace \tilde{x}_{\theta, i} \right\rbrace_{i=1}^M$ such that the discriminator can not distinguish whether the realizations were sampled from the target or the generator distribution. In other words, the discriminator $\discriminator_\eta: \data \rightarrow [0, 1] $ acts as a classifier that assigns to each sample $x\in \mathbb{R}^{N_X}$ a probability of being a realization of the target distribution.

The optimization of GANs is formulated in two steps. First, the discriminator's parameters $\eta\in\Theta^{(d)}$ are chosen to maximize the function $\mathcal L(\theta,\cdot)$, $\theta\in\Theta^{(g)}$, given by
\begin{align*}
\mathcal{L}(\theta, \eta)  &:= 
\Ex{\log(\discriminator_{\eta}(X))}
+ \Ex{\log(1-\discriminator_{\eta}(\generator_\theta(Z)))} \\
&= \Ex{\log(\discriminator_{\eta}(X))}
+ \Ex{\log(1-\discriminator_{\eta}(\tilde{X}_\theta))} ~.
\end{align*}
In this sense, the discriminator learns to distinguish real and generated data.
In the second step, the generator's parameters $\theta\in\Theta^{(g)}$ are trained to minimize the probability of generated samples being identified as such and not from the data distribution. In summary, we receive the min-max game
\begin{equation*}
\min_{\theta\in\Theta^{(g)}} \max_{\eta\in\Theta^{(d)}} \mathcal{L}(\theta, \eta) ~,
\end{equation*}
which we refer to as the \textit{GAN objective}.

\paragraph{Training}
The generator's and discriminator's parameters $(\theta, \eta)$ are trained by alternating the computation of their gradients $\nabla_\eta \mathcal{L}(\theta, \eta)$ and $\nabla_\theta \mathcal{L}(\theta, \eta)$ and updating their respective parameters. To get a close approximation of the optimal discriminator $\discriminator_{\eta^{*}}$ \parencite[Proposition~1]{gans} it is common to compute the discriminators gradient multiple times and ascent the parameters $\eta$. Algorithm \autoref{algo:gan_algo} describes the procedure in detail. 

For further information on Algorithm~\ref{algo:gan_algo}, we refer to Section~4 in \cite{gans}, where this algorithm was developed. In particular, regarding the convergence of Algorithm~\ref{algo:gan_algo} we refer to Section 4.2 in \cite{gans}.

\begin{algorithm}[ht]
	\textbf{INPUT:} generator $\generator$, discriminator $\discriminator$, sample size $M\in\N$, generator learning rate $\alpha_g$, discriminator learning rate $\alpha_d$, number of discriminator optimization steps $k$
	\\
	\textbf{OUTPUT:} parameters $(\theta, \eta)$
	\begin{algorithmic}
		\WHILE{not converged}
		\FOR{k steps}
		\STATE Let $\left\{\tilde{x}_{\theta, i}\right\}_{i=1}^M$ be a realisation of an $\tilde{X}_\theta$-sample of size $M$.
		\STATE Let $\left\{x_i\right\}_{i=1}^M$ be a realisation of an $X$-sample of size $M$.
		\STATE Compute and store the gradient 
		\begin{equation*}
		\Delta_\eta \gets \nabla_\eta \dfrac{1}{M} \sum_{i=1}^M \log(\discriminator(x_i)) + \log(1 - \discriminator(\tilde{x}_{\theta, i})) ~.
		\end{equation*}
		\STATE Ascent the discriminator's parameters: $\eta \gets \eta + \alpha_d\cdot\Delta_\eta ~.$
		
		\ENDFOR
		\STATE Let $\left\{\tilde{x}_{\theta, i}\right\}_{i=1}^M$ be a realisation of an $\tilde{X}_\theta$-sample of size $M$.
		\STATE Compute and store the gradient
		\begin{equation*}
		\Delta_\theta \gets \nabla_\theta \dfrac{1}{m} \sum_{i=1}^m \log(\discriminator(\tilde{x}_{\theta, i})) ~.
		\end{equation*}
		\STATE Descent the generator's parameters: $\theta \gets \theta - \alpha_g\cdot\Delta_\theta ~.$
		\ENDWHILE{}
	\end{algorithmic}
	\caption{GAN optimization.}
	\label{algo:gan_algo}
\end{algorithm} 

\subsection{Formulation for stochastic processes}
\label{sec:gans_sp}
We now consider the formulation of GANs in the context of stochastic process generation by using TCNs, as its properties are intriguing for this setting. The following notation is used for brevity. 

\begin{notation}
	 Consider a stochastic process $(X_t)_{t \in \mathbb{Z}}$ parametrized by some $\theta \in \Theta$. For $s, t \in \mathbb{Z}, \ s\leq t$, we write 
	\[
	X_{s:t, \theta} := \normalbrack{X_{s, \theta}, \dots, X_{t, \theta}}
	\]
	and for an $\omega$-realization
	\[
	X_{s:t, \theta}(\omega) := \normalbrack{X_{s, \theta}(\omega), \dots, X_{t, \theta}(\omega)} \in \R^{N_X \times (t-s+1)}.
	\]
\end{notation}

We can now introduce the concept of neural (stochastic) processes.
\begin{definition}[Neural process]
	\label{def:np}
	Let $(Z_t)_{t \in \mathbb{Z}}$ be an i.i.d. noise process with values in $\latent$ and $g: \R^{N_Z\times\rfsg} \times \Theta^{(g)} \to \data$ a TCN with RFS $\rfsg$ and parameters $\theta \in \Theta^{(g)}$. A stochastic process $\tilde{X}$, defined by
	\begin{align*}
	\tilde X: \Omega \times \mathbb{Z} \times \Theta^{(g)} &\to \data\\
	(\omega, t,  \theta) &\mapsto g_\theta(Z_{t-(\rfsg-1):t}(\omega))
	\end{align*}
	such that $\tilde X_{t,\theta}: \Omega \to \data$ is  a $\mathcal{F}-\B(\data)$-measurable mapping for all $t\in\mathbb{Z}$ and $\theta\in\Theta^{(g)}$, is called \textit{neural process} and will be denoted by $\tilde X_\theta := (\tilde X_{t, \theta})_{t \in \mathbb{Z}}$.
\end{definition}

In the context of GANs, the i.i.d. noise process $Z = (Z_t)_{t \in \mathbb{Z}}$ from \autoref{def:np} represents the noise prior. Throughout this paper we assume for simplicity that for all $t \in \mathbb{Z}$ the random variable $Z_t$ follows a multivariate standard normal distribution, i.e. $Z_t \sim \mathcal{N}(0, I)$. In particular, the neural process $\tilde X_\theta = (\tilde X_{t, \theta})_{t \in \mathbb{Z}}$ is obtained by inferring $Z$ through the TCN generator $g$.

In our GAN framework for stochastic processes, the discriminator is similarly represented by a TCN $d: \R^{N_X \times \rfsd} \times \Theta^{(d)} \to [0,1]$ with RFS $\rfsd$. With these modifications to the original GAN setting for random variables, the GAN objective for stochastic processes can be formulated as
\[ \min_{\theta\in\Theta^{(g)}} \max_{\eta\in\Theta^{(d)}} \mathcal{L}(\theta, \eta) ~, \]
where
\[ \mathcal{L}(\theta, \eta) := \Ex{\log(\discriminator_{\eta}\left(X_{1:\rfsd} \right))}
+ \Ex{\log(1-\discriminator_{\eta}(\tilde{X}_{1:\rfsd,\theta}))} \]
and $X_{1:\rfsd}$ and $\tilde{X}_{1:\rfsd, \theta}$ denote the real and the generated process, respectively. Hence, analogue to the GAN setting for random variables the discriminator is trained to distinguish real from generated sequences, whereas the generator aims at simulating sequences which the discriminator can not distinguish from the real ones.

In order to train the generator and discriminator we proceed in a similar fashion as in the case of random variables. We consider realisations $\lbrace\tilde{x}^{(i)}_{1:\rfsd, \theta}\rbrace_{i=1}^M $ and $\lbrace {x}^{(i)}_{1:\rfsd}\rbrace_{i=1}^M $ of samples of size $M$ of the generated neural process and of the target distribution, respectively. Each element of these samples is then inferred into the discriminator to generate a $[0,1]$-valued output (the classifications), which are then averaged sample-wise to give a Monte Carlo estimate of the discriminator loss function.

\section{The model}
\label{sec:svnns}
After we have introduced the main NN topologies in \hyperref[sec:neural_network_topologies]{Section~\ref*{sec:neural_network_topologies}} and defined GANs in the context of stochastic process generation via TCNs in \hyperref[sec:gans]{Section~\ref*{sec:gans}}, we now turn to the problem of generating financial time series. We start by defining the generator function of Quant GANs: the \textit{stochastic volatility neural network} (SVNN). The induced process of an SVNN will be called \emph{log return neural process} (log return NP). The remainder of this section is devoted to answering the following theoretical aspects of our model:
\begin{itemize}
	\item  \hyperref[sec:lp_prop]{Section~\ref*{sec:lp_prop}}: How heavy are the tails generated by a log return NP?
	\item \hyperref[sec:generating_fat_tails]{Section~\ref*{sec:generating_fat_tails}}: Can the log return NP be transformed to model heavy-tails and which assumptions are implied?
	\item \hyperref[sec:risk_neutral]{Section~\ref*{sec:risk_neutral}}: Can the risk-neutral distribution of log return NPs be derived?
	\item \hyperref[sec:constrained_log_NP]{Section~\ref*{sec:constrained_log_NP}}: Can log return NPs be seen as a natural extension of already existing time-series models? 
\end{itemize}

\subsection{Log return neural processes}
Log return NPs are inspired by the volatility-innovation decomposition of various stochastic volatility models used in practice \cite{garch, stylized_facts, heston, tankov2003financial}. The corresponding generator architecture, the stochastic volatility neural network, consists of a volatility and drift TCN and another network which models the innovations. \autoref{fig:svnn_tcn} illustrates the architecture in use.

\begin{definition}[Log return neural process]
	\label{def:svnn}
	Let $Z = (Z_t)_{t \in \mathbb{Z}}$ be $\Z$-valued i.i.d. Gaussian noise, $g^{(\textrm{TCN})}: \R^{N_Z\times \rfsg}\times\Theta^{(\textrm{TCN})} \to \mathbb{R}^{2N_X}$ a TCN with RFS $\rfsg$ and $g^{(\epsilon)}:\Z\times\Theta^{(\epsilon)} \to \X$ be a network. Furthermore, let $\alpha \in \Theta^{(\textrm{TCN})}$ and $\beta \in \Theta^{(\epsilon)}$ denote some parameters. 
	A stochastic process $R$, defined by
	\begin{align*}
		R: \Omega \times \mathbb{Z} \times \Theta^{(\textrm{TCN})} \times \Theta^{(\epsilon)} &\to \X \\
		(\omega, t, \alpha, \beta) &\mapsto  \sqbrack{\sigma_{t, \alpha} \odot \epsilon_{t, \beta} + \mu_{t, \alpha}}(\omega) ~,
	\end{align*}
	where  $\odot$ denotes the Hadamard product and
	\begin{align*}
	h_t & \coloneqq g_{\alpha}^{(\textrm{TCN})} \left({Z_{t-\rfsg:(t-1)}} \right) \\
	\sigma_{t, \alpha} &\coloneqq \left| h_{t, 1:N_X} \right| \\
	\mu_{t, \alpha} &\coloneqq h_{t, (N_X+1):2N_X} \\
	\epsilon_{t, \beta}&\coloneqq g_{\beta}^{(\epsilon)}(Z_{t}) ~,
	\label{eq:vola_np}
	\end{align*}
	is called \textit{log return neural process}.
	The generator architecture defining the log return NP is called \textit{stochastic volatility neural network (SVNN)}. 
	The NPs $\sigma_\alpha \coloneqq (\sigma_{t, \alpha})_{t \in \mathbb{Z}}$,  $\mu_\alpha \coloneqq (\mu_{t, \alpha})_{t \in \mathbb{Z}}$ and $\epsilon_\beta \coloneqq (\epsilon_{t, \beta})_{t \in \mathbb{Z}}$ are called \textit{volatility, drift} and \textit{innovation NP}, respectively.
\end{definition}

\begin{remark}
	For simplicity, we do not distinguish the different NPs' parameters below and just write $\theta$ instead of $(\alpha, \beta)$, and $\Theta$ instead of $\Theta^{(\textrm{TCN})} \times \Theta^{(\epsilon)}$.	
\end{remark}

\begin{figure}[h]
	\centering
	\def\layersep{0.8cm}
\def\scaler{1}
\def\shift{-0}
\begin{tikzpicture}[shorten >=1pt,->,draw=black!80, node distance=\layersep]
\tikzstyle{every pin edge}=[<-,shorten <=1pt]
\tikzstyle{neuron}=[circle,fill=black!25,minimum size=6pt,inner sep=0pt, draw=black!80]
\tikzstyle{input neuron}=[neuron, fill=blue!50];
\tikzstyle{output neuron}=[neuron, fill=red!50];
\tikzstyle{hidden1 neuron}=[neuron, fill=white!50];
\tikzstyle{hidden2 neuron}=[neuron, fill=white!50];
\tikzstyle{annot} = [text width=4em, text centered]

\foreach \name / \y in {1,...,8}
\pgfmathsetmacro{\ysc}{\y*\scaler}
\node[input neuron] (I-\name) at (-\ysc, 0) {};

\foreach \name / \y in {1,...,8}
\pgfmathsetmacro{\ysc}{\y*\scaler}
\path node[hidden1 neuron] (H1-\name) at (-\ysc cm, \layersep) {};

\foreach \name / \y in {1,...,8}
\pgfmathsetmacro{\ysc}{\y*\scaler}
\path node[hidden2 neuron] (H2-\name) at (-\ysc cm, 2*\layersep) {};

\foreach \name / \y in {1,...,8}
\pgfmathsetmacro{\ysc}{\y*\scaler}
\path node[hidden2 neuron] (H3-\name) at (-\ysc cm, 3*\layersep) {};

\foreach \name / \y in {1,...,8}
\pgfmathsetmacro{\ysc}{\y*\scaler}
\path node[output neuron](O-\name) at (-\ysc cm, 4*\layersep) {};

\node[annot,above of=O-1, node distance=0.5cm, text width=3cm] (o1) {\footnotesize $(\sigma_{t, \theta}, \mu_{t, \theta}) $};
\path (H3-1.north) edge [thick] (O-1);

\path (H2-1.north) edge [thick] (H3-1);
\path (H2-5.north) edge [thick] (H3-1);

\path (H1-1.north) edge [thick] (H2-1);
\path (H1-3.north) edge [thick] (H2-1);
\path (H1-5.north) edge [thick] (H2-5);
\path (H1-7.north) edge [thick] (H2-5);

\path (I-1.north) edge [thick] (H1-1);
\path (I-2.north) edge [thick] (H1-1);
\path (I-3.north) edge [thick] (H1-3);
\path (I-4.north) edge [thick] (H1-3);
\path (I-5.north) edge [thick] (H1-5);
\path (I-6.north) edge [thick] (H1-5);
\path (I-7.north) edge [thick] (H1-7);
\path (I-8.north) edge [thick] (H1-7);

\foreach \name / \y in {1,...,8}
\pgfmathsetmacro{\shiftt}{int(\y-\shift)}
\node[annot,below of=I-\name, node distance=0.5cm, text width=3cm] (o1) {\footnotesize $Z_{t-\shiftt}$};

\path node[input neuron] (I-0) at (1*\scaler cm, 0*\layersep) {};
\path node[hidden1 neuron] (H1-0) at (1*\scaler cm, \layersep) {};
\path node[hidden1 neuron] (H2-0) at (1*\scaler cm,2*\layersep) {};
\path node[hidden1 neuron] (H3-0) at (1*\scaler cm,3*\layersep) {};
\path node[output neuron](O-0) at (1*\scaler cm, 4*\layersep) {};
\node[annot,above of=O-0, node distance=0.5cm, text width=3cm] (o1) {\footnotesize $\epsilon_{t, \theta}$};
\node[annot,below of=I-0, node distance=0.5cm, text width=3cm] (o1) {\footnotesize $Z_{t}$};

\path node[anchor=south] (label) at (-0.5*\scaler cm, 3*\layersep) {};

\path node[anchor=south] (label2) at (.8*\scaler cm, 3*\layersep) {};

\path (I-0.north) edge [thick] (H1-0);
\path (H1-0.north) edge [thick] (H2-0);
\path (H2-0.north) edge [thick] (H3-0);
\path (H3-0.north) edge [thick] (O-0);

\end{tikzpicture}
	\caption{Structure of the SVNN architecture. The volatility and drift component are generated by inferring the latent process $Z_{t-8:t-1}$ through the TCN, whereas the innovation is generated by inferring $Z_t$.}
	\label{fig:svnn_tcn}
\end{figure}
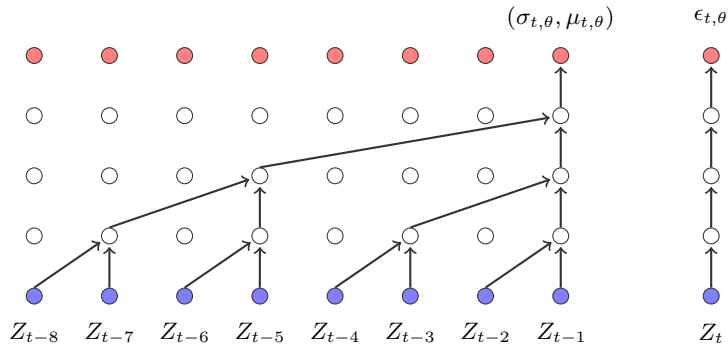

\begin{remark}[Independence of the SVNN-NPs]
	\label{rem:independence}
	Denote by $(\mathcal{F}_t^Z)_{t \in \mathbb{Z}}$ the natural filtration of the latent process $Z$ and let $t \in \mathbb{Z}$. By construction, $\sigma_{t, \theta}$ and $\mu_{t, \theta}$ are $\mathcal{F}_{t-1}^Z$-measurable. $\epsilon_{t, \theta}$ and $R_{t, \theta}$ are $\mathcal{F}_t^Z$-measurable. Observe further that the random variables $(\sigma_{t, \theta}, \mu_{t,\theta})$ and $\epsilon_{t, \theta}$ are independent, since $Z$ is an i.i.d. Gaussian noise process. As it turns out, the proposed construction is convenient when deriving the transition to the risk-neutral distribution in \hyperref[sec:risk_neutral]{Section~\ref*{sec:risk_neutral}}.
\end{remark}

\subsection{\boldmath$L^p$-space characterization of \boldmath$R_\theta$}
\label{sec:lp_prop}
We inspect next whether log return NPs exhibit heavy-tails. We first prove a result concerning generative networks in general and then conclude a corollary for the log return NP. 
\begin{theorem}[$L^p$-characterization of neural networks]
	\label{prop:lp_space_property_of_nn}
	Let $p \in \mathbb{N}$, $Z \in L^p(\Z)$ and ${g:\Z \times \Theta \to \X}$ a network with parameters $\theta \in \Theta$. Then, $g_\theta(Z) \in L^p(\X)$. 
\end{theorem}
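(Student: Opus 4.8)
The plan is to exploit the single structural fact we know about $g$: by Remark~\ref{rem:nn_lip_prop} a \emph{network} is by definition Lipschitz continuous, so for fixed $\theta \in \Theta$ the map $g_\theta := g(\cdot, \theta): \Z \to \X$ is Lipschitz. A Lipschitz map has at most linear growth, and linear growth is exactly what is needed to transport the $p$-th moment of $Z$ into a $p$-th moment of $g_\theta(Z)$. Before estimating, I would record measurability: since $g_\theta$ is Lipschitz it is continuous, hence Borel measurable, so $g_\theta(Z) = g_\theta \circ Z$ is a genuine $\X$-valued random variable and $\|g_\theta(Z)\|^p$ is a well-defined nonnegative random variable whose integrability is the only thing at issue.

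The key step is the linear growth bound. Let $C_\theta := \mathrm{Lip}(g_\theta) < \infty$ denote the Lipschitz constant of $g_\theta$. Applying the Lipschitz estimate at the points $z$ and $0$ and then the triangle inequality yields
\begin{equation*}
\|g_\theta(z)\| \;\le\; \|g_\theta(0)\| + C_\theta\,\|z\| \qquad \text{for all } z \in \Z,
\end{equation*}
where $\|g_\theta(0)\|$ is a finite constant, namely the norm of the image of the fixed point $0$.

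Finally I would take $p$-th powers and expectations. Combining the growth bound with the elementary inequality $(u + v)^p \le 2^{p-1}(u^p + v^p)$ for $u, v \ge 0$ (equivalently, Minkowski's inequality in $L^p$) gives
\begin{equation*}
\mathbb{E}\big[\|g_\theta(Z)\|^p\big] \;\le\; 2^{p-1}\Big( \|g_\theta(0)\|^p + C_\theta^p\, \mathbb{E}\big[\|Z\|^p\big]\Big) \;<\; \infty,
\end{equation*}
the right-hand side being finite precisely because $\|g_\theta(0)\|$ is a constant and $Z \in L^p(\Z)$ by hypothesis. This shows $g_\theta(Z) \in L^p(\X)$.

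The proof is essentially routine and there is no deep obstacle. The only points requiring a moment's care are (i) that the Lipschitz constant $C_\theta$ is finite for each fixed $\theta$, which is guaranteed by the definition of a network — note the bound need not be uniform in $\theta$, and the statement asks for nothing more — and (ii) anchoring the growth estimate at a concrete point (here $0$) so that the additive constant $\|g_\theta(0)\|$ is genuinely finite. Everything else is the standard transfer of moments through a sublinear map.
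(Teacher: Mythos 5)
Your proof is correct and takes essentially the same route as the paper: both exploit the Lipschitz property of the network (Remark~\ref{rem:nn_lip_prop}) to obtain the linear growth bound $\|g_\theta(z)\| \le L\|z\| + \|g_\theta(0)\|$ anchored at the origin, and then transfer the $p$-th moment of $Z$ through it. The only cosmetic difference is the final estimate --- the paper expands $\left(L\|Z\| + \|g_\theta(0)\|\right)^p$ via the binomial theorem (using that all lower moments of $Z$ are finite on a probability space), whereas you use the convexity inequality $(u+v)^p \le 2^{p-1}(u^p+v^p)$, which needs only the $p$-th moment; both close the argument equally well.
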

\begin{proof}
	Observe that for any Lipschitz continuous function $f:\R^n \to \R^m$ there exists a suitable constant $L>0$ such that
	\begin{equation}
	\label{eq:lip_prop}
	\norm{f(x) - f(0)} \leq L \norm{x} \Rightarrow \norm{f(x)} \leq L \norm{x} + \norm{f(0)}
	\end{equation}
	as $\norm{x}-\norm{y}\leq \norm{x-y}$ for $x,y\in\R^n$. Now, using the Lipschitz property of neural networks (cf. \autoref{rem:nn_lip_prop}), we can apply \autoref{eq:lip_prop} and as $Z $ is an element of the space $L^p(\Z)$ we obtain
	\begin{align*}
	\E \sqbrack{\norm{g_\theta(Z)}^p} &\leq \E \sqbrack{\normalbrack{L \norm{Z} + \norm{g_\theta(\mathbf 0)}}^p}\\
	&= \sum_{k=0}^{p} {p \choose k} \  L^k \E\sqbrack{ \norm{Z}^{k}}  \ \norm{g_\theta(\mathbf 0)}^{p-k}\\
	&<\infty ~,
	\end{align*}
	where $L$ is the networks Lipschitz constant and $\mathbf{0} \in \Z$ the zero vector. This proves the statement. 
\end{proof}
Since we assume that our latent process $Z$ is Gaussian i.i.d. noise and thus square-integrable, we conclude from \autoref{prop:lp_space_property_of_nn} that the mean and the variance of the volatility, drift and innovation NP are finite. Additionally, these properties carry over to the log return NP as the following corollary proves.

\begin{corollary}
	\label{cor:svnp_bounded}
	Let $R_\theta$ be a log return NP parametrized by some $\theta \in \Theta$. Then, for all $t \in \mathbb{Z}$ and $p \in \mathbb{N}$ the random variable $R_{t, \theta}$ is an element of the space $L^p(\X)$. 
\end{corollary}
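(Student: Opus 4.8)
The plan is to obtain the claim as an essentially immediate consequence of Theorem~\ref{prop:lp_space_property_of_nn}, applied separately to the three networks constituting the SVNN, with the only genuinely non-trivial point being the control of the multiplicative coupling $\sigma_{t,\alpha}\odot\epsilon_{t,\beta}$. Fix $t\in\mathbb{Z}$ and $p\in\mathbb{N}$. Since each $Z_t\sim\mathcal N(0,I)$ is multivariate standard normal, it possesses finite absolute moments of every order; in particular $Z_t\in L^p(\Z)$, and consequently the stacked input window $Z_{t-\rfsg:(t-1)}$ lies in $L^p$ as well (its norm is dominated by the sum of the norms of finitely many $L^p$ vectors). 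I would then invoke Theorem~\ref{prop:lp_space_property_of_nn} twice: once for the TCN $g^{(\textrm{TCN})}_\alpha$ to conclude $h_t=g^{(\textrm{TCN})}_\alpha(Z_{t-\rfsg:(t-1)})\in L^p(\mathbb R^{2N_X})$, and once for the innovation network $g^{(\epsilon)}_\beta$ to conclude $\epsilon_{t,\beta}=g^{(\epsilon)}_\beta(Z_t)\in L^p(\X)$.

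Next I would transfer integrability from $h_t$ to its two halves. Coordinate projection and the componentwise absolute value are both $1$-Lipschitz and vanish at the origin, so $\norm{\sigma_{t,\alpha}}\le\norm{h_t}$ and $\norm{\mu_{t,\alpha}}\le\norm{h_t}$ pointwise; hence $\sigma_{t,\alpha}=\lvert h_{t,1:N_X}\rvert$ and $\mu_{t,\alpha}=h_{t,(N_X+1):2N_X}$ both belong to $L^p(\X)$. At this stage all three ingredients $\sigma_{t,\alpha}$, $\mu_{t,\alpha}$, $\epsilon_{t,\beta}$ are in $L^p$.

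Finally I would estimate $R_{t,\theta}=\sigma_{t,\alpha}\odot\epsilon_{t,\beta}+\mu_{t,\alpha}$ via Minkowski's inequality, so that it suffices to bound the two summands separately. The drift term is already handled, so the crux — and the one place where care is needed — is the Hadamard product: a product of two $L^p$ random variables need not lie in $L^p$. Here the product structure is tamed by the \emph{independence} of $(\sigma_{t,\theta},\mu_{t,\theta})$ and $\epsilon_{t,\theta}$ recorded in Remark~\ref{rem:independence} (a consequence of $Z$ being i.i.d.\ together with $\sigma_{t,\alpha}$ being $\mathcal F^Z_{t-1}$-measurable while $\epsilon_{t,\beta}$ depends only on $Z_t$). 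Working coordinatewise, independence factorizes the moment,
\begin{equation*}
\E\sqbrack{\lvert \sigma_{t,\alpha,i}\,\epsilon_{t,\beta,i}\rvert^{p}}
=\E\sqbrack{\lvert \sigma_{t,\alpha,i}\rvert^{p}}\,\E\sqbrack{\lvert \epsilon_{t,\beta,i}\rvert^{p}}<\infty,
\end{equation*}
both factors being finite by the previous paragraph, whence $\sigma_{t,\alpha}\odot\epsilon_{t,\beta}\in L^p(\X)$ and therefore $R_{t,\theta}\in L^p(\X)$. (Alternatively, one may dispense with independence: since the Gaussian prior yields $\sigma_{t,\alpha,i},\epsilon_{t,\beta,i}\in L^{2p}$, the Cauchy--Schwarz inequality bounds $\E\sqbrack{\lvert\sigma_{t,\alpha,i}\epsilon_{t,\beta,i}\rvert^{p}}$ by $\norm{\sigma_{t,\alpha,i}}_{2p}^{p}\,\norm{\epsilon_{t,\beta,i}}_{2p}^{p}<\infty$.) The only obstacle worth flagging is thus the product term, and it dissolves once either independence or the all-moments property of the Gaussian prior is brought to bear.
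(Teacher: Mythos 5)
Your proof is correct and takes essentially the same approach as the paper's: both obtain $\sigma_{t,\theta},\mu_{t,\theta},\epsilon_{t,\theta}\in L^p(\X)$ from \autoref{prop:lp_space_property_of_nn} and then control the Hadamard product $\sigma_{t,\theta}\odot\epsilon_{t,\theta}$ by factorizing its moments coordinatewise via the independence recorded in \autoref{rem:independence}. The only (cosmetic) difference is that you handle the additive decomposition with Minkowski's inequality, whereas the paper expands $\normalbrack{\norm{\sigma_{t,\theta}\odot\epsilon_{t,\theta}}+\norm{\mu_{t,\theta}}}^p$ binomially and decouples the cross terms with the Cauchy--Schwarz inequality.
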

\begin{proof}
	The latent process $Z$ is Gaussian i.i.d. noise. Hence, \autoref{prop:lp_space_property_of_nn} yields  ${\sigma_{t, \theta}, \epsilon_{t, \theta}, \mu_{t, \theta} \in L^p(\X)}$. Since
	\[ \norm{R_{t,\theta}}^p =
	\norm{\sigma_{t,\theta} \odot \epsilon_{t, \theta} + \mu_{t,\theta}}^p \leq (\norm{\sigma_{t,\theta} \odot \epsilon_{t, \theta}} + \norm{\mu_{t,\theta}})^p ~,
	\]
	we obtain using the binomial identity
	\begin{equation*}
	\begin{split}
	\norm{R_{t,\theta}}_p^p &=  \mathbb{E} [\norm{R_{t,\theta}}^p]  \\
	&\leq \sum_{k=0}^{p} {p \choose k} \mathbb{E} [\norm{\sigma_{t, \theta} \odot \epsilon_{t, \theta}}^k \norm{\mu_{t, \theta}}^{p-k}] \\
	&\leq \sum_{k=0}^{p} {p \choose k} \left( \mathbb{E} \left[\norm{\sigma_{t, \theta} \odot \epsilon_{t, \theta}}^{2k} \right] \mathbb{E} \left[\norm{\mu_{t, \theta}}^{2(p-k)} \right]  \right)^{\frac{1}{2}},
	\end{split}
	\end{equation*}
	where the last inequality derives from the Cauchy-Schwarz inequality. Using the independence and the $L^p$-property of the volatility and innovation NP (cf. \autoref{rem:independence}), we obtain for arbitrary $q \in \mathbb{N}$ that
	\begin{equation*}
	\begin{split}
	\mathbb{E} \left[\norm{\sigma_{t, \theta} \odot \epsilon_{t, \theta}}^{q} \right]
	= \mathbb{E} \left[\sum_{i=1}^{N_X} | \sigma_{t, \theta,i} \ \epsilon_{t, \theta,i}|^q \right] 
	\end{split} 
	= \sum_{i=1}^{N_X} \mathbb{E} \left[| \sigma_{t, \theta,i}|^q \right] \mathbb{E} \left[| \epsilon_{t, \theta,i}|^q \right] < \infty.
	\end{equation*} 

\end{proof}

Since financial time series are generally considered to exhibit heavy-tails (see \hyperref[sec:stylized_facts]{Section~\ref*{sec:stylized_facts}}), the existence of all moments is an undesirable property. In the next section, we present a heuristic that works well to preprocess the historical log return process and enables the log return NP to empirically generate heavy-tails. 

Last, we motivate the choice of the latent process. We give a proof for the MLP which can be naturally extended to the setting of TCNs and especially SVNNs. The statement demonstrates that choosing the latent process as i.i.d. Gaussian noise benefits stability during optimization, i.e. back-propagation and parameter updates. 
\begin{corollary}
	\label{cor:backprop}
	Under the assumptions of \autoref{prop:lp_space_property_of_nn} the random variable of back-propagated gradients $\nabla_\theta g_\theta(Z)$ is an element of the space $L^p(\Theta)$.
\end{corollary}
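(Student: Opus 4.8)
The plan is to compute the gradient explicitly via the backpropagation recursion and then reduce the claim to the moment estimate already established in \autoref{prop:lp_space_property_of_nn}. Writing the MLP (cf. \autoref{def:mlp}) in terms of its pre-activations $s^{(l)} := W^{(l)} z^{(l-1)} + b^{(l)}$ and activations $z^{(l)} := \phi(s^{(l)})$, with $z^{(0)} := Z$ and output $g_\theta(Z) = s^{(L+1)}$, the chain rule gives for the entries of $W^{(l)}$ and $b^{(l)}$
\[
\frac{\partial g_\theta(Z)}{\partial W^{(l)}_{ij}} = J^{(l)}_{\cdot\, i}\, z^{(l-1)}_j, \qquad \frac{\partial g_\theta(Z)}{\partial b^{(l)}_i} = J^{(l)}_{\cdot\, i},
\]
where $J^{(l)} := W^{(L+1)}\operatorname{diag}(\phi'(s^{(L)}))\,W^{(L)}\cdots W^{(l+1)}\operatorname{diag}(\phi'(s^{(l)}))$ is the Jacobian of the output with respect to $s^{(l)}$. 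Thus the full gradient $\nabla_\theta g_\theta(Z)$ is, entrywise, a product of the matrices $J^{(l)}$ with the activations $z^{(l-1)}$.

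The argument then rests on two bounds. First, since $\phi$ is Lipschitz (\autoref{def:activation_function}), Rademacher's theorem gives $|\phi'|\le L_\phi$ almost everywhere for its Lipschitz constant $L_\phi$, so each factor $\operatorname{diag}(\phi'(s^{(l)}))$ has operator norm at most $L_\phi$; as the weight matrices are fixed once $\theta$ is fixed, every $J^{(l)}$ is bounded by a constant $C(\theta)$ depending only on $\theta$, $L_\phi$ and the architecture. Second, each activation $z^{(l-1)} = f_{l-1}\circ\cdots\circ f_1(Z)$ is a composition of Lipschitz maps and hence Lipschitz in $Z$, so \eqref{eq:lip_prop} yields $\norm{z^{(l-1)}} \le L'\norm{Z} + c$ for suitable constants. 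Combining these bounds over all layers and all parameter blocks produces an affine bound
\[
\norm{\nabla_\theta g_\theta(Z)} \le \tilde{L}\,\norm{Z} + \tilde{b}
\]
for constants $\tilde{L}, \tilde{b}$ depending on $\theta$, $L_\phi$ and the architecture.

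With this affine bound the conclusion follows exactly as in \autoref{prop:lp_space_property_of_nn}: raising to the $p$-th power, expanding by the binomial theorem and using $Z \in L^p(\Z)$ gives
\[
\E\sqbrack{\norm{\nabla_\theta g_\theta(Z)}^p} \le \E\sqbrack{(\tilde{L}\norm{Z} + \tilde{b})^p} = \sum_{k=0}^p \binom{p}{k}\tilde{L}^k\,\E\sqbrack{\norm{Z}^k}\,\tilde{b}^{\,p-k} < \infty,
\]
so that $\nabla_\theta g_\theta(Z) \in L^p(\Theta)$. The extension to TCNs and SVNNs is then verbatim, since a dilated causal convolution is affine in its input (so its weight tensor plays the role of the $W^{(l)}$ above) and the block modules are again Lipschitz.

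The main obstacle is technical rather than structural: the admissible activations (e.g.\ ReLU, or the absolute value entering the SVNN in \autoref{def:svnn}) are only Lipschitz, not everywhere differentiable, so the backpropagation formula and the bound $|\phi'|\le L_\phi$ are valid only almost everywhere. To make $\nabla_\theta g_\theta(Z)$ well-defined as an almost-surely-defined random variable I would argue that, because the Gaussian prior $Z$ is absolutely continuous and the pre-activation maps are generically non-degenerate, the event that some $s^{(l)}$ lands in the Lebesgue-null non-differentiability set of $\phi$ has probability zero; on the complementary full-measure event the chain rule applies verbatim. Establishing this measure-zero statement cleanly — rather than the moment bound, which is routine — is where the care is needed.
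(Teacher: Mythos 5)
Your proposal is correct and follows essentially the same route as the paper's proof: both write the gradient via the backpropagation product formula, bound the diagonal factors $\operatorname{diag}(\phi'(\cdot))$ by the activation's Lipschitz constant and the fixed weight matrices by constants depending only on $\theta$, and reduce the moment bound to \autoref{prop:lp_space_property_of_nn} (the paper invokes it directly for the intermediate activations $g_{1:k-1,\theta}(Z)$, while you inline its affine-in-$\|Z\|$ bound and redo the binomial expansion, which is the same argument). Your closing concern about almost-everywhere differentiability is a point the paper silently glosses over, and it dissolves once the back-propagated gradient is defined via a fixed measurable version of $\phi'$ (as implementations do), since any such version is bounded everywhere by the Lipschitz constant.
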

\begin{proof}
	Without loss of generality assume that $N_X=1$. Using the notation from \autoref{def:mlp} and that $g_\theta$ has $L$ hidden layers, the gradient of $g_\theta(z)$ with respect to the hidden weight matrix $W^{(k)}, \ k \leq L+1$, is defined for $z \in \Z$ by
	\begin{equation*}
	\label{eq:backprop}
	\nabla_{W^{(k)}} g_\theta(z) =  \left(\prod_{l=k}^{L} D^{(l)}(z) W^{(l+1)^T}\right) \otimes  g_{1:k-1, \theta}(z)
	\end{equation*}
	where $\otimes$ denotes the outer product, $g_{1:k-1, \theta} \coloneqq g_{k-1, \theta} \circ \dots \circ g_{1, \theta}$ and $D^{(l)}(z) = \operatorname{diag}(\phi' (W^{(l)} g_{1:l-1,\theta}(z) ))$ (compare \cite[Chapter 6.5]{deeplearningbook}). Since the MLP is defined as a composition of Lipschitz functions \autoref{prop:lp_space_property_of_nn} yields $g_{1:k, \theta}(Z) \in L^p(\mathbb{R}^{N_k})$ for all $k \leq L$. Similarly, the boundedness of $\phi'$ implies that for an induced matrix norm the random variable $\norm{D^{(l)}(Z)}$ is $\mathbb{P}$-almost surely bounded by some constant $A>0$ for all $l \leq L$. By applying both properties we obtain 
	\[
	\mathbb{E} \left\lbrack \|\nabla_{W^{(k)}} g_\theta(Z)\|^p \right\rbrack \leq B_k \ \mathbb{E}\left\lbrack \|  g_{1:k-1, \theta}(Z) \|^p \right\rbrack < \infty
	\]
	with 
	\[ B_k \coloneqq A^{p(L-k+1)} \left( \prod_{l=k}^{^L} \| W^{(l+1)^T} \| \right)^p ~. \]
	With a similar argument one can show that the random gradients with respect to the biases $b^{(k)}, k = 1, \dots, L+1$ are also an element of $L^p$, thus concluding the proof. 
\end{proof}
Note that \autoref{cor:backprop} does not necessarily hold for a heavy-tailed latent process. Such that using a heavier-tailed latent process may enable the generation of heavy-tails, however may come at the cost of optimization instabilities. We leave it as future work to inspect how preprocessing techniques can be used to stabilize training.

\subsection{Generating heavier-tails and modeling assumptions}
\label{sec:generating_fat_tails}
\autoref{prop:lp_space_property_of_nn} implies that all moments of the log return NP exist. Furthermore,  \cite{copula_and_marginal_flows} show that the tails fall at least at a square-exponential rate. Real financial time series are generally considered to
exhibit heavy-tails and an unbounded p-th moment for some $p \in (2,5]$ \parencite{stylized_facts}. The Lambert W probability transform, as mentioned in \cite{lambert_gaussanize}, can therefore be used to generate heavier tails. The Lambert W probability transform of an $\R$-valued random variable is defined as follows.
\begin{definition}[Lambert W$\times F_X$]
	Let $\delta \in \R $ and $X$ be an $\R$-valued random variable with mean $\mu$, standard deviation $\sigma$ and cumulative distribution function $F_X$. The location-scale Lambert W$\times F_X$ transformed random variable $Y$ is defined by
	\begin{equation} 
	\label{eq:lambert_trans}
	Y = U \exp \normalbrack{\dfrac{\delta}{2} U^2 } \sigma + \mu ~,
	\end{equation}
	where $U := \dfrac{X - \mu}{\sigma}$ is the normalizing transform.
\end{definition}

For $\delta \in [0, \infty)$ the transformation used in \autoref{eq:lambert_trans} is of special interest as it is guaranteed to be bijective and differentiable. Hence, the transformations specific parameters $\gamma = (\mu, \sigma, \delta)$ can be estimated via maximum likelihood. Moreover, for $\delta > 0$ the Lambert $W \times F_X$ transformed random variable has heavier tails than $X$.

We therefore apply the \textit{inverse Lambert W} probability transform to the asset's log returns and use the principle of quasi maximum likelihood to estimate the model parameters \parencite[Section~4.1]{lambert_gaussanize}. The log return NP is then optimized to approximate the inverse Lambert W transformed (lighter-tailed) log return process, from here on denoted by $R^W\coloneqq (R^W_t)_{t\in \mathbb{N}}$. 

Using the Lambert W transformed log return process $R^W$ we can formulate our model assumptions, when using SVNNs as the underlying generator. 
\begin{assumption}
	The inverse Lambert W transformed spot log returns $R^W$ can be represented by a log return neural process $ R_\theta$ for some $\theta \in \Theta$. 
	\label{modeling_assumption}
\end{assumption}
\autoref{modeling_assumption} has two important implications. First, by construction the log return NP is stationary such that the historical log return process is assumed to be stationary. Second, log return NPs can capture dynamics up to the RFS of the TCN in use. Therefore, \autoref{modeling_assumption} implies for an RFS $T^{(g)}$ that for any $t \in \mathbb{Z}$ the random variables $R_{t}, R_{t+T^{(g)}+1}$ are independent.

\subsection{Risk-neutral representation of \boldmath$R_\theta$}
\label{sec:risk_neutral}
At this point we cannot value options under a log return NP, as we do not know a transition to its risk-neutral distribution. We address this aspect in this section. To this end, consider a one-dimensional log return NP 
\[ R_{t, \theta} = \sigma_{t,\theta} \, \epsilon_{t,\theta} + \mu_{t, \theta} ~.\]
The spot prices are then defined recursively by
\[ S_{t, \theta} = S_{t-1, \theta} \, \exp(R_{t, \theta}) \quad \textrm{for all} \ t \in \mathbb{N} ~, \]
where $S_{0, \theta} = S_0$ denotes the current price of the underlying. Moreover, assume a constant interest rate $r$ and define the discounted stock price process $(\tilde{S}_{t, \theta})_{t \in \mathbb{N}}$ by
\[ \tilde{S}_{t, \theta} := \frac{S_{t, \theta}}{\exp(rt)} ~.\]
In particular, the discounted price process fulfils the recursion
\[ \tilde{S}_{t, \theta} = \tilde{S}_{t-1, \theta} \, \exp(R_{t, \theta} - r) ~.\]
In its risk-neutral representation, the discounted stock price process has to be a martingale. Therefore, we can use that $\tilde{S}_{t-1, \theta}$ is $\mathcal{F}^Z_{t-1}$-measurable and get
\begin{equation*}
\begin{split}
\mathbb{E}[\tilde{S}_{t, \theta} | \mathcal{F}^Z_{t-1}] &= \mathbb{E}[ \tilde{S}_{t-1, \theta} \, \exp(R_{t, \theta} - r) |  \mathcal{F}^Z_{t-1}] \\
&= \tilde{S}_{t-1, \theta} \,  \exp(-r) \, \mathbb{E}[\exp(\sigma_{t, \theta} \, \epsilon_{t, \theta} + \mu_{t, \theta}) | \mathcal{F}^Z_{t-1}] ~.
\end{split}
\end{equation*}
Hence to obtain a martingale, we have to correct for the corresponding term. Therefore, let us consider the conditional expectation in more detail. As the volatility and drift NPs are $\mathcal{F}_{t-1}^Z$-measurable and $\epsilon_{t, \theta}$ is independent of $\mathcal{F}_{t-1}^Z$, we can write
\[ \mathbb{E}[\exp(\sigma_{t, \theta} \, \epsilon_{t, \theta} + \mu_{t, \theta}) | \mathcal{F}^Z_{t-1}] = \mathbb{E}[\exp(\sigma \, \epsilon_{t, \theta} + \mu)]_{\substack{\sigma = \sigma_{t, \theta} \\ \mu = \mu_{t, \theta}}} =: h(\sigma_{t, \theta}, \mu_{t, \theta}) ~.\]
Depending on the innovation NP $\epsilon_{t, \theta}$, the function $h$ might be given explicitly or has to be estimated using a Monte Carlo estimator.

As a result, we can define the \textit{risk-neutral log return Neural Process} $R_{t,\theta}^M$ as 
\[ R_{t, \theta}^M := R_{t, \theta} - \log(h(\sigma_{t,\theta}, \mu_{t,\theta})) + r ~,\]
which is a corrected log return NP. The corresponding \textit{discounted risk-neutral spot price process} is then given by the recursion
\[ \tilde{S}_{t, \theta}^M = \tilde{S}_{t-1, \theta}^M \, \exp(R_{t, \theta}^M - r) = \tilde{S}_{t-1, \theta}^M \, \exp(R_{t, \theta} - \log(h(\sigma_{t, \theta}, \mu_{t, \theta})))\]
and defines a martingale.

In particular, this recursion can be solved to obtain an explicit formula for the (discounted) risk-neutral spot price process
\[ \tilde{S}_{t, \theta}^M = S_0 \, \exp \left(\sum_{s=1}^t [R_{s, \theta} - \log(h(\sigma_{s,\theta}, \mu_{s,\theta}))] \right) \]
\[ S_{t, \theta}^M = S_0 \, \exp \left(\sum_{s=1}^t [R_{s, \theta} - \log(h(\sigma_{s,\theta}, \mu_{s,\theta}))] + rt \right) ~.\]

It remains the problem of inferring the parameters of the underlying model. In the case of financial time series, the discriminator is used to distinguish between generated and real (observable) financial time series. What is different here is that risk-neutral asset paths are not observable. Therefore, we can not train the generator-discriminator pair in the same way as for financial time series. An approach would be a classical least square calibration by option prices, where we would use Monte Carlo of the generated risk-neutral paths as an estimate of the models option price. We leave this as future work.

\subsection{Constrained log return neural processes}
\label{sec:constrained_log_NP}
An interesting application is to constrain either the volatility or the innovations NP to satisfy certain conditions. We exemplify this for the one-dimensional case ($N_X = 1$), where the innovations NP is constrained to represent a standard normal distributed random variable
\begin{equation*}
\epsilon_{t, \theta} \sim \mathcal{N}(0, 1) \quad \textrm{for all} \ t \in \mathbb{Z}~.
\end{equation*}

In this case, the risk-neutral dynamics can be simplified, since the conditional expectation $h(\sigma_{t,\theta}, \mu_{t,\theta})$ can be calculated explicitly:
\[ h(\sigma_{t,\theta}, \mu_{t,\theta}) = \mathbb{E}[\exp(\underbrace{\sigma \, \epsilon_{t, \theta} + \mu}_\text{$\sim \mathcal{N}(\mu, \sigma^2)$})]_{\substack{\sigma = \sigma_{t, \theta} \\ \mu = \mu_{t, \theta}}} = \exp \left(\mu_{t, \theta} + \frac{\sigma_{t,\theta}^2}{2} \right) ~.\]
Hence, the risk-neutral log return NP is given by
\[ R_{t, \theta}^M = \sigma_{t, \theta} \, \epsilon_{t, \theta} - \frac{\sigma_{t, \theta}^2}{2} + r \]
and the discounted risk-neutral price process fulfils the recursion
\[ \tilde{S}_{t, \theta}^M = \tilde{S}_{t-1, \theta}^M \, \exp \left(\sigma_{t, \theta} \, \epsilon_{t, \theta} - \frac{\sigma_{t, \theta}^2}{2} \right) ~.\]
In particular, solving this recursion gives the explicit representations
\[ \tilde{S}_{t, \theta}^M = S_0 \, \exp \left(\sum_{s=1}^t \left( \sigma_{s, \theta} \, \epsilon_{s, \theta} - \frac{\sigma_{s, \theta}^2}{2} \right) \right) \]
\[ S_{t, \theta}^M = S_0 \, \exp \left(\sum_{s=1}^t \left[ \sigma_{s, \theta} \, \epsilon_{s, \theta} - \frac{\sigma_{s, \theta}^2}{2} \right] + rt \right) ~. \]
\begin{remark}[Comparison to the Black-Scholes model]
	In the one-dimensional Black-Scholes model, the risk-neutral distribution of the price process is given by
	\[ S_t^{Q, BS} = S_0 \, \exp \left( \left( r - \frac{1}{2} \sigma^2 \right) t + \sigma W_t^Q \right) = S_0 \, \exp \left(\sigma W_t^Q - \frac{1}{2} \sigma^2 t + rt \right) ~. \]
	The similarities to the price process given by the risk-neutral log return NP are clearly visible. Most importantly, in contrast to Black-Scholes, the model presented here does not assume a constant volatility and instead models it using the volatility generator. 
\end{remark}

In the same way, the volatility NP can be constrained to represent a known stochastic process such as the CIR process or the variance process of the $\operatorname{GARCH}(p,q)$ model. Both settings allow us to generate insights of the latent dynamics of the stochastic process at hand and thereby enable to validate modeling assumptions.

\section{Preprocessing}
\label{sec:preprocessing}
Prior to passing a realization of a financial time series $s_{0:T} \in \R^{N_X\times (T+1)}$ to the discriminator, the time series has to be preprocessed. The applied pipeline is displayed in \autoref{tikz:pipeline}. We briefly explain each of the steps taken. Note that all of the used transformations, excluding the rolling window, are invertible and thu, allow a series sampled from a log return NP to be post-processed by inverting the steps 1-4 to obtain the desired form. Also, observe that the pipeline includes the inverse Lambert W transformation as earlier discussed in \autoref{sec:generating_fat_tails}.
\tikzset{
	>=stealth',
	punktchain/.style={
		rectangle, 
		rounded corners, 
		draw=black, very thick,
		text width=6em, 
		minimum height=4em, 
		text centered, 
		on chain},
	line/.style={draw, thick, <-},
	element/.style={
		tape,
		top color=white,
		bottom color=blue!50!black!60!,
		minimum width=8em,
		draw=blue!40!black!90, very thick,
		text width=10em, 
		minimum height=3.5em, 
		text centered, 
		on chain},
	every join/.style={->, thick,shorten >=1pt},
	decoration={brace},
	tuborg/.style={decorate},
	tubnode/.style={midway, right=2pt},
}
\begin{figure}[H]
	\centering
	\begin{tikzpicture}
	[node distance=.4cm, start chain=going right,]
	\node[punktchain, join] (time series) {Time Series $s_{0:T}$};
	\node[punktchain, join] (log) {\textbf{Step 1}:\\Log Returns $r_{1:T}$};
	\node[punktchain, join] (scaler1) {\textbf{Step 2}: Normalize};
	\node[punktchain, join] (lambert) {\textbf{Step 3}: Inverse Lambert W Transform};
	\node[punktchain, join, below=of lambert] (scaler2) {\textbf{Step 4}: Normalize};
	\node[punktchain, join, left=of scaler2] (rollingwindow) {\textbf{Step 5}: Rolling Window };
	\node[punktchain, join, left=of rollingwindow] (torch) {Cast to PyTorch};
	\node[punktchain, join, left=of torch] (final) {Preprocessed Series};
	\end{tikzpicture}
	\caption{Condensed representation of the preprocessing pipeline.}
	\label{tikz:pipeline}
\end{figure}
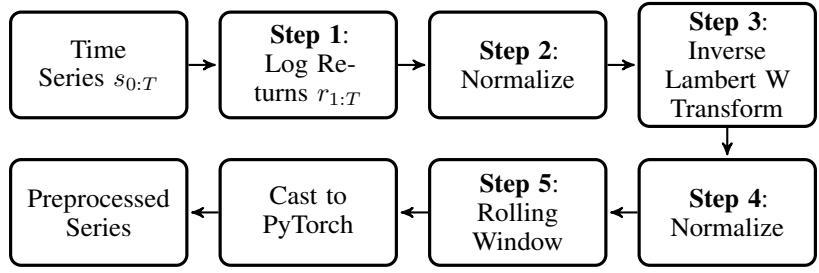

\subsubsection*{Step 1: Log returns \boldmath$r_{1:T}$}
Calculate the log return series
\[ r_t = \log \left(\frac{s_t}{s_{t-1}} \right)  \quad \textrm{for all} \ t \in \{1,...,T\} ~.\]
\subsubsection*{Step 2 \& 4: Normalize}
For numerical reasons, we normalize the data in order to obtain a series with zero mean and unit variance, which is thoroughly derived in \cite{efficient_backprop}.

\begin{figure}[htp]
	\centering
	\begin{subfigure}[b]{0.48\textwidth}
		\includegraphics[width=\textwidth]{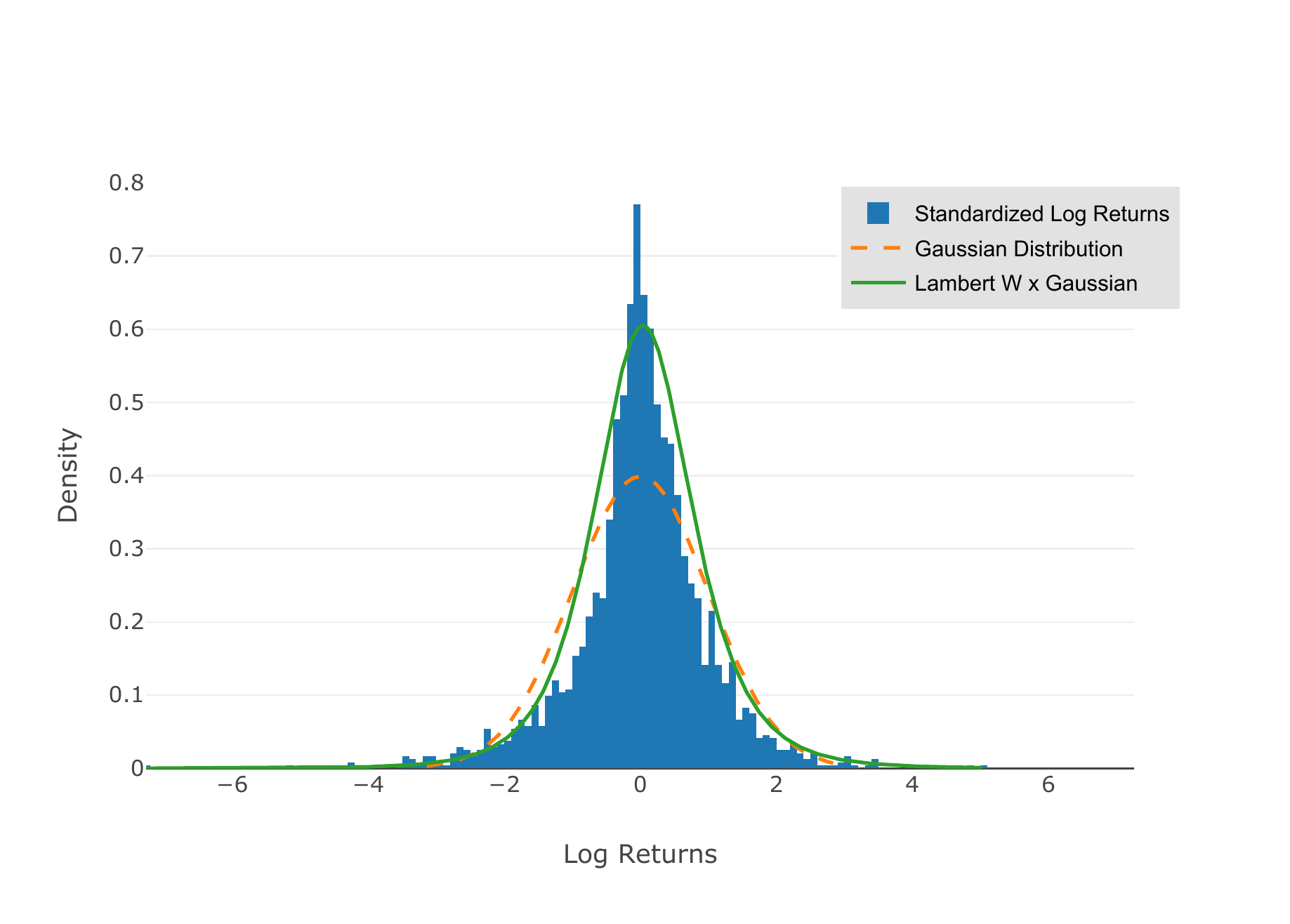}
		\caption{Standardized Log Returns}
		\label{fig:}
	\end{subfigure}
	\begin{subfigure}[b]{0.48\textwidth}
		\includegraphics[width=\textwidth]{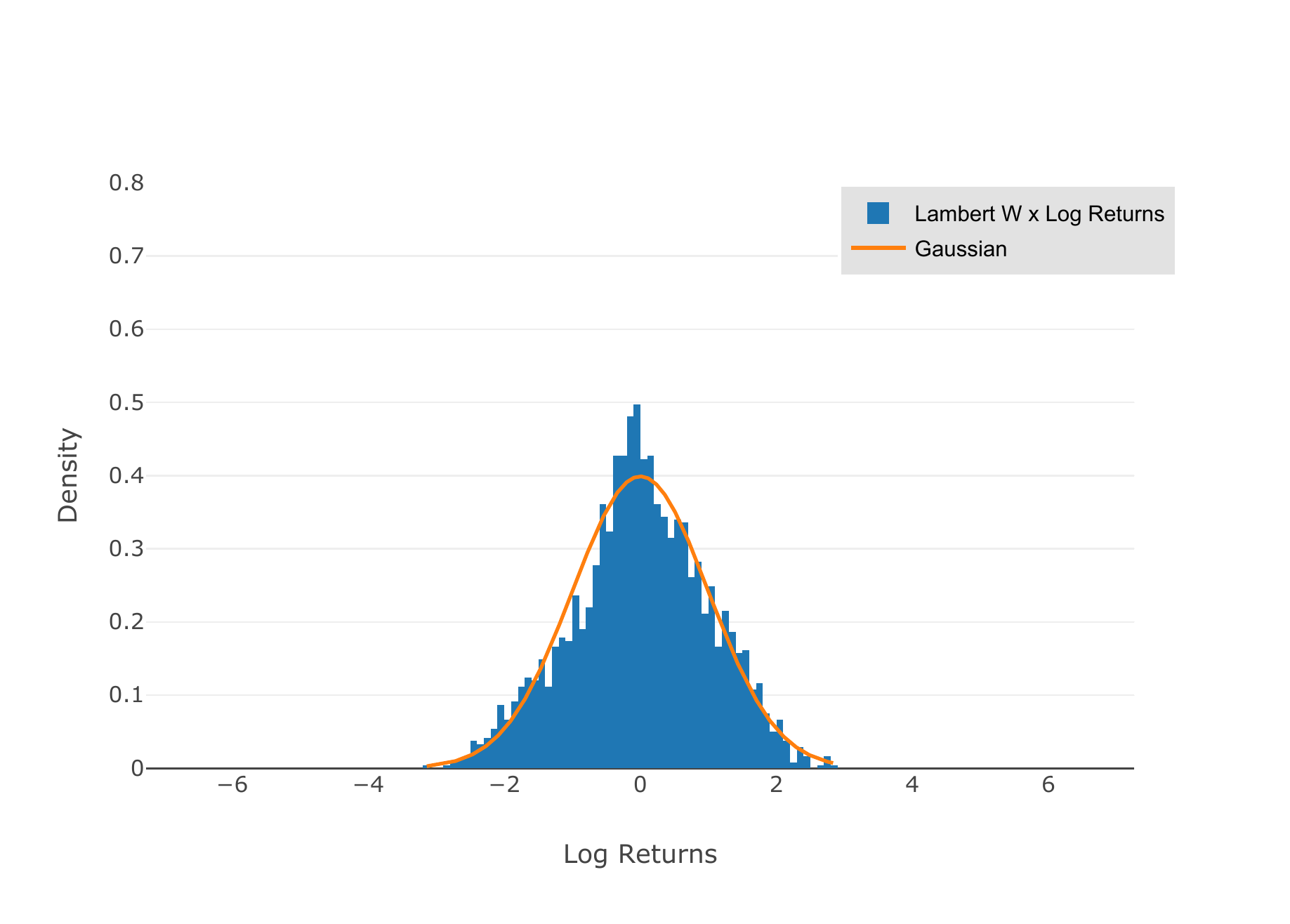}
		\caption{Lambert W x Log Returns}
		\label{fig:lambertwapplied}
	\end{subfigure}
	\caption{(a) The original S\&P 500 log returns and the fitted probability density function of a Lambert W x Gaussian random variable. (b) The inverse Lambert W transformed log returns and the probability density function of a Gaussian random variable.}
	\label{fig:lambertw}
\end{figure}

\subsubsection*{Step 3: Inverse Lambert W transform}
The suggested transformation applied to the log returns of the S\&P 500 is displayed in \autoref{fig:lambertw}. It shows the standardized original distribution of the S\&P 500 log returns and the inverse Lambert W transformed log return distribution. Observe that the transformed standardized log return distribution in \autoref{fig:lambertwapplied} approximately follows the standard normal distribution and thereby circumvents the issue of not being able to generate the heavy-tail of the original distribution.

\subsubsection*{Step 5: Rolling window}
When considering a discriminator with receptive field size $T^{(d)}$, we apply a rolling window of corresponding length and stride one to the preprocessed log return sequence $r^{(\rho)}_t$. Hence, for $t \in \{1, \dots, T-\rfsd\}$ we define the sub-sequences
\begin{equation*}
r_{1:\rfsd}^{(t)} \coloneqq r^{(\rho)}_{t:(\rfsd+t -1)} \in \R^{N_Z \times \rfsd} ~.
\label{eq:rolling_window}
\end{equation*}

\begin{remark}
Note that sliding a rolling window introduces a bias, since log returns at the beginning and end of the time series are under-sampled when training the Quant GAN. This bias can be corrected by using a (non-uniform) weighted sampling scheme when sampling batches from the training set.
\end{remark}

\section{Numerical results}
\label{sec:numerical}
In this section we test the generative capabilities of Quant QANs by modeling the log returns of the S\&P 500 index. For comparison we apply the well-known GARCH(1,1) model to the same data. Our numerical results highlight that Quant GANs can learn a neural process that matches the empirical distribution and dependence properties far better than the presented GARCH model.

\subsection{Setting and implementation}
The implementation was written with the programming language python. NN architectures were implemented by employing the python package pytorch \parencite{paszke2017automatic, pytorch}, an automatic differentiation library primarily used for NN computations and optimization. The training time of the NNs was decreased by using the CUDA-backend of pytorch in combination with a CUDA-enabled graphics processing unit (GPU). We used the GPU RTX 2070 from NVIDIA in order to train larger models. During preprocessing and evaluation we used the packages numpy and scipy \cite{scipy}\footnote{Note that preprocessing and approximation of the Lambert W transform relevant parameters can be equivalently done by using the package pytorch.}. 

The architecture of the TCN used for the generator and the discriminator of the Quant GAN models is an extension of the architecture proposed in \cite[Section 3]{bai_empirical} by using skip connections (cf. \cite{wavenet}, \autoref{def:tcn_skip}). A detailed description of the architecture is given in \autoref{appendix_architecture}. During training of the generator and discriminator, we applied the \textit{GAN stability algorithm} from \cite{mescheder_which_gan}. 

\subsection{Models}
For modeling the log returns of the S\&P 500, we have a look at three different model architectures: a pure TCN model, a constrained log return NP and for comparison a simple GARCH model. 

In contrast to other approaches in the literature, the proposed models are based on TCNs instead of LSTMs. Note that this is mainly motivated by the superior performance of convolutional architectures to typical recurrent architectures in many sequence related tasks \parencite{bai_empirical}.  Therefore, we do not consider LSTM based models here and leave this as future work.

\subsubsection*{Pure TCN}
To evaluate the capabilities of a pure TCN model, we model the log returns directly using a TCN with receptive field size $T^{(g)}=127$ as the generator function, i.e. the return process is given by
\[ R_{t,\theta} = g_{\theta}(Z_{t-(T^{(g)}-1):t}) \]
for the three-dimensional noise prior $Z_t \stackrel{iid}{\sim} \mathcal{N}(\mathbf{0},I)$, ($N_Z=3$).
\subsubsection*{Constrained log return NP}
Assume a constrained log return NP (see \autoref{def:svnn} and \hyperref[sec:constrained_log_NP]{Section~\ref*{sec:constrained_log_NP}})
\[ R_{t, \theta} = \sigma_{t,\theta} \, \epsilon_{t,\theta} + \mu_{t, \theta} \]
with volatility NP $\sigma_{t,\theta}$, drift NP $\mu_{t,\theta}$ and an innovations NP $\epsilon_{t, \theta}$ constrained to being i.i.d. $\mathcal{N}(0,1)$-distributed. Here again the latent process is i.i.d. Gaussian noise $Z_t \stackrel{iid}{\sim} \mathcal{N}(\mathbf{0},I)$ for $N_Z=3$. The innovation process takes the form $\epsilon_{t, \theta} = Z_{t, 1}$ for any $t \in \mathbb{Z}$.
\subsubsection*{GARCH(1,1) with constant drift}
Assume a GARCH(1,1) model with constant drift, where
\begin{equation*}
\begin{split}
R_{t, \theta} &= \xi_t + \mu \\
\xi_t &= \sigma_t \,\epsilon_t \\
\sigma_t^2 &= \omega + \alpha \,\xi_{t-1}^2 + \beta \,\sigma_{t-1}^2 \\
\epsilon_t &\stackrel{iid}{\sim} \mathcal{N}(0,1)
\end{split}
\end{equation*}
for $\mu \in \mathbb{R}$, $\omega >0$, $\alpha, \beta \in [0,1]$ such that $\alpha + \beta < 1$ and the parameter vector $\theta = (\omega, \alpha, \beta, \mu)$. For more details on GARCH-processes see \cite{garch}.

\subsection{Evaluating a path simulator: metrics and scores}
\label{sec:metrics_scores}
To compare the dynamics of the three different models with the S\&P 500, we propose the use of the following metrics and scores. 
\subsubsection{Distributional metrics}
\paragraph{Earth mover distance}
Let $\mathbb{P}^h$ denote the historical and $\mathbb{P}^g$ the generated distribution of the (possibly lagged) log returns. The \textit{Earth Mover Distance} (or \textit{Wasserstein-1 distance}) is defined by
\[ \operatorname{EMD} \left(\mathbb{P}^h, \mathbb{P}^g \right) = \inf_{\pi \in \Pi(\mathbb{P}^h, \mathbb{P}^g)}  \mathbb{E}_{(X,Y) \sim \pi} [ || X - Y ||] ~,\]
where $\Pi(\mathbb{P}^h, \mathbb{P}^g)$ denotes the set of all joint probability distributions with marginals $\mathbb{P}^h$ and $\mathbb{P}^g$. Loosely speaking the earth mover distance describes how much probability \textit{mass} has to be moved to transform $\mathbb{P}^h$ into $\mathbb{P}^g$. For more details, see \cite{villani}.
\paragraph{DY metric}
Additionally we compute the \textit{DY metric} proposed in \cite{dragulescu_heston}. The DY metric is for $t \in \mathbb{N}$ defined by
\[
\operatorname{DY}(t) = \sum_{x} \abs{\log P_t^{h}(A_{t,x}) - \log P_t^g(A_{t,x})} ,
\]
where $P_t^{h}$ and $P_t^g$ denote the empirical probability density function of the historical and generated $t$-differenced log path. Further, $(A_{t, x})_x$ denotes a partitioning of the real number line such that for fixed $t$ and all $x$ we (approximately) have 
\[\log P_t^{h}(A_{t,x}) = 5 / T\] 
for $T$ the number of historical log returns. During evaluation we consider the time lags ${t \in \{1,5,20,100\}}$, which represent a comparison of the daily, weekly, monthly and 100-day log returns. 
\subsubsection{Dependence scores}
\paragraph{\boldmath$\operatorname{ACF}$ score}
The ACF score is proposed to compare the dependence properties of the historical and the generated time series. Let $r_{1:T}$ denote the historical log return series and $\{r^{(1)}_{1:\tilde T, \theta}, \dots, r^{(M)}_{1:\tilde T, \theta}\}$ a set of generated log return paths of length $\tilde T \in \mathbb{N}$. The autocorrelation is defined as a function of the time lag $\tau$ and a series $r_{1:T}$ and measures the correlation of the lagged time series with the series itself
\[
\mathcal{C}(\tau; r) = \operatorname{Corr}(r_{t+\tau}, r_t).
\]
Denoting by $C:\R^T \to [-1, 1]^S: r_{1:T} \mapsto (\mathcal{C}(1; r), \dots, \mathcal{C}(S; r))$ the autocorrelation function up to lag $S \leq T-1$, the $\operatorname{ACF}(f)$ score is computed for a function $f: \R \to \R $ as
\[
\operatorname{ACF}(f) \coloneqq \norm{
	C(f(r_{1:T})) - \dfrac{1}{M} \sum_{i=1}^M {C} \left(f \left(r^{(i)}_{1:T, \theta} \right) \right)
}_2 ~,
\]
where the function $f$ is applied element-wise to the series. 
The $\operatorname{ACF}$ score is computed for the functions $f(x) = x$, $f(x)=x^2$ and $f(x) = \abs{x}$ and constants $S=250$, $M=500$, $\tilde T = 4000$.

\paragraph{Leverage effect score}
Similar to the $\operatorname{ACF}$ score, the leverage effect score provides a comparison of the historical and the generated time dependence. The leverage effect is measured using the correlation of the lagged, squared log returns and the log returns themselves, i.e. we consider
\[ \mathcal{L}(\tau;r) = \operatorname{Corr} \left(r^2_{t + \tau}, r_t \right) \]
for lag $\tau$. Denoting by $L:\R^T \to [-1, 1]^S : r_{1:T} \mapsto \left(\mathcal{L}(1;r),..., \mathcal{L}(S;r) \right)$ the leverage effect function up to lag $S \leq T-1$, the leverage effect score is defined by
\[
\norm{
	L(r_{1:T}) - \dfrac{1}{M} \sum_{i=1}^M L \left(r^{(i)}_{1:\tilde T, \theta} \right)
}_2 ~.
\]
In the benchmark, the leverage effect score is computed for $S=250$, $M=500$ and $\tilde T = 4000$; the same as for the $\operatorname{ACF}$ score.

\subsection{Generating the S\&P 500 index}
We consider daily spot-prices of the S\&P 500 from May 2009 until December 2018. As expected, the GAN training was very irregular and did not converge to a local optimum of the objective function. Instead for each GAN model, several learned parameter sets were saved and the best setup was chosen based on the evaluation metrics described in  \autoref{sec:metrics_scores}. For each model, we only present the results of the best performing setup. In the appendix, we show:
\begin{itemize}
	\item histograms of the real and the generated log returns on a daily, weekly, monthly and 100-day basis,
	\item mean-autocorrelation functions of the serial, squared and absolute log returns together with the empirical confidence bands,
	\item the correlations between the squared, lagged and the non-squared log returns together with the empirical confidence band as a proxy for the leverage effect,
	\item 5 plus additionally 50 exemplary generated log paths.
\end{itemize}
Further, \autoref{table:sp500_metrics} presents the values of the evaluated metrics for each of the models. 

\begin{center}
	\begin{tabular}{ c c c c c c c}
		\clineB{1-5}{2.5}
		time series & time span & \# of observations & ADF-statistic & p-value
		\\\hline
		S\&P 500 & May 2009 - December 2018 & 2413 & -10.87 & $1.36 \times 10^{-19} $
		\\ 
		\clineB{1-5}{2.5}
	\end{tabular}
	\captionof{table}{Considered financial time series.}
\end{center}
We validate \autoref{modeling_assumption} by applying the augmented Dickey-Fuller test \parencite{adf} to the time series and obtain a test statistic of -10.87 and a p-value of $1.36 \times 10^{-19} $, which is a strong indication that the series is stationary. Further as expected, the Lambert W transformation significantly helped the reported models to capture the stylized facts of the asset returns; in particular in modeling the tails of the distribution.

\subsubsection{Pure TCN}
The displayed graphics show that the TCN model is capable of precisely modeling distributional and dependence properties present in the real S\&P. 

As can be seen in \autoref{fig:s&p_tcn_hist}, the generated log returns closely match the histogram of the real returns on each of the presented time scales. Even for 100-day lagged returns, the fit is quite good. 

The same holds true for the ACFs and the leverage effect plot, which deal with the dependence structure inherent in the data (see \autoref{fig:s&p_tcn_acf}). The TCN  accurately models the sharp drop in the ACF of the serial returns as well as the slowly decaying ACF of the squared and absolute log returns. Moreover, the leverage effect is captured by a negative correlation between the squared and non-squared log returns for small time lags.

Recall that the displayed ACFs corresponding to the generated returns are mean ACFs and thereby much smoother than the ACF of the real returns.

Furthermore, the exemplary log paths shown in \autoref{fig:s&p_tcn_5_log_paths} and \autoref{fig:s&p_tcn_log_paths} exhibit reasonable patterns and demonstrate the structural diversity possible in the TCN model.

For all except two metrics evaluated in \autoref{table:sp500_metrics}, the TCN performs best. In particular, the TCN clearly outperforms the GARCH(1,1) model in each metric, often by a factor 2-10.

\subsubsection{Constrained SVNN}
The C-SVNN performs comparable to the TCN, but slightly worse in most of the evaluated metrics. This is expectable as the C-SVNN has less degrees of freedom due to the restrictions compared to the pure TCN.

As is displayed in the graphics, the C-SVNN is able to capture the same properties as the TCN (see \autoref{fig:s&p_c_svnn_drift_hist} and \autoref{fig:s&p_c_svnn_drift_acf}). Merely the ACF of the squared and absolute log returns is better modeled by the TCN. 

Despite the restrictions, the evaluated metrics of the C-SVNN in \autoref{table:sp500_metrics} are nearly as good as the results of the TCN. Furthermore, the C-SVNN also outperforms GARCH(1,1) model significantly.

Recall further that the SVNN has the structural advantages that the volatility can be directly modeled and a transition to its martingale distribution is known.

\subsubsection{GARCH(1,1) with constant drift}
The GARCH model is clearly outperformed by the previously considered GAN-approaches two presented models in modeling distributional as well as dependence properties.

As indicated by the stylized facts of asset returns (see \hyperref[sec:stylized_facts]{Section~\ref*{sec:stylized_facts}}), the assumed normal distribution of the GARCH model places too less probability mass at the peak and the tails of the log return distribution as displayed by the histograms in \autoref{fig:s&p_garch_drift_hist}.

In contrast, the autocorrelation function is captured quite well. This should not come as a surprise, as the GARCH structure was designed to capture this dependence. The main characteristics of the ACF plots in \autoref{fig:s&p_garch_drift_acf} are modeled, but the GARCH approach fails in exactness compared to the TCN and C-SVNN model. Note further that the leverage effect is not captured at all.

\autoref{table:sp500_metrics} supports this graphical assessment as the GARCH model performs worst in each of the evaluated metrics. For the ACF scores, the GARCH model is quite comparable to the other models as was already pointed out looking at the graphics. In contrast, in terms of the EMD and DY metric which focus on the return distribution, the GARCH model is clearly outperformed by the other two models.

\begin{table}[h]
	\begin{center}
		\begin{tabular}{c c c c}
			\clineB{1-4}{2.5}
			& TCN & C-SVNN with drift & GARCH(1,1) \\ \hline
			$\operatorname{EMD}(1)$ & \textbf{0.0039} & 0.0040 & 0.0199\\ 
			$\operatorname{EMD}(5)$ & \textbf{0.0039} & 0.0040 & 0.0145\\
			$\operatorname{EMD}(20)$ & \textbf{0.0040} & 0.0069 & 0.0276 \\
			$\operatorname{EMD}(100)$ & \textbf{0.0154} & 0.0464 & 0.0935 \\ 
			\hline
			DY(1) & \textbf{19.1199} & 19.8523 & 32.7090\\
			DY(5) & \textbf{21.1167} & 21.2445 & 27.4760\\
			DY(20) & 26.3294 & \textbf{25.0464} & 39.3796\\
			DY(100) & 28.1315 & \textbf{25.8081} & 46.4779 \\
			\hline
			$\operatorname{ACF}({\operatorname{id}})$ & \textbf{0.0212} & 0.0220 & 0.0223\\
			$\operatorname{ACF}({\abs{\cdot}})$ & \textbf{0.0248} & 0.0287 & 0.0291\\
			$\operatorname{ACF}({\normalbrack{\cdot}^2})$ & \textbf{0.0214} & 0.0245 & 0.0253\\ \hline
			Leverage Effect & \textbf{0.3291} & 0.3351 & 0.4636 \\
			\clineB{1-4}{2.5}
		\end{tabular}
		\caption{Evaluated metrics for the three models applied. For each row, the best value is printed bold.}
		\label{table:sp500_metrics}
	\end{center}
\end{table}

\section{Conclusion and future work}
\label{sec:conclusion}
In this paper we showed that recently developed NN architectures can be used in an adversarial modeling framework to approximate financial time series in discrete-time. Although these methods have been notoriously hard to train, advances in GANs showed that they can deliver competitive results and, as GAN training procedures develop, promise even better performance in future.

For Quant GANs to flourish in the future there are two fundamental challenges that need to be addressed. The first is an exact modeling and extrapolation of the generated tail by incorporating prior knowledge such as the estimated tail-index. Second, a single metric needs to be developed which unifies distributional metrics with dependence scores we used in this paper and allows to benchmark different generator architectures. Once these points are sufficiently studied and addressed, Quant GANs offer a data-driven method that surpasses the performance of other conventional models from mathematical finance.

\clearpage
\bibliography{references}

\begin{thebibliography}{50}
\providecommand{\natexlab}[1]{#1}
\providecommand{\url}[1]{\texttt{#1}}
\expandafter\ifx\csname urlstyle\endcsname\relax
  \providecommand{\doi}[1]{doi: #1}\else
  \providecommand{\doi}{doi: \begingroup \urlstyle{rm}\Url}\fi

\bibitem[Arjovsky and Bottou(2017)]{Arjovsky2017}
Mart{\'{\i}}n Arjovsky and L{\'{e}}on Bottou.
\newblock Towards principled methods for training generative adversarial
  networks.
\newblock In \emph{5th International Conference on Learning Representations,
  {ICLR} 2017, Toulon, France, April 24-26, 2017, Conference Track
  Proceedings}, 2017.
\newblock URL \url{https://openreview.net/forum?id=Hk4\_qw5xe}.

\bibitem[Bai et~al.(2018)Bai, Kolter, and Koltun]{bai_empirical}
Shaojie Bai, J.~Zico Kolter, and Vladlen Koltun.
\newblock An empirical evaluation of generic convolutional and recurrent
  networks for sequence modeling.
\newblock \emph{CoRR}, abs/1803.01271, 2018.
\newblock URL \url{http://arxiv.org/abs/1803.01271}.

\bibitem[Becker et~al.(2018)Becker, Cheridito, and
  Jentzen]{deep_optimal_stopping}
Sebastian Becker, Patrick Cheridito, and Arnulf Jentzen.
\newblock Deep optimal stopping.
\newblock \emph{arXiv}, abs/1804.05394, 2018.

\bibitem[Black and Scholes(1973)]{blackscholes}
Fischer Black and Myron Scholes.
\newblock The pricing of options and corporate liabilities.
\newblock \emph{Journal of Political Economy}, 81\penalty0 (3):\penalty0
  637--54, 1973.
\newblock URL
  \url{https://EconPapers.repec.org/RePEc:ucp:jpolec:v:81:y:1973:i:3:p:637-54}.

\bibitem[Bollerslev(1986)]{garch}
Tim Bollerslev.
\newblock Generalized autoregressive conditional heteroskedasticity.
\newblock \emph{Journal of Econometrics}, 31\penalty0 (3):\penalty0 307--327,
  1986.
\newblock URL
  \url{https://EconPapers.repec.org/RePEc:eee:econom:v:31:y:1986:i:3:p:307-327}.

\bibitem[Buehler and Ryskin(2017)]{BuehlerDLV}
Hans Buehler and Evgeny Ryskin.
\newblock Discrete local volatility for large time steps (extended version).
\newblock \emph{Available at SSRN 2642630}, 2017.

\bibitem[Buehler et~al.(2019{\natexlab{a}})Buehler, Gonon, Teichmann, and
  Wood]{deep_hedging}
Hans Buehler, Lukas Gonon, Josef Teichmann, and Ben Wood.
\newblock Deep hedging.
\newblock \emph{Quantitative Finance}, pages 1--21, 02 2019{\natexlab{a}}.
\newblock \doi{10.1080/14697688.2019.1571683}.

\bibitem[Buehler et~al.(2019{\natexlab{b}})Buehler, Gonon, Wood, Teichmann,
  Mohan, and Kochems]{Buehler2019b}
Hans Buehler, Lukas Gonon, Ben Wood, Josef Teichmann, Baranidharan Mohan, and
  Jonathan Kochems.
\newblock Deep hedging: Hedging derivatives under generic market frictions
  using reinforcement learning-machine learning version.
\newblock \emph{Available at SSRN}, 2019{\natexlab{b}}.

\bibitem[Chakraborti et~al.(2011)Chakraborti, Toke, Patriarca, and
  Abergel]{stylized_facts_chakraborti}
Anirban Chakraborti, Ioane~Muni Toke, Marco Patriarca, and Frédéric Abergel.
\newblock Econophysics review: I. empirical facts.
\newblock \emph{Quantitative Finance}, 11\penalty0 (7):\penalty0 991--1012,
  2011.
\newblock \doi{10.1080/14697688.2010.539248}.
\newblock URL \url{htps://doi.org/10.1080/14697688.2010.539248}.

\bibitem[Chung et~al.(2014)Chung, G{\"{u}}l{\c{c}}ehre, Cho, and Bengio]{gru}
Junyoung Chung, {\c{C}}aglar G{\"{u}}l{\c{c}}ehre, KyungHyun Cho, and Yoshua
  Bengio.
\newblock Empirical evaluation of gated recurrent neural networks on sequence
  modeling.
\newblock \emph{CoRR}, abs/1412.3555, 2014.
\newblock URL \url{http://arxiv.org/abs/1412.3555}.

\bibitem[Clevert et~al.(2016)Clevert, Unterthiner, and Hochreiter]{elu}
Djork{-}Arn{\'{e}} Clevert, Thomas Unterthiner, and Sepp Hochreiter.
\newblock Fast and accurate deep network learning by exponential linear units
  (elus).
\newblock In \emph{4th International Conference on Learning Representations,
  {ICLR} 2016, San Juan, Puerto Rico, May 2-4, 2016, Conference Track
  Proceedings}, 2016.
\newblock URL \url{http://arxiv.org/abs/1511.07289}.

\bibitem[Cont(2001)]{stylized_facts}
Rama Cont.
\newblock Empirical properties of asset returns: stylized facts and statistical
  issues.
\newblock \emph{Quantitative Finance}, 1\penalty0 (2):\penalty0 223--236, 2001.
\newblock URL
  \url{https://EconPapers.repec.org/RePEc:taf:quantf:v:1:y:2001:i:2:p:223-236}.

\bibitem[Dieleman et~al.(2018)Dieleman, van~den Oord, and
  Simonyan]{Dielemann2018}
Sander Dieleman, Aaron van~den Oord, and Karen Simonyan.
\newblock The challenge of realistic music generation: modelling raw audio at
  scale.
\newblock In S.~Bengio, H.~Wallach, H.~Larochelle, K.~Grauman, N.~Cesa-Bianchi,
  and R.~Garnett, editors, \emph{Advances in Neural Information Processing
  Systems 31}, pages 7989--7999. Curran Associates, Inc., 2018.
\newblock URL
  \url{http://papers.nips.cc/paper/8023-the-challenge-of-realistic-music-generation-modelling-raw-audio-at-scale.pdf}.

\bibitem[Drǎgulescu and Yakovenko(2002)]{dragulescu_heston}
Adrian~A Drǎgulescu and Victor~M Yakovenko.
\newblock Probability distribution of returns in the heston model with
  stochastic volatility.
\newblock \emph{Quantitative Finance}, 2\penalty0 (6):\penalty0 443--453, 2002.
\newblock \doi{10.1080/14697688.2002.0000011}.
\newblock URL \url{https://doi.org/10.1080/14697688.2002.0000011}.

\bibitem[El~Euch et~al.(2018)El~Euch, Gatheral, and Rosenbaum]{rough_heston}
Omar El~Euch, Jim Gatheral, and Mathieu Rosenbaum.
\newblock Roughening heston.
\newblock \emph{Available at SSRN 3116887}, 2018.

\bibitem[Goerg(2010)]{lambert_gaussanize}
Georg~M. Goerg.
\newblock The lambert way to gaussianize heavy-tailed data with the inverse of
  tukey's h transformation as a special case.
\newblock \emph{The Scientific World Journal}, 2015, 10 2010.

\bibitem[Goodfellow et~al.(2013)Goodfellow, Warde-Farley, Mirza, Courville, and
  Bengio]{maxout}
Ian Goodfellow, David Warde-Farley, Mehdi Mirza, Aaron Courville, and Yoshua
  Bengio.
\newblock Maxout networks.
\newblock In \emph{Proceedings of the 30th International Conference on
  International Conference on Machine Learning - Volume 28}, ICML'13, pages
  III--1319--III--1327. JMLR.org, 2013.
\newblock URL \url{http://dl.acm.org/citation.cfm?id=3042817.3043084}.

\bibitem[Goodfellow et~al.(2014)Goodfellow, Pouget-Abadie, Mirza, Xu,
  Warde-Farley, Ozair, Courville, and Bengio]{gans}
Ian Goodfellow, Jean Pouget-Abadie, Mehdi Mirza, Bing Xu, David Warde-Farley,
  Sherjil Ozair, Aaron Courville, and Yoshua Bengio.
\newblock Generative adversarial nets.
\newblock In Z.~Ghahramani, M.~Welling, C.~Cortes, N.~D. Lawrence, and K.~Q.
  Weinberger, editors, \emph{Advances in Neural Information Processing Systems
  27}, pages 2672--2680. Curran Associates, Inc., 2014.
\newblock URL
  \url{http://papers.nips.cc/paper/5423-generative-adversarial-nets.pdf}.

\bibitem[Goodfellow et~al.(2016)Goodfellow, Bengio, and
  Courville]{deeplearningbook}
Ian Goodfellow, Yoshua Bengio, and Aaron Courville.
\newblock \emph{Deep Learning}.
\newblock MIT Press, 2016.
\newblock \url{http://www.deeplearningbook.org}.

\bibitem[Gulrajani et~al.(2017)Gulrajani, Ahmed, Arjovsky, Dumoulin, and
  Courville]{Gulrajani2017}
Ishaan Gulrajani, Faruk Ahmed, Mart{\'{\i}}n Arjovsky, Vincent Dumoulin, and
  Aaron~C. Courville.
\newblock Improved training of wasserstein gans.
\newblock \emph{CoRR}, abs/1704.00028, 2017.
\newblock URL \url{http://arxiv.org/abs/1704.00028}.

\bibitem[He et~al.(2015)He, Zhang, Ren, and Sun]{prelu_he}
Kaiming He, Xiangyu Zhang, Shaoqing Ren, and Jian Sun.
\newblock Delving deep into rectifiers: Surpassing human-level performance on
  imagenet classification.
\newblock In \emph{Proceedings of the 2015 IEEE International Conference on
  Computer Vision (ICCV)}, ICCV '15, pages 1026--1034, Washington, DC, USA,
  2015. IEEE Computer Society.
\newblock ISBN 978-1-4673-8391-2.
\newblock \doi{10.1109/ICCV.2015.123}.
\newblock URL \url{http://dx.doi.org/10.1109/ICCV.2015.123}.

\bibitem[Heston(1993)]{heston}
Steven~L. Heston.
\newblock A closed-form solution for options with stochastic volatility with
  applications to bond and currency options.
\newblock \emph{Review of Financial Studies}, 6:\penalty0 327--343, 1993.

\bibitem[Hochreiter and Schmidhuber(1997)]{lstm_original}
Sepp Hochreiter and Juergen Schmidhuber.
\newblock Long short-term memory.
\newblock \emph{Neural computation}, 9:\penalty0 1735--80, 12 1997.
\newblock \doi{10.1162/neco.1997.9.8.1735}.

\bibitem[Hornik(1991)]{hornik}
Kurt Hornik.
\newblock Approximation capabilities of multilayer feedforward networks.
\newblock \emph{Neural Networks}, 4\penalty0 (2):\penalty0 251 -- 257, 1991.
\newblock ISSN 0893-6080.
\newblock \doi{https://doi.org/10.1016/0893-6080(91)90009-T}.
\newblock URL
  \url{http://www.sciencedirect.com/science/article/pii/089360809190009T}.

\bibitem[Jolicoeur{-}Martineau(2018)]{Martineau2018}
Alexia Jolicoeur{-}Martineau.
\newblock The relativistic discriminator: a key element missing from standard
  {GAN}.
\newblock \emph{CoRR}, abs/1807.00734, 2018.
\newblock URL \url{http://arxiv.org/abs/1807.00734}.

\bibitem[Jones et~al.(2001)Jones, Oliphant, Peterson, et~al.]{scipy}
Eric Jones, Travis Oliphant, Pearu Peterson, et~al.
\newblock {SciPy}: Open source scientific tools for {Python}, 2001.
\newblock \url{http://www.scipy.org/"}.

\bibitem[Karpathy(2019)]{cs231n}
Andrej Karpathy.
\newblock Stanford university {CS231n}: Convolutional neural networks for
  visual recognition, 2019.
\newblock \url{http://cs231n.stanford.edu/syllabus.html}.

\bibitem[Klambauer et~al.(2017)Klambauer, Unterthiner, Mayr, and
  Hochreiter]{selu}
G{\"{u}}nter Klambauer, Thomas Unterthiner, Andreas Mayr, and Sepp Hochreiter.
\newblock Self-normalizing neural networks.
\newblock \emph{CoRR}, abs/1706.02515, 2017.
\newblock URL \url{http://arxiv.org/abs/1706.02515}.

\bibitem[Koshiyama et~al.(2019)Koshiyama, Firoozye, and
  Treleaven]{Koshiyama2019}
Adriano~Soares Koshiyama, Nick Firoozye, and Philip~C. Treleaven.
\newblock Generative adversarial networks for financial trading strategies
  fine-tuning and combination.
\newblock \emph{CoRR}, abs/1901.01751, 2019.
\newblock URL \url{http://arxiv.org/abs/1901.01751}.

\bibitem[Krizhevsky et~al.(2012)Krizhevsky, Sutskever, and Hinton]{alexnet}
Alex Krizhevsky, Ilya Sutskever, and Geoffrey~E. Hinton.
\newblock Imagenet classification with deep convolutional neural networks.
\newblock In \emph{Proceedings of the 25th International Conference on Neural
  Information Processing Systems - Volume 1}, NIPS'12, pages 1097--1105, USA,
  2012. Curran Associates Inc.
\newblock URL \url{http://dl.acm.org/citation.cfm?id=2999134.2999257}.

\bibitem[LeCun et~al.(1998)LeCun, Bottou, Orr, and
  M\"{u}ller]{efficient_backprop}
Yann LeCun, L{\'e}on Bottou, Genevieve~B. Orr, and Klaus-Robert M\"{u}ller.
\newblock Efficient backprop.
\newblock In \emph{Neural Networks: Tricks of the Trade, This Book is an
  Outgrowth of a 1996 NIPS Workshop}, pages 9--50. Springer, 1998.
\newblock ISBN 3-540-65311-2.
\newblock URL \url{http://dl.acm.org/citation.cfm?id=645754.668382}.

\bibitem[Mescheder et~al.(2018)Mescheder, Geiger, and
  Nowozin]{mescheder_which_gan}
Lars Mescheder, Andreas Geiger, and Sebastian Nowozin.
\newblock Which training methods for gans do actually converge?
\newblock In \emph{International Conference on Machine learning (ICML)}, 2018.

\bibitem[Mushtaq(2011)]{adf}
Rizwan Mushtaq.
\newblock Augmented dickey fuller test.
\newblock 2011.

\bibitem[Nair and Hinton(2010)]{relu_nair}
Vinod Nair and Geoffrey~E. Hinton.
\newblock Rectified linear units improve restricted boltzmann machines.
\newblock In \emph{Proceedings of the 27th International Conference on
  International Conference on Machine Learning}, ICML'10, pages 807--814, USA,
  2010. Omnipress.
\newblock ISBN 978-1-60558-907-7.
\newblock URL \url{http://dl.acm.org/citation.cfm?id=3104322.3104425}.

\bibitem[Pardo(2019)]{deMeerPardo2019a}
Fernando Pardo.
\newblock Enriching financial datasets with generative adversarial networks.
\newblock 2019.

\bibitem[Pardo and L{\'o}pez(2019)]{deMeerPardo2019b}
Fernando De~Meer Pardo and Rafael~Cobo L{\'o}pez.
\newblock Mitigating overfitting on financial datasets with generative
  adversarial networks.
\newblock \emph{The Journal of Financial Data Science}, 2019.
\newblock ISSN 2405-9188.
\newblock \doi{10.3905/jfds.2019.1.019}.
\newblock URL
  \url{https://jfds.pm-research.com/content/early/2019/11/26/jfds.2019.1.019}.

\bibitem[Pascanu et~al.(2013)Pascanu, Mikolov, and
  Bengio]{difficulty_of_training_rnns}
Razvan Pascanu, Tomas Mikolov, and Yoshua Bengio.
\newblock On the difficulty of training recurrent neural networks.
\newblock In Sanjoy Dasgupta and David McAllester, editors, \emph{Proceedings
  of the 30th International Conference on Machine Learning}, volume~28 of
  \emph{Proceedings of Machine Learning Research}, pages 1310--1318, Atlanta,
  Georgia, USA, 17--19 Jun 2013. PMLR.
\newblock URL \url{http://proceedings.mlr.press/v28/pascanu13.html}.

\bibitem[Paszke et~al.(2017)Paszke, Gross, Chintala, Chanan, Yang, DeVito, Lin,
  Desmaison, Antiga, and Lerer]{paszke2017automatic}
Adam Paszke, Sam Gross, Soumith Chintala, Gregory Chanan, Edward Yang, Zachary
  DeVito, Zeming Lin, Alban Desmaison, Luca Antiga, and Adam Lerer.
\newblock Automatic differentiation in pytorch.
\newblock In \emph{NIPS-W}, 2017.

\bibitem[Paszke et~al.(2019)Paszke, Gross, Massa, Lerer, Bradbury, Chanan,
  Killeen, Lin, Gimelshein, Antiga, Desmaison, Kopf, Yang, DeVito, Raison,
  Tejani, Chilamkurthy, Steiner, Fang, Bai, and Chintala]{pytorch}
Adam Paszke, Sam Gross, Francisco Massa, Adam Lerer, James Bradbury, Gregory
  Chanan, Trevor Killeen, Zeming Lin, Natalia Gimelshein, Luca Antiga, Alban
  Desmaison, Andreas Kopf, Edward Yang, Zachary DeVito, Martin Raison, Alykhan
  Tejani, Sasank Chilamkurthy, Benoit Steiner, Lu~Fang, Junjie Bai, and Soumith
  Chintala.
\newblock Pytorch: An imperative style, high-performance deep learning library.
\newblock In \emph{Advances in Neural Information Processing Systems 32}, pages
  8024--8035. Curran Associates, Inc., 2019.
\newblock URL
  \url{http://papers.nips.cc/paper/9015-pytorch-an-imperative-style-high-performance-deep-learning-library.pdf}.

\bibitem[Schreyer et~al.(2017)Schreyer, Sattarov, Borth, Dengel, and
  Reimer]{deep_fraud_detection}
Marco Schreyer, Timur Sattarov, Damian Borth, Andreas Dengel, and Bernd Reimer.
\newblock Detection of anomalies in large scale accounting data using deep
  autoencoder networks.
\newblock \emph{CoRR}, abs/1709.05254, 2017.
\newblock URL \url{http://arxiv.org/abs/1709.05254}.

\bibitem[Schreyer et~al.(2019{\natexlab{a}})Schreyer, Sattarov, Reimer, and
  Borth]{Schreyer2019b}
Marco Schreyer, Timur Sattarov, Bernd Reimer, and Damian Borth.
\newblock Adversarial learning of deepfakes in accounting.
\newblock In \emph{NeurIPS 2019 Workshop on Robust AI in Financial Services:
  Data, Fairness, Explainability, Trustworthiness, and Privacy}, Dezember
  2019{\natexlab{a}}.
\newblock URL \url{https://www.alexandria.unisg.ch/258090/}.

\bibitem[Schreyer et~al.(2019{\natexlab{b}})Schreyer, Sattarov, Schulze,
  Reimer, and Borth]{Schreyer2019a}
Marco Schreyer, Timur Sattarov, Christian Schulze, Bernd Reimer, and Damian
  Borth.
\newblock Detection of accounting anomalies in the latent space using
  adversarial autoencoder neural networks.
\newblock In \emph{2nd KDD Workshop on Anomaly Detection in Finance, 2019},
  August 2019{\natexlab{b}}.
\newblock URL \url{https://www.alexandria.unisg.ch/257633/}.
\newblock Code available at: https://github.com/GitiHubi/deepAD.

\bibitem[Silver et~al.(2016)Silver, Huang, Maddison, Guez, Sifre, van~den
  Driessche, Schrittwieser, Antonoglou, Panneershelvam, Lanctot, Dieleman,
  Grewe, Nham, Kalchbrenner, Sutskever, Lillicrap, Leach, Kavukcuoglu, Graepel,
  and Hassabis]{alphago}
David Silver, Aja Huang, Christopher~J. Maddison, Arthur Guez, Laurent Sifre,
  George van~den Driessche, Julian Schrittwieser, Ioannis Antonoglou, Veda
  Panneershelvam, Marc Lanctot, Sander Dieleman, Dominik Grewe, John Nham, Nal
  Kalchbrenner, Ilya Sutskever, Timothy Lillicrap, Madeleine Leach, Koray
  Kavukcuoglu, Thore Graepel, and Demis Hassabis.
\newblock Mastering the game of go with deep neural networks and tree search.
\newblock \emph{Nature}, 529:\penalty0 484--503, 2016.
\newblock URL
  \url{http://www.nature.com/nature/journal/v529/n7587/full/nature16961.html}.

\bibitem[Takahashi et~al.(2019)Takahashi, Chen, and
  Tanaka-Ishii]{Takahashi2019}
Shuntaro Takahashi, Yu~Chen, and Kumiko Tanaka-Ishii.
\newblock Modeling financial time-series with generative adversarial networks.
\newblock \emph{Physica A: Statistical Mechanics and its Applications},
  527:\penalty0 121261, August 2019.
\newblock \doi{10.1016/j.physa.2019.121261}.
\newblock URL \url{https://doi.org/10.1016/j.physa.2019.121261}.

\bibitem[Tankov(2003)]{tankov2003financial}
Peter Tankov.
\newblock \emph{Financial modelling with jump processes}, volume~2.
\newblock CRC press, 2003.

\bibitem[van~den Oord et~al.(2016)van~den Oord, Dieleman, Zen, Simonyan,
  Vinyals, Graves, Kalchbrenner, Senior, and Kavukcuoglu]{wavenet}
Aäron van~den Oord, Sander Dieleman, Heiga Zen, Karen Simonyan, Oriol Vinyals,
  Alexander Graves, Nal Kalchbrenner, Andrew Senior, and Koray Kavukcuoglu.
\newblock Wavenet: A generative model for raw audio.
\newblock \emph{arXiv}, abs/1609.03499, 2016.

\bibitem[Villani(2008)]{villani}
C~Villani.
\newblock \emph{Optimal transport -- Old and new}, pages xxii+973.
\newblock 01 2008.
\newblock \doi{10.1007/978-3-540-71050-9}.

\bibitem[Wiese et~al.(2019{\natexlab{a}})Wiese, Bai, Wood, and
  Buehler]{Wiese2019b}
Magnus Wiese, Lianjun Bai, Ben Wood, and Hans Buehler.
\newblock Deep hedging: learning to simulate equity option markets.
\newblock \emph{Available at SSRN 3470756}, 2019{\natexlab{a}}.

\bibitem[Wiese et~al.(2019{\natexlab{b}})Wiese, Knobloch, and
  Korn]{copula_and_marginal_flows}
Magnus Wiese, Robert Knobloch, and Ralf Korn.
\newblock {Copula \& Marginal Flows: Disentagling the Marginal from its Joint}.
\newblock \emph{arXiv}, abs/1907.03361, 2019{\natexlab{b}}.

\bibitem[Xu et~al.(2015)Xu, Wang, Chen, and Li]{leaky_relu}
Bing Xu, Naiyan Wang, Tianqi Chen, and Mu~Li.
\newblock Empirical evaluation of rectified activations in convolutional
  network.
\newblock \emph{CoRR}, abs/1505.00853, 2015.
\newblock URL \url{http://arxiv.org/abs/1505.00853}.

\end{thebibliography}
\clearpage
\begin{appendix}
\section{Numerical Results}
\renewcommand\thefigure{\thesection.\arabic{figure}}
\setcounter{figure}{0}      
\subsection{Pure TCN}

\begin{figure}[H]
	\centering
	\includegraphics[width=0.9\textwidth]{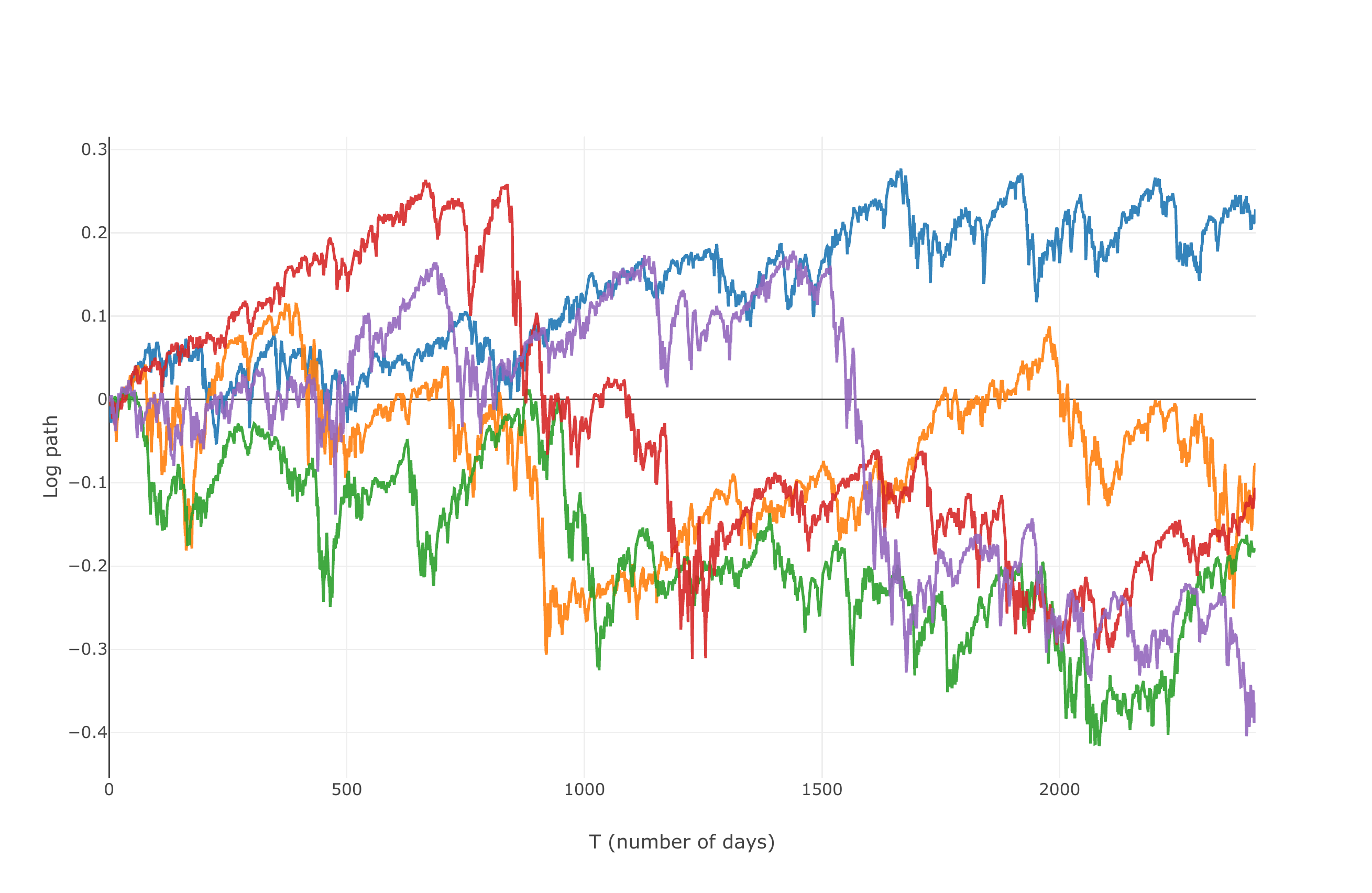}
	\caption{5 generated log paths}
	\label{fig:s&p_tcn_5_log_paths}
\end{figure}
\vspace{-0.75cm}
\begin{figure}[H]
	\centering
	\includegraphics[width=0.9\textwidth]{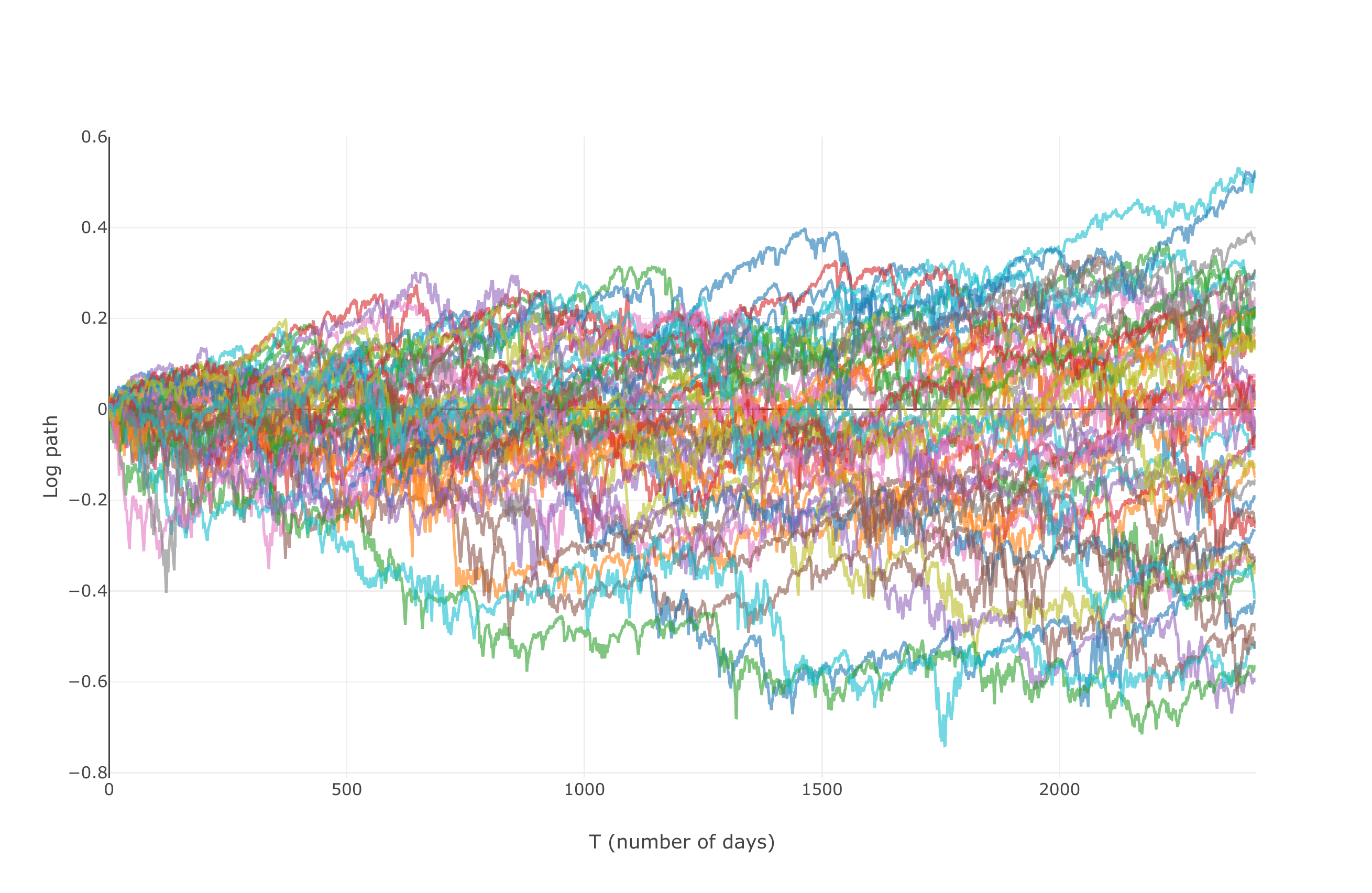}
	\caption{50 generated log paths}
	\label{fig:s&p_tcn_log_paths}
\end{figure}

\begin{figure}[H]
	\vspace*{-2em}
	\captionsetup[subfigure]{aboveskip=-2pt,belowskip=-2pt}
	\centering
	\begin{subfigure}[b]{0.49\textwidth}
		\includegraphics[width=0.9\textwidth]{figures/tcn/histogramm_lag_1.pdf}
		\caption{Daily}
		\label{fig:tcn_hist1}
	\end{subfigure} 
	\begin{subfigure}[b]{0.49\textwidth}
		\includegraphics[width=0.9\textwidth]{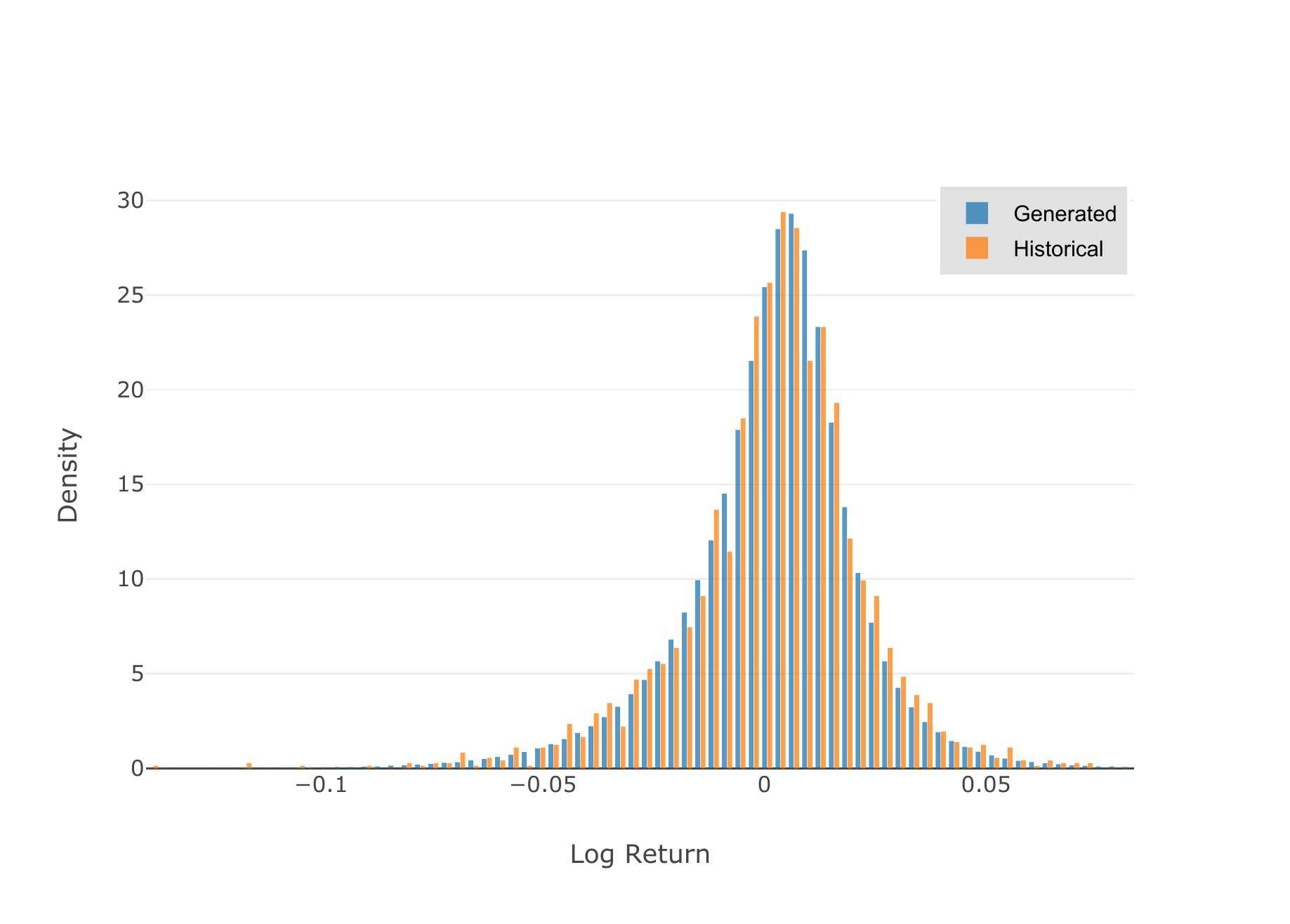}
		\caption{Weekly}
		\label{fig:tcn_hist5}
	\end{subfigure}
	\\
	\vspace{-0.05em}
	\begin{subfigure}[b]{0.49\textwidth}
		\includegraphics[width=0.9\textwidth]{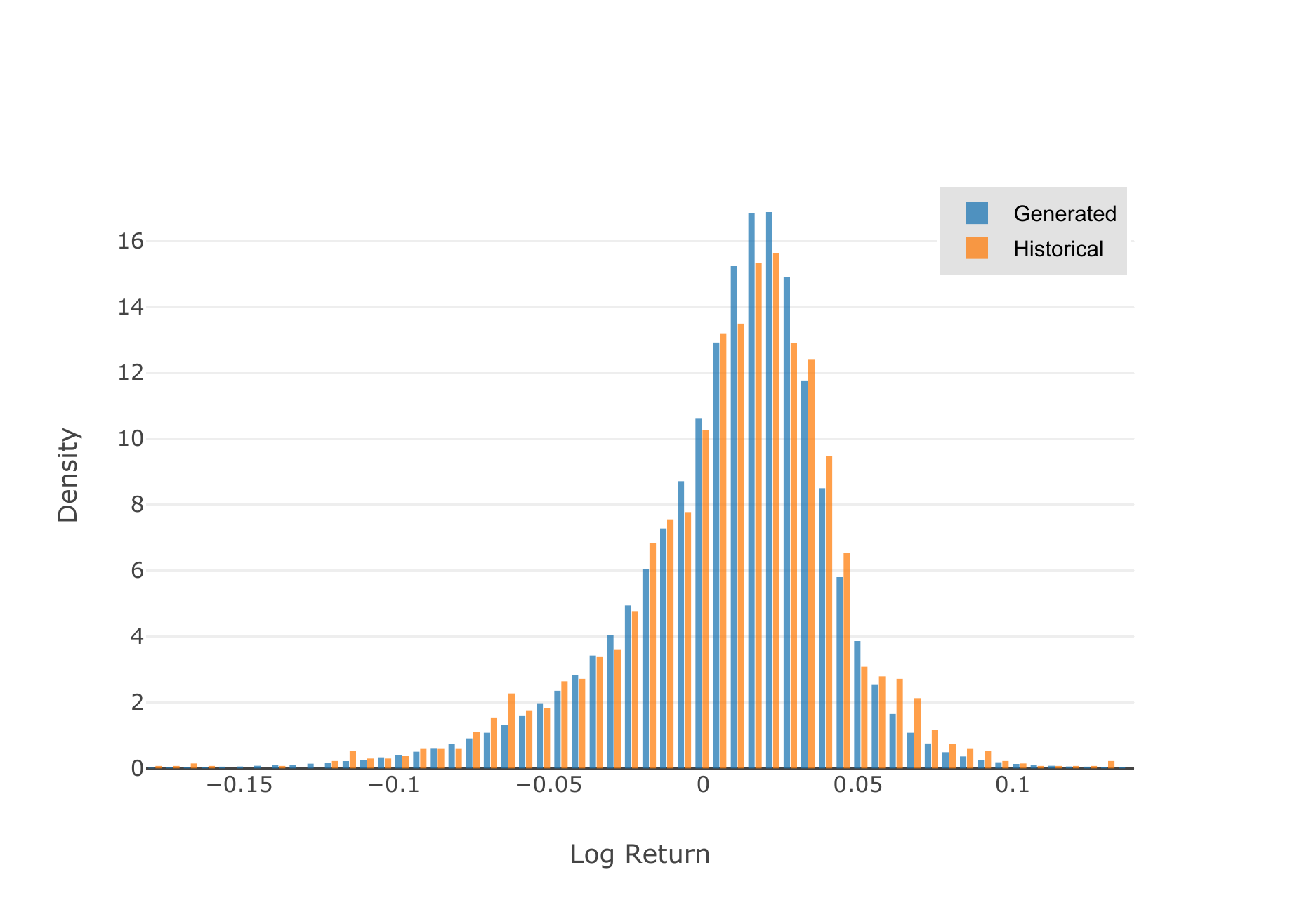}
		\caption{Monthly}
		\label{fig:tcn_hist20}
	\end{subfigure}
	\begin{subfigure}[b]{0.49\textwidth}
		\includegraphics[width=0.9\textwidth]{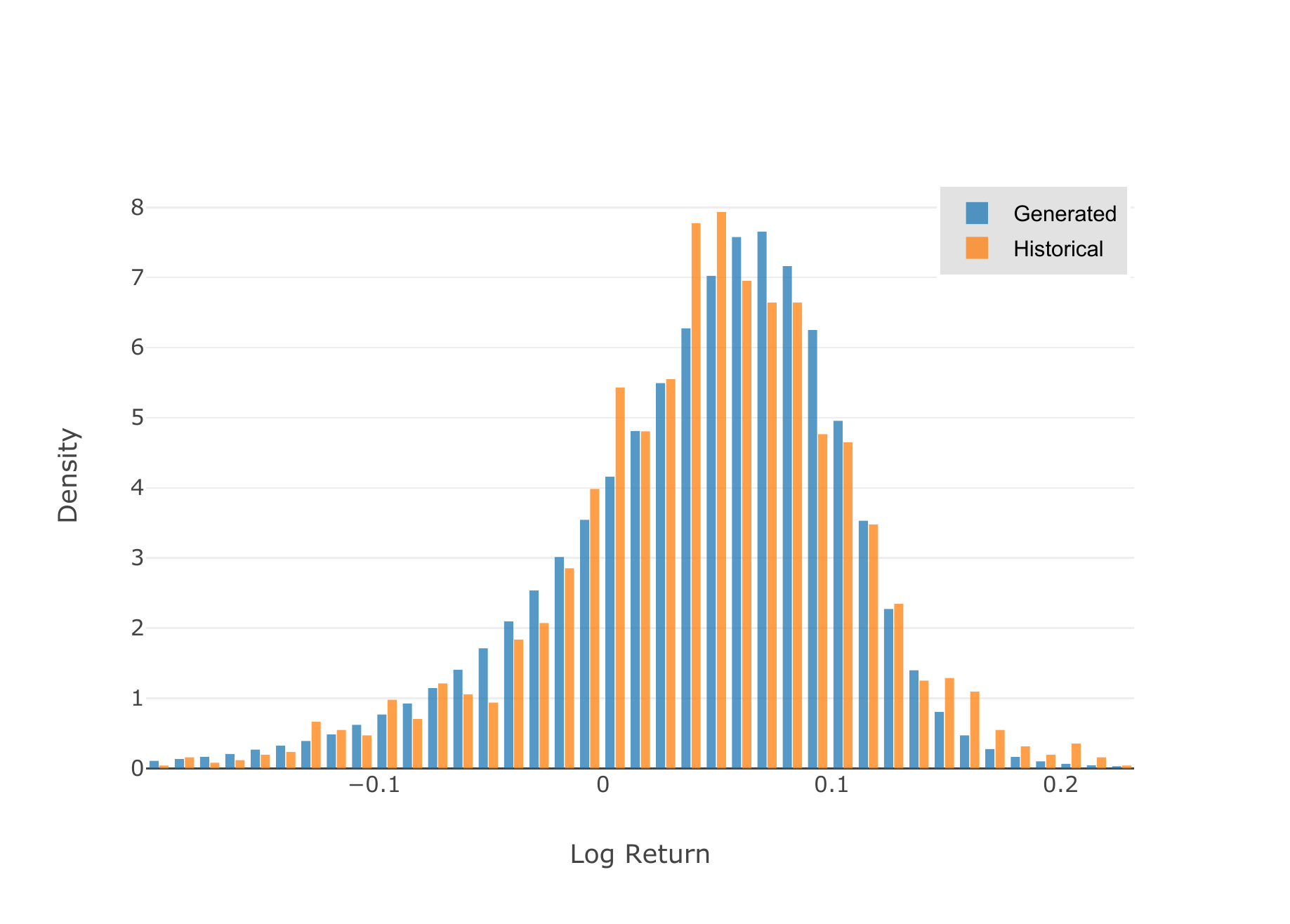}
		\caption{100-day}
		\label{fig:tcn_hist100}
	\end{subfigure}
	\caption{Comparison of generated and historical densities of the S\&P500.}
	\label{fig:s&p_tcn_hist}
\end{figure}
\vspace{-2.25em}
\begin{figure}[H]
	\captionsetup[subfigure]{aboveskip=-2pt,belowskip=-2pt}
	\centering
	\begin{subfigure}[b]{0.49\textwidth}
		\includegraphics[width=0.9\textwidth]{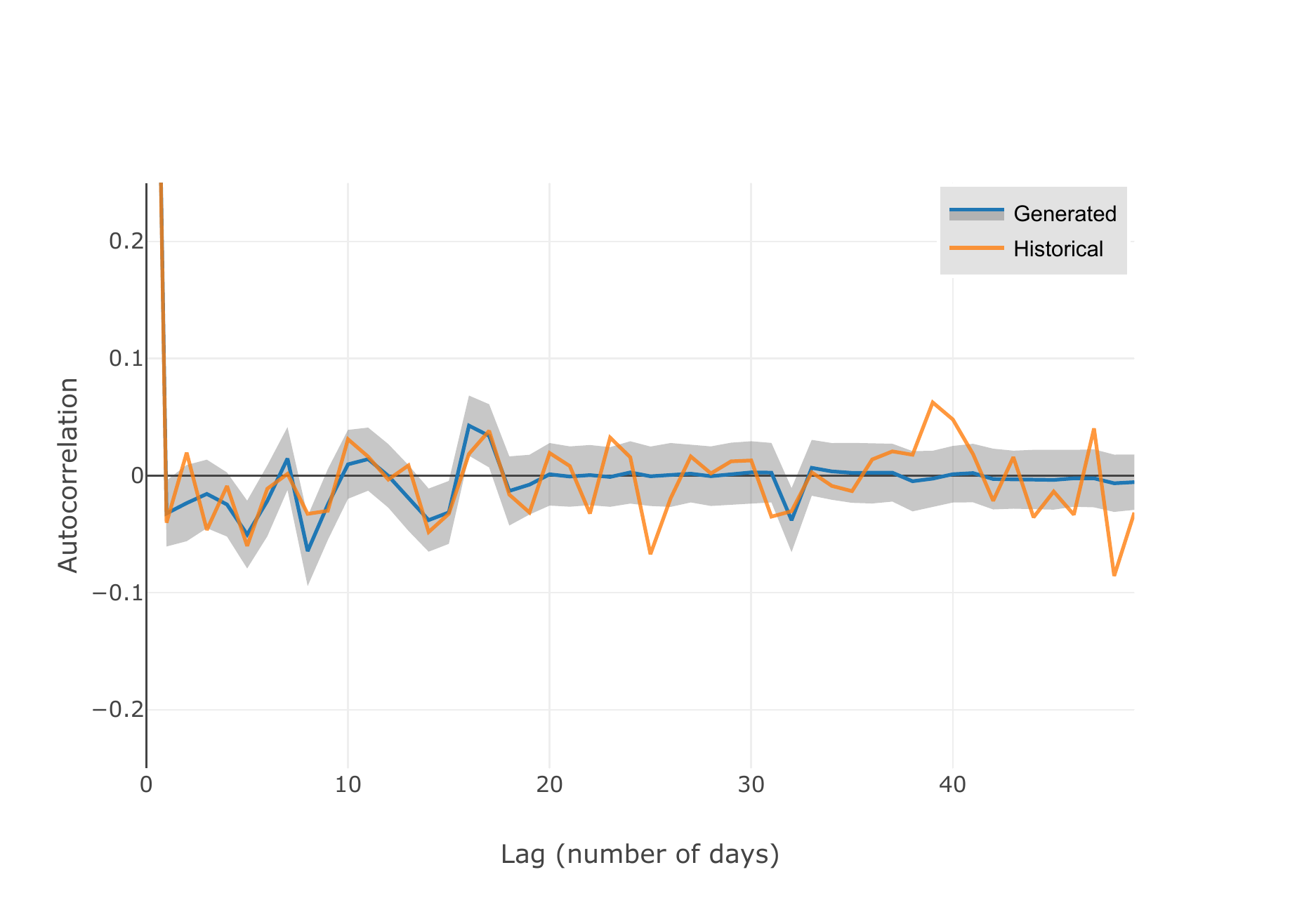}
		\caption{Serial}
		\label{fig:tcn_acf_id}
	\end{subfigure} 
	\begin{subfigure}[b]{0.49\textwidth}
		\includegraphics[width=0.9\textwidth]{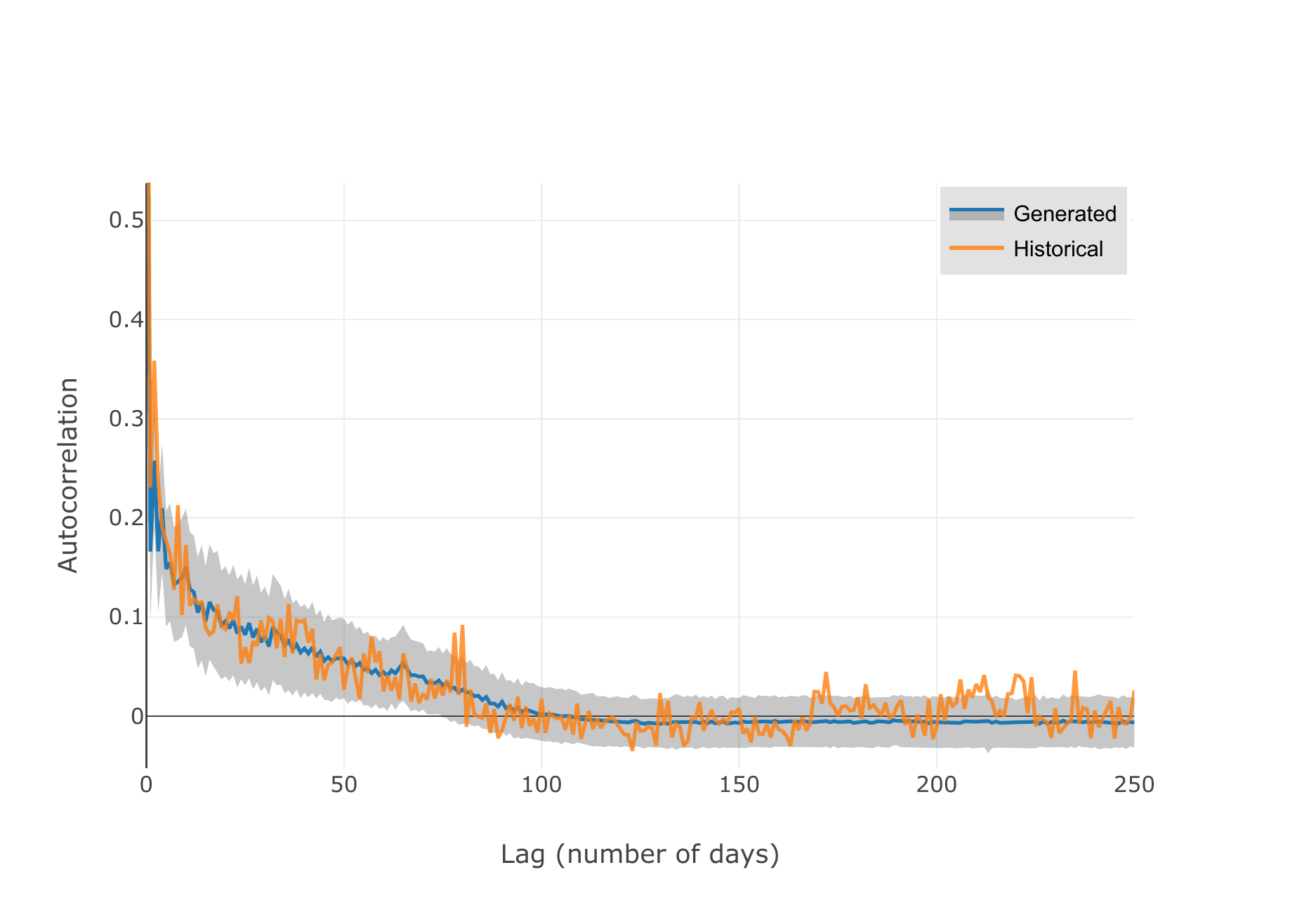}
		\caption{Squared}
		\label{fig:tcn_acf_sq}
	\end{subfigure}
	\\
	\vspace{-0.05em}
	\begin{subfigure}[b]{0.49\textwidth}
		\includegraphics[width=0.9\textwidth]{figures/tcn/acf_absolute_dev.pdf}
		\caption{Absolute}
		\label{fig:tcn_acf_abs}
	\end{subfigure}
	\begin{subfigure}[b]{0.49\textwidth}
		\includegraphics[width=0.9\textwidth]{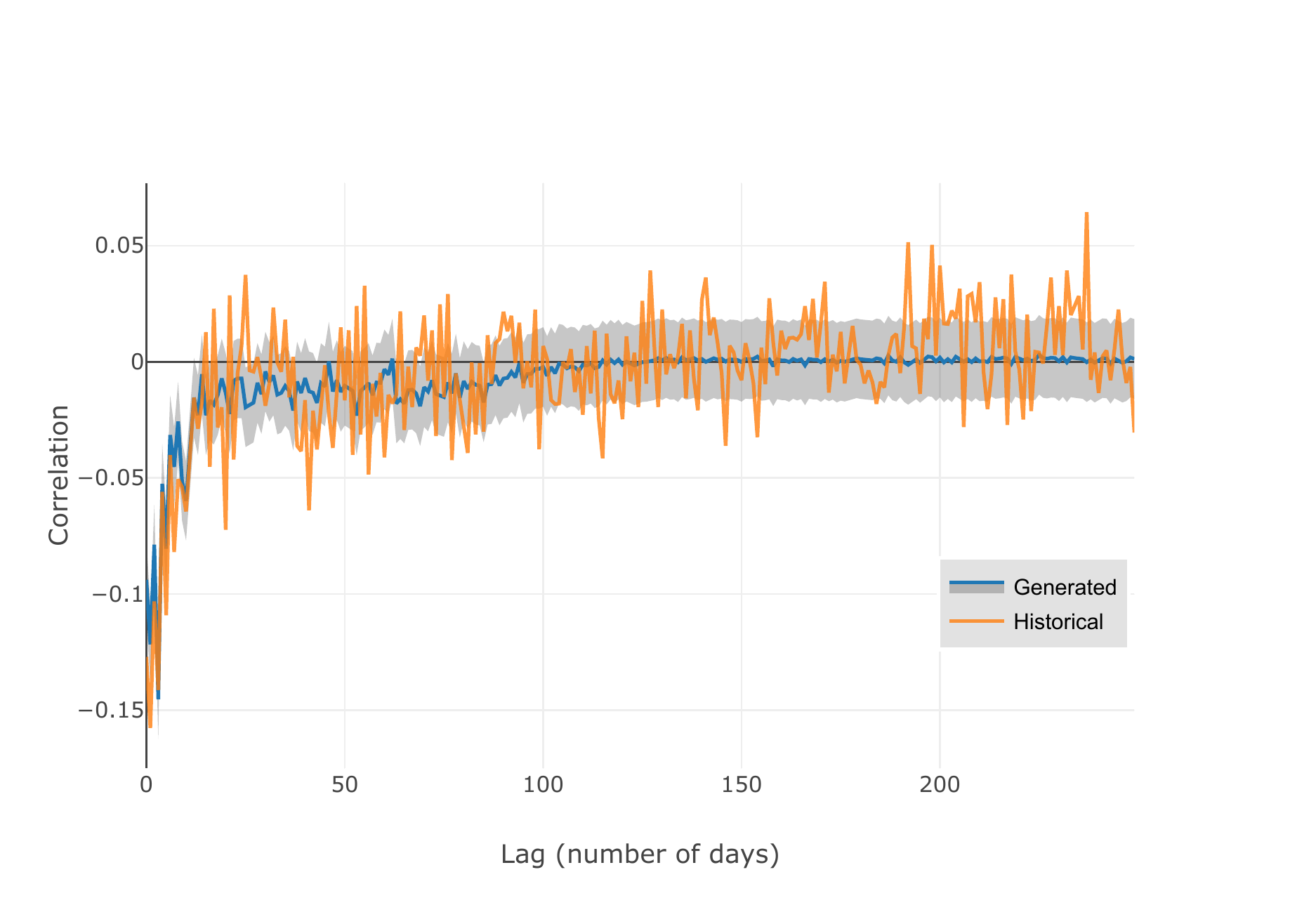}
		\caption{Leverage Effect}
		\label{fig:tcn_lev_eff}
	\end{subfigure}
	\caption{Mean autocorrelation function of the absolute, squared and identical log returns and leverage effect.}
	\label{fig:s&p_tcn_acf}
\end{figure}

\pagebreak
\subsection{Constrained SVNN}

\begin{figure}[H]
	\centering
	\includegraphics[width=0.9\textwidth]{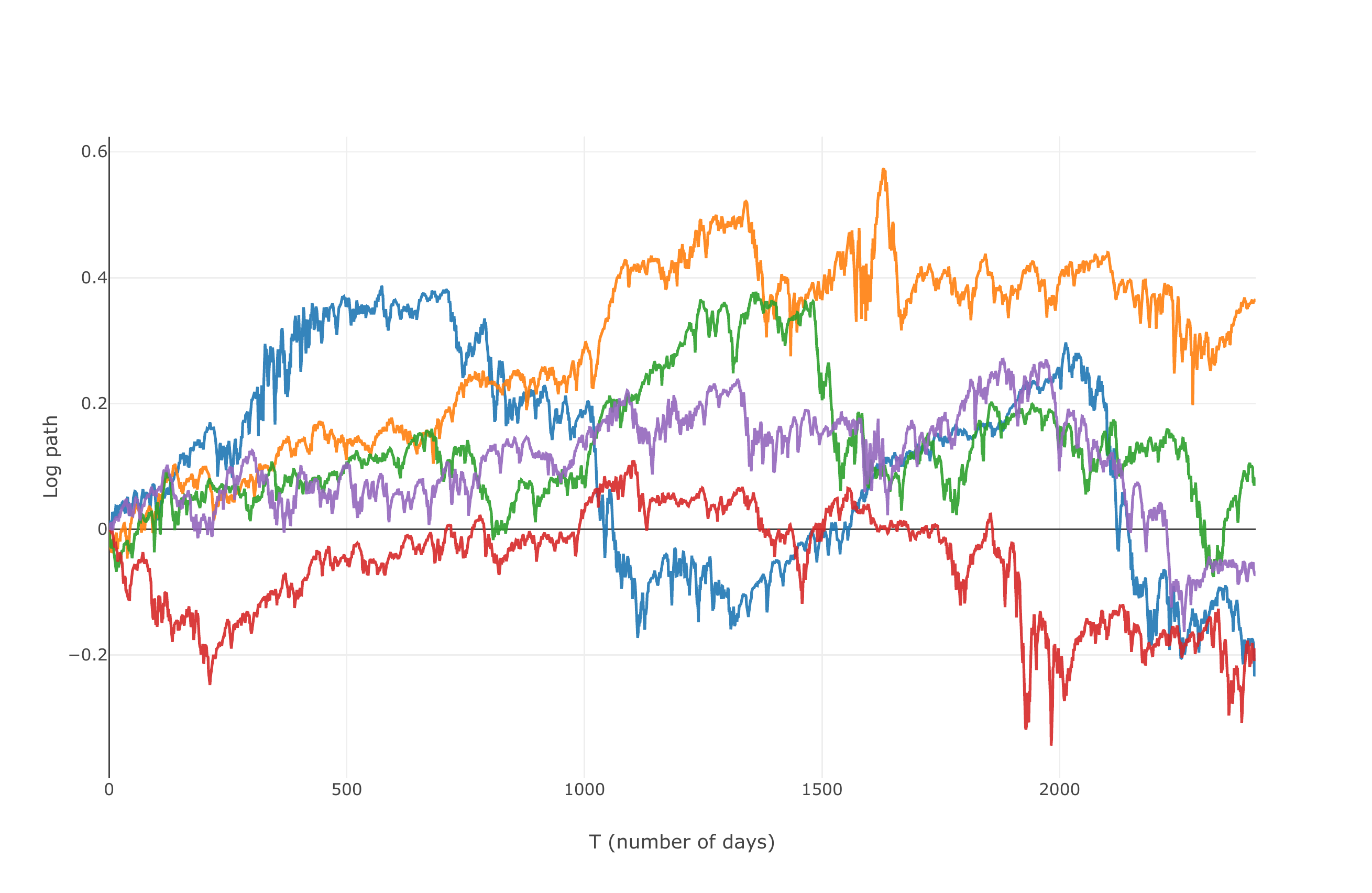}
	\caption{5 generated log paths}
	\label{fig:s&p_c_svnn_drift_5_log_paths}
\end{figure}

\begin{figure}[H]
	\centering
	\includegraphics[width=0.9\textwidth]{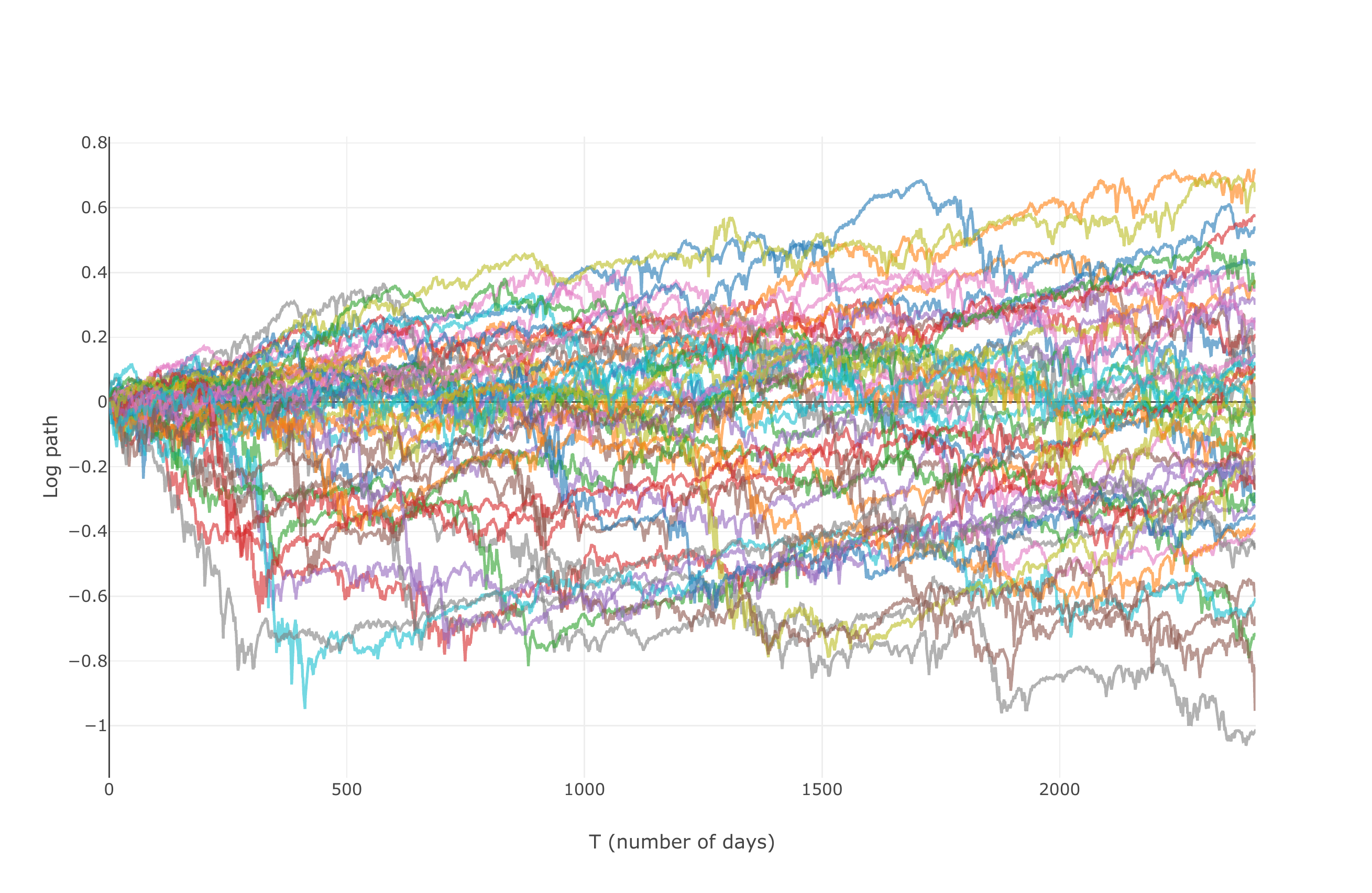}
	\caption{50 generated log paths}
	\label{fig:s&p_c_svnn_drift_log_paths}
\end{figure}

\begin{figure}[H]
	\vspace*{-2em}
	\captionsetup[subfigure]{aboveskip=-2pt,belowskip=-2pt}
	\centering
	\begin{subfigure}[b]{0.49\textwidth}
		\includegraphics[width=0.9\textwidth]{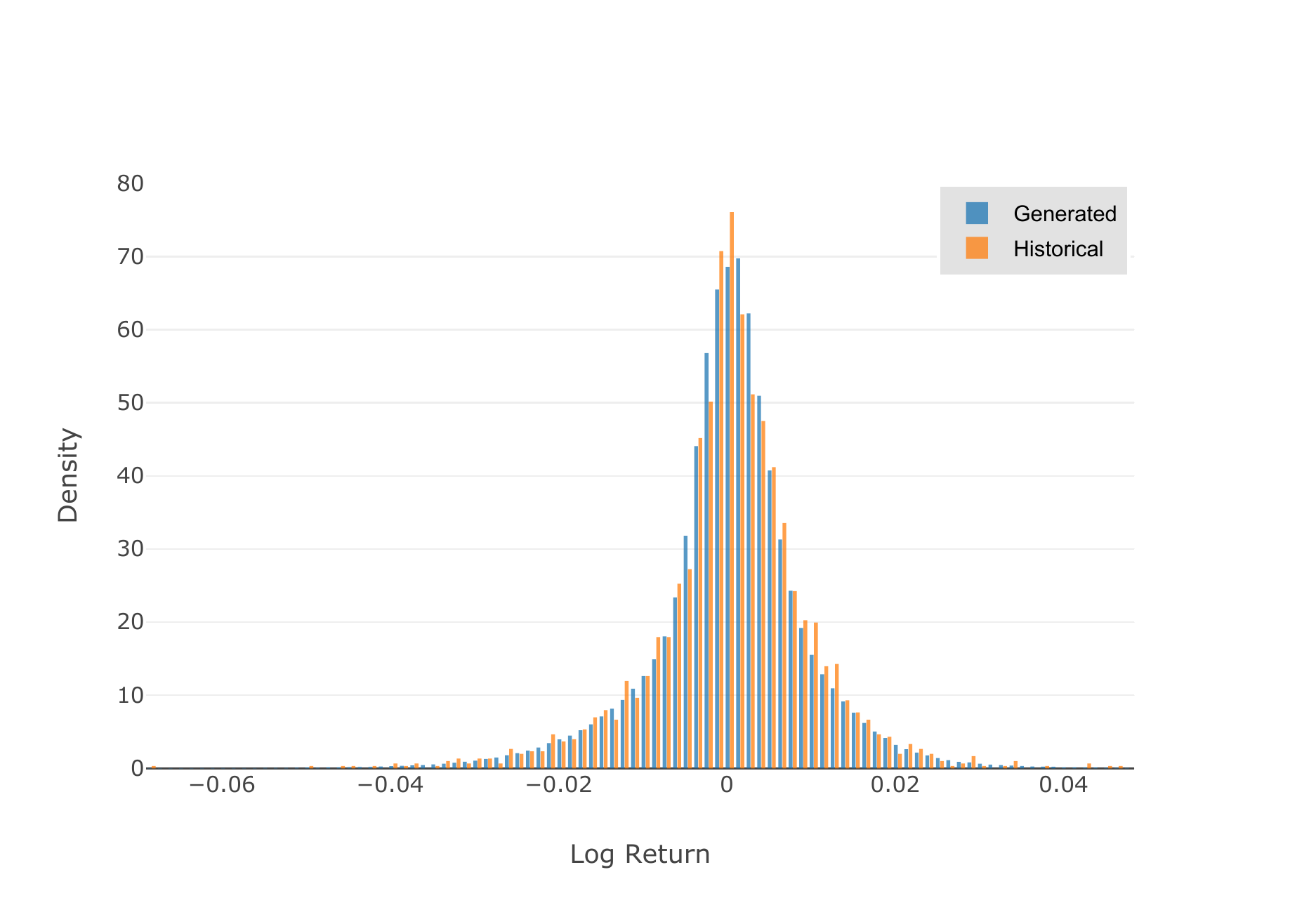}
		\caption{Daily}
		\label{fig:c_svvn_drift_hist1}
	\end{subfigure}
	\begin{subfigure}[b]{0.49\textwidth}
		\includegraphics[width=0.9\textwidth]{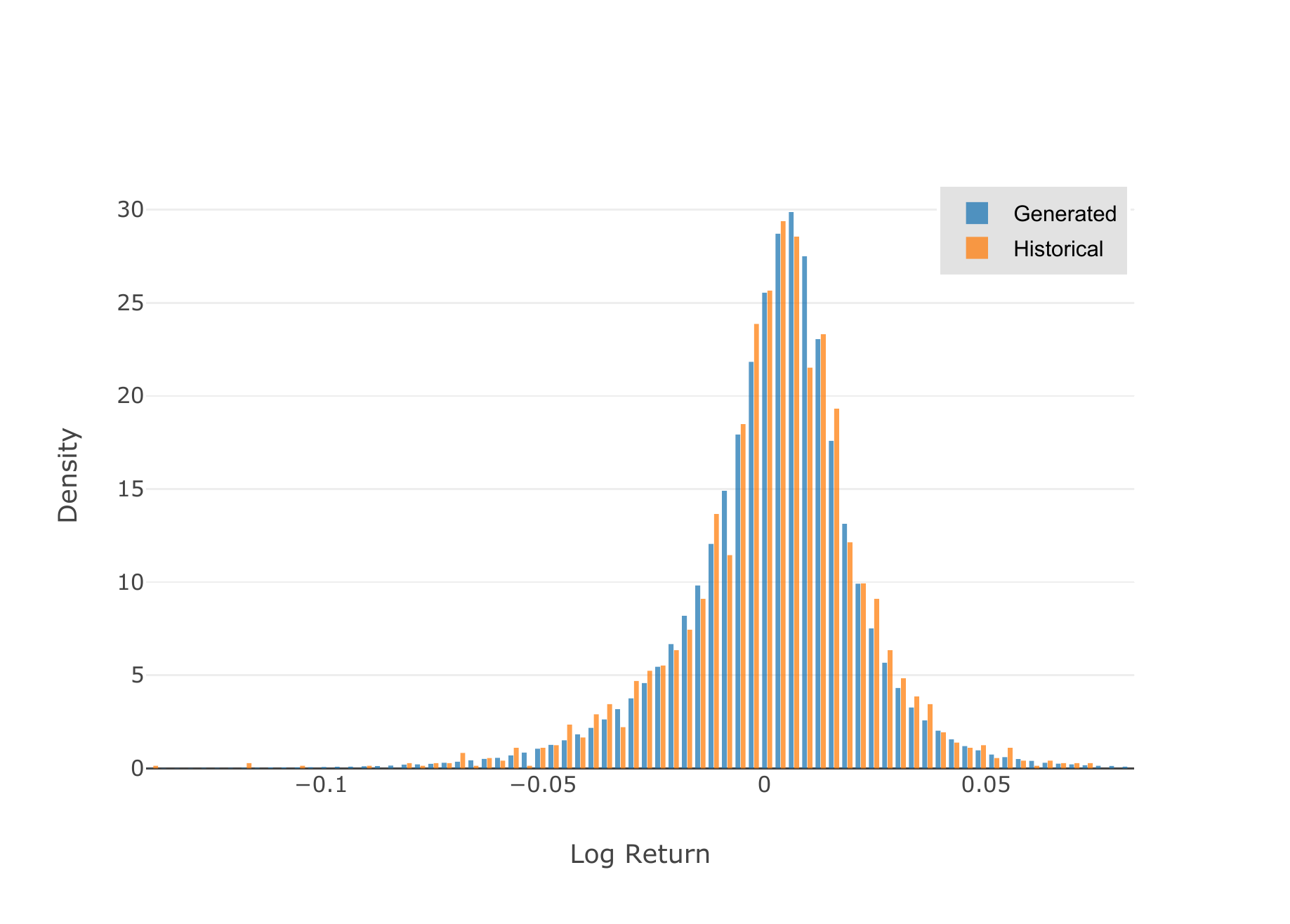}
		\caption{Weekly}
		\label{fig:c_svnn_drift_hist5}
	\end{subfigure}
	\\
	\vspace{-0.05em}
	\begin{subfigure}[b]{0.49\textwidth}
		\includegraphics[width=0.9\textwidth]{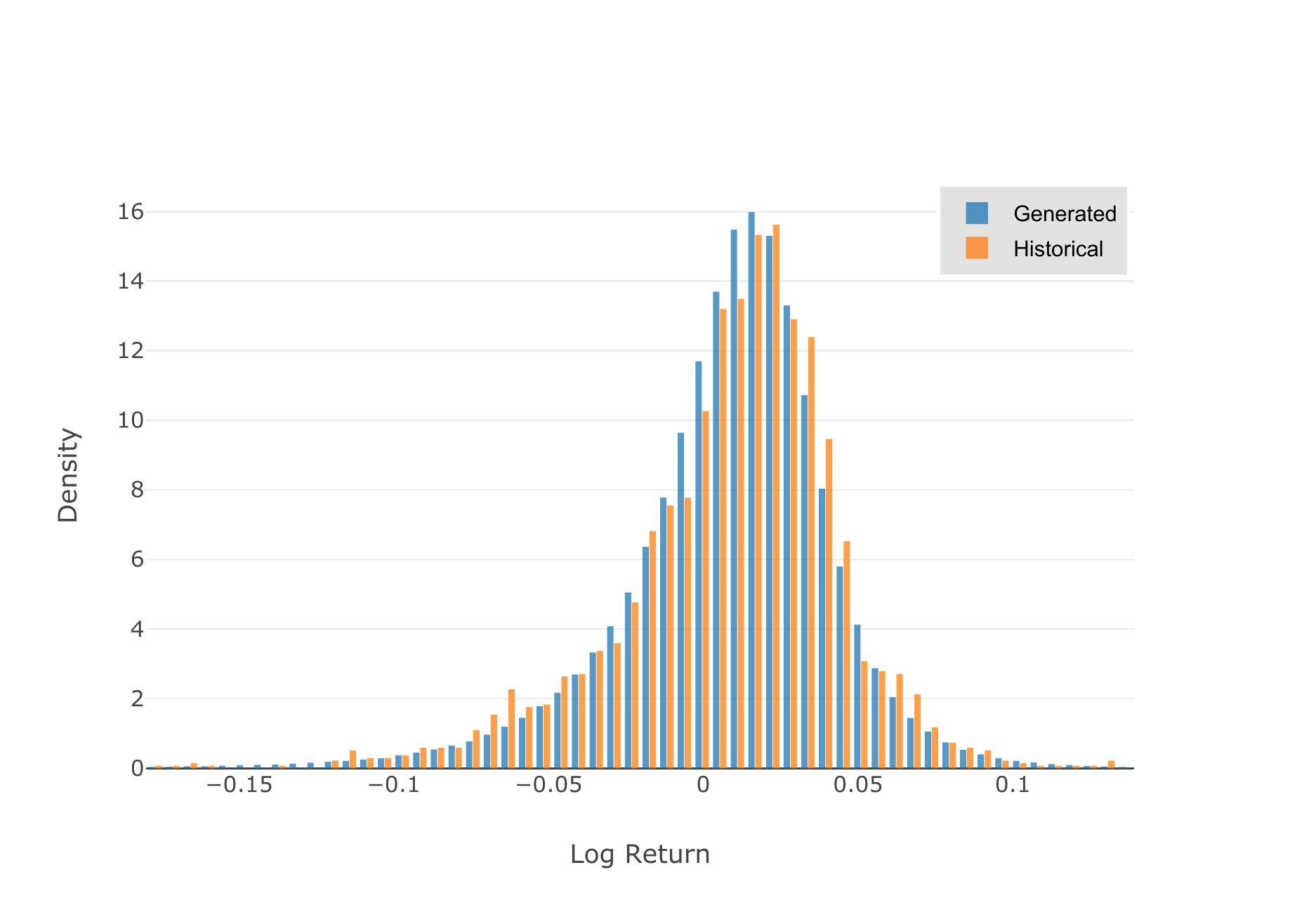}
		\caption{Monthly}
		\label{fig:c_svnn_drift_hist20}
	\end{subfigure}
	\begin{subfigure}[b]{0.49\textwidth}
		\includegraphics[width=0.9\textwidth]{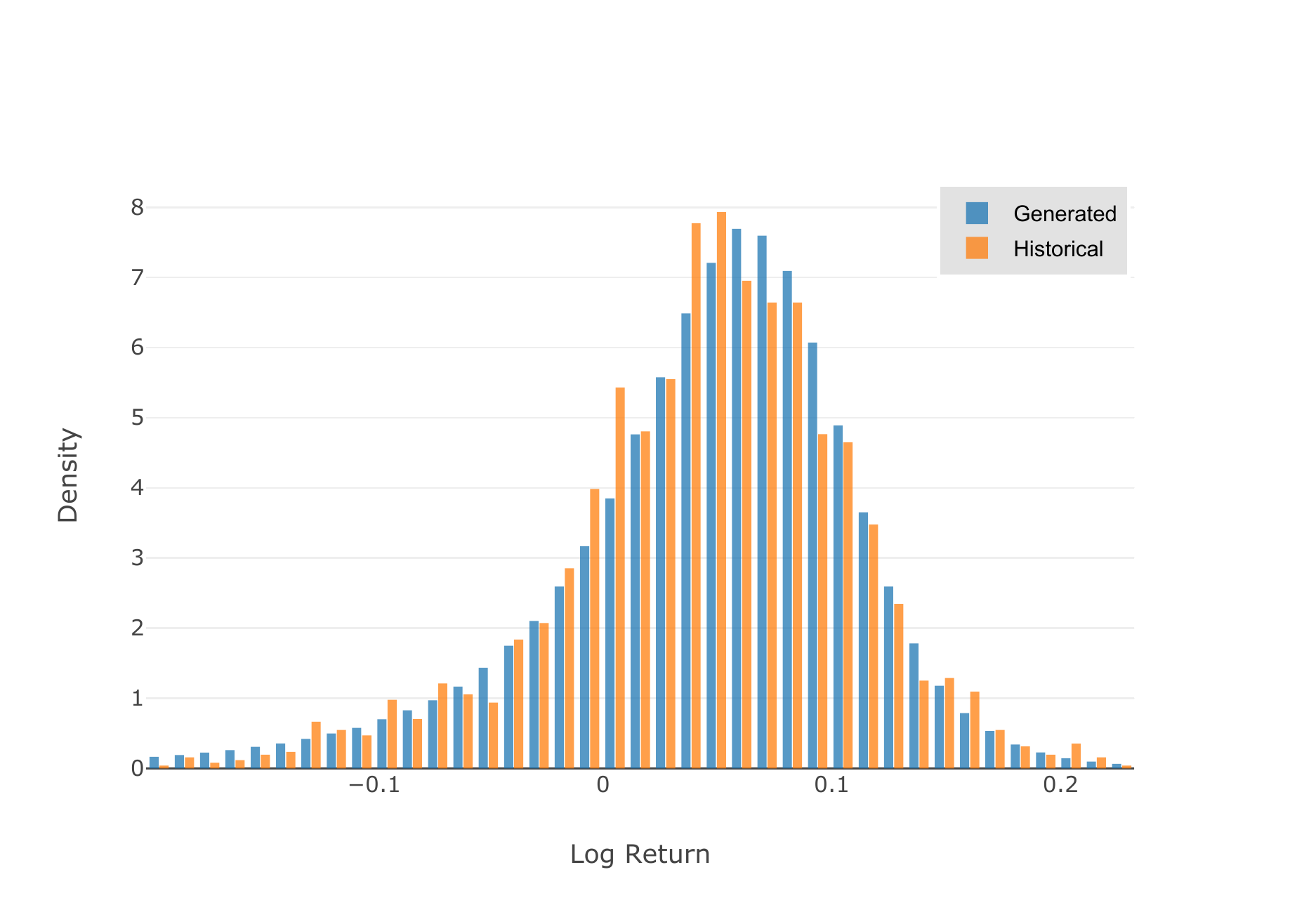}
		\caption{100-day}
		\label{fig:c_svnn_drift_hist100}
	\end{subfigure}
	\caption{Comparison of generated and historical densities of the S\&P500.}
	\label{fig:s&p_c_svnn_drift_hist}
\end{figure}
\vspace{-2.25em}
\begin{figure}[H]
	\captionsetup[subfigure]{aboveskip=-2pt,belowskip=-2pt}
	\centering
	\begin{subfigure}[b]{0.49\textwidth}
		\includegraphics[width=0.9\textwidth]{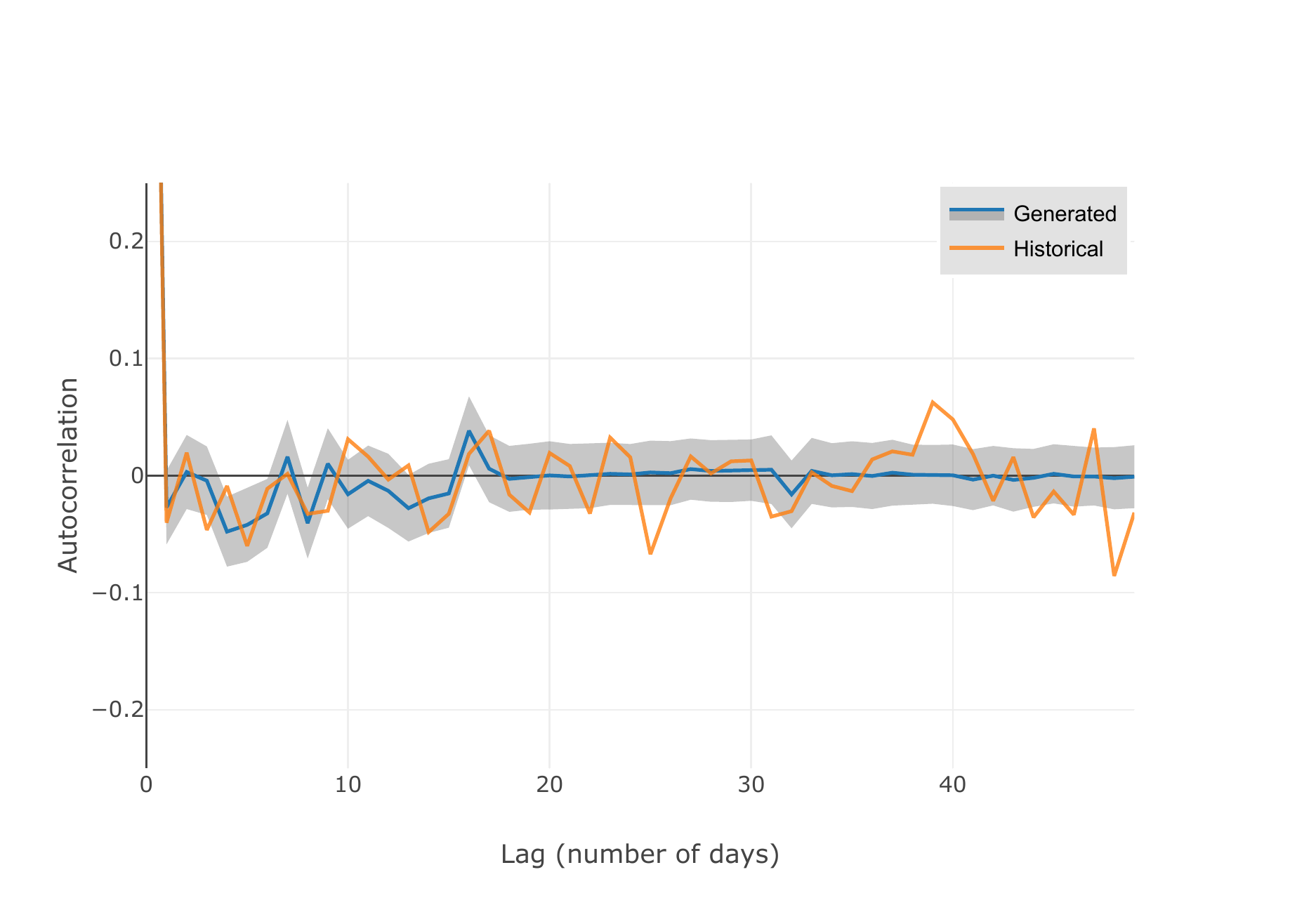}
		\caption{Serial}
		\label{fig:c_svnn_drift_acf_id}
	\end{subfigure}
	\begin{subfigure}[b]{0.49\textwidth}
		\includegraphics[width=0.9\textwidth]{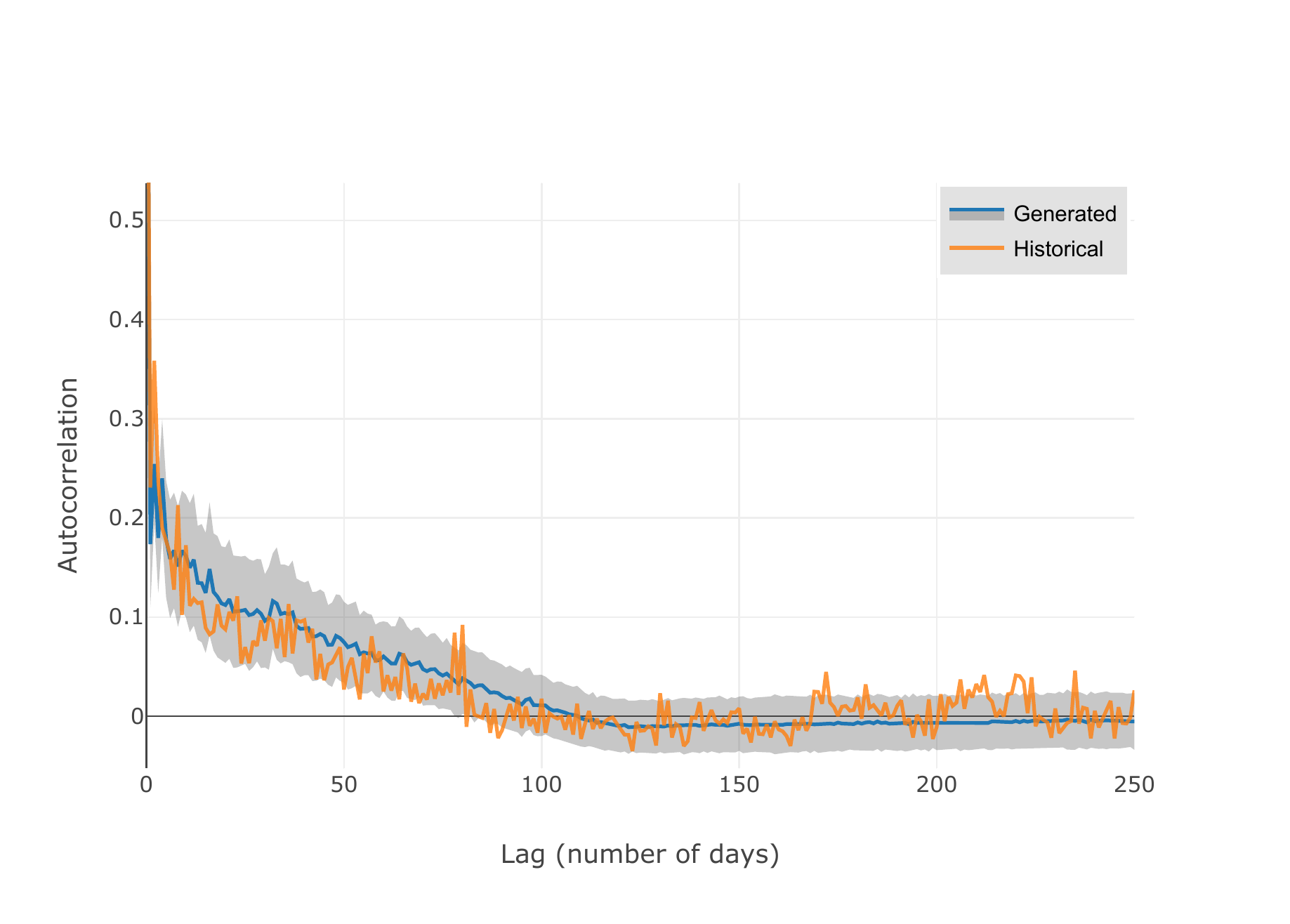}
		\caption{Squared}
		\label{fig:c_svnn_drift_acf_sq}
	\end{subfigure}
	\\
	\vspace{-0.05em}
	\begin{subfigure}[b]{0.49\textwidth}
		\includegraphics[width=0.9\textwidth]{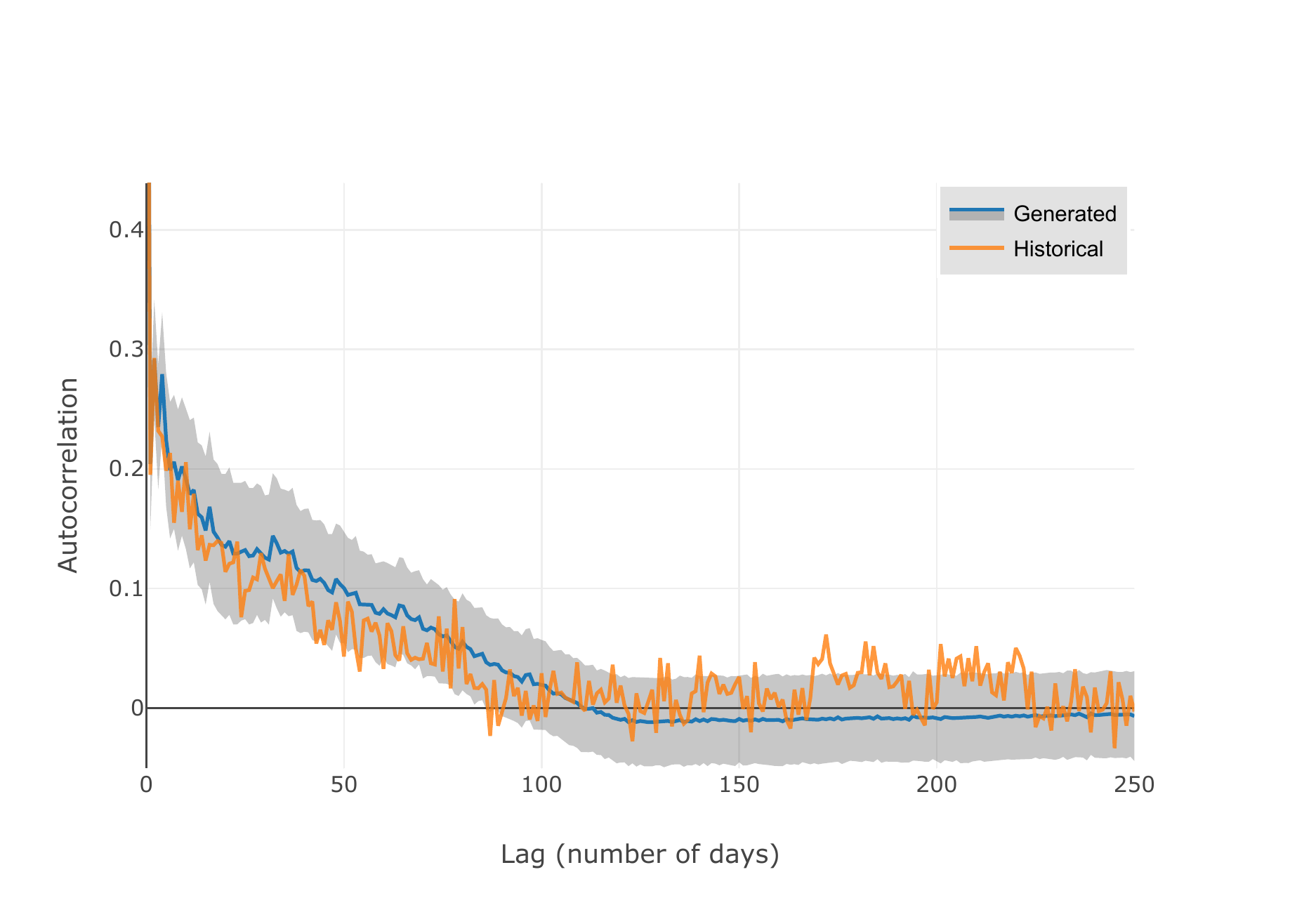}
		\caption{Absolute}
		\label{fig:c_svnn_drift_acf_abs}
	\end{subfigure}
	\begin{subfigure}[b]{0.49\textwidth}
		\includegraphics[width=0.9\textwidth]{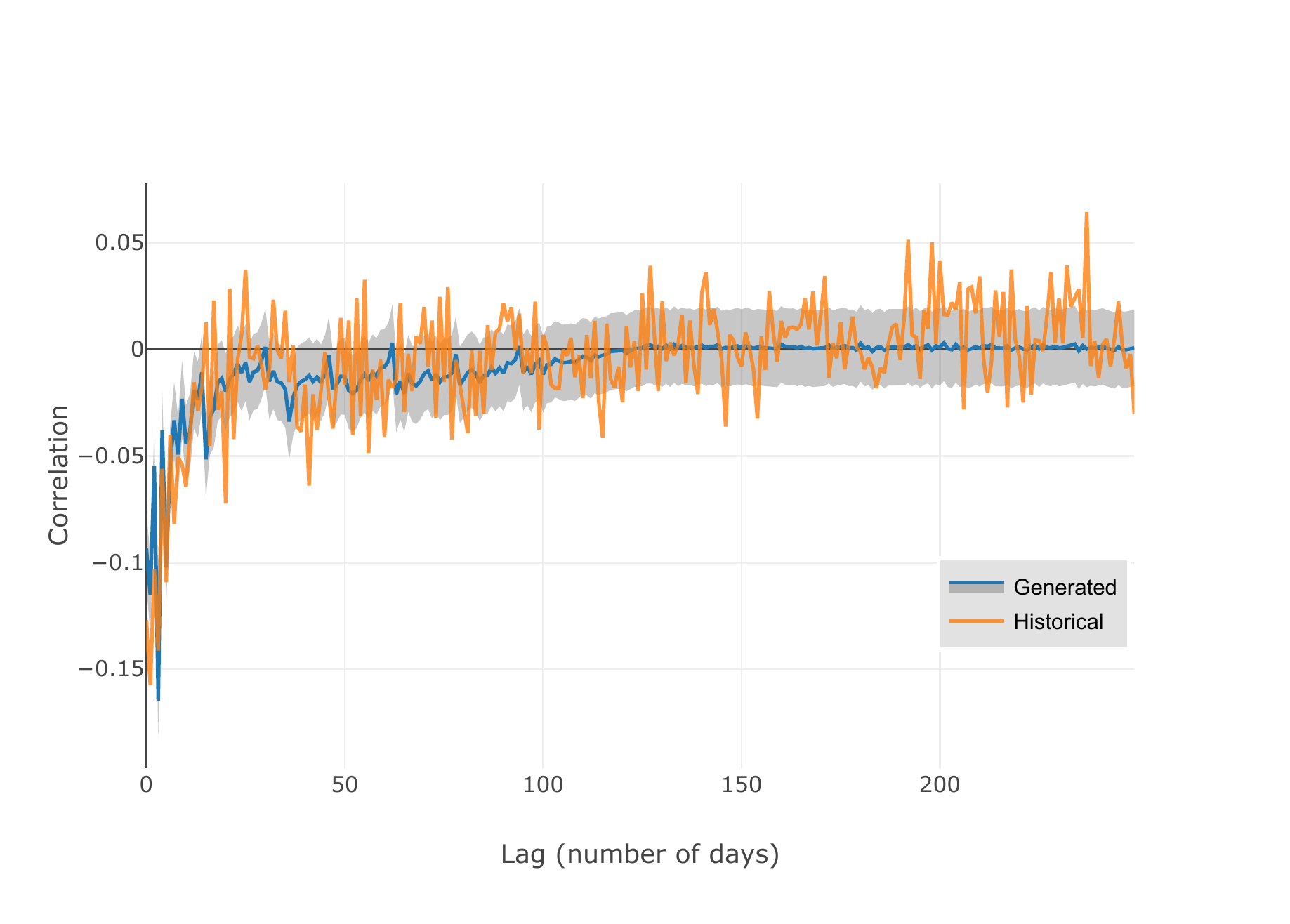}
		\caption{Leverage Effect}
		\label{fig:c_svnn_drift_lev_eff}
	\end{subfigure}
	\caption{Mean autocorrelation function of the absolute, squared and identical log returns and leverage effect.}
	\label{fig:s&p_c_svnn_drift_acf}
\end{figure}

\pagebreak
\subsection{GARCH(1,1) with constant drift}

\begin{figure}[H]
	\centering
	\includegraphics[width=0.9\textwidth]{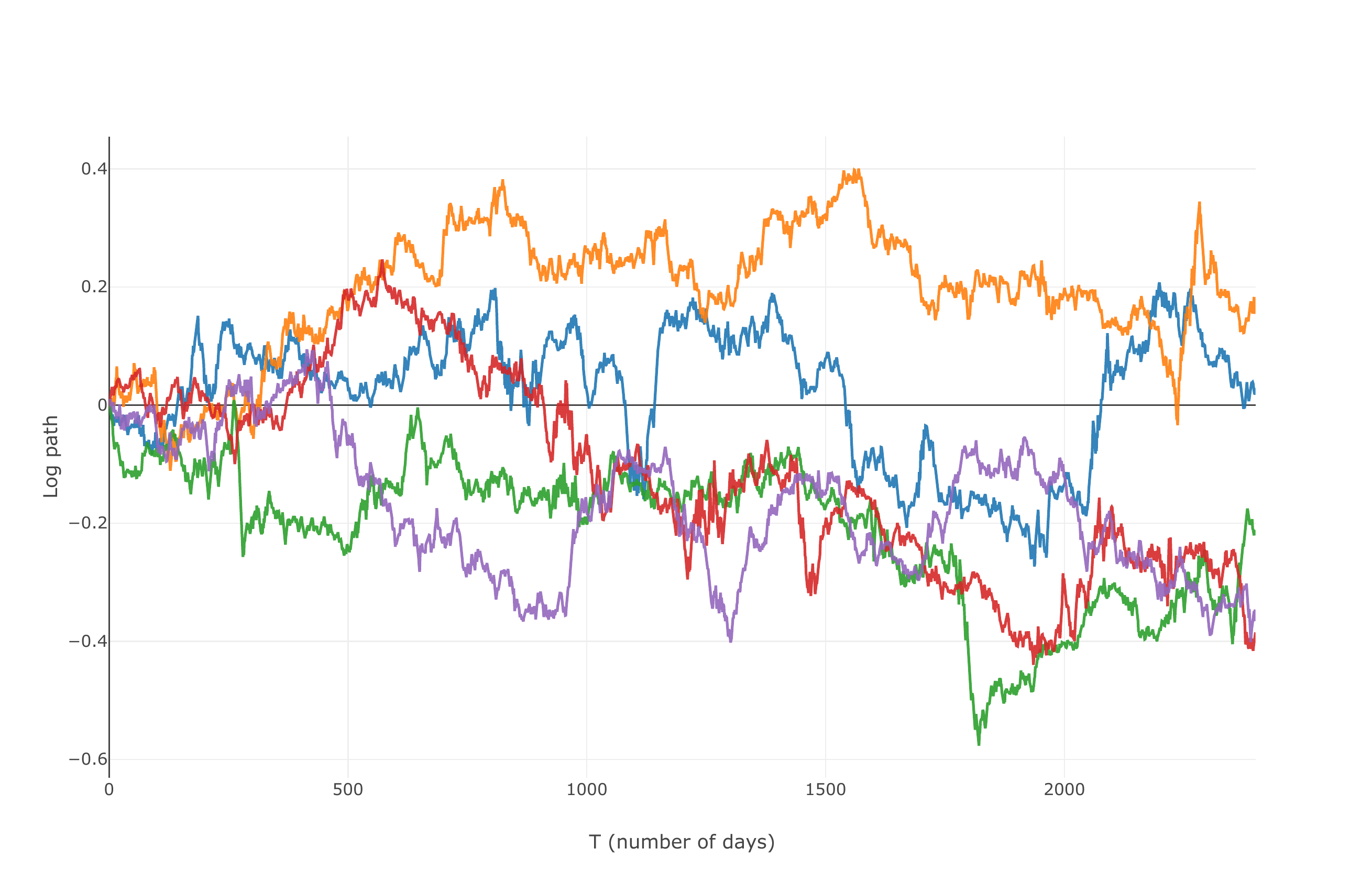}
	\caption{5 generated log paths}
	\label{fig:garch_drift_lev_eff}
\end{figure}

\begin{figure}[H]
	\centering
	\includegraphics[width=0.9\textwidth]{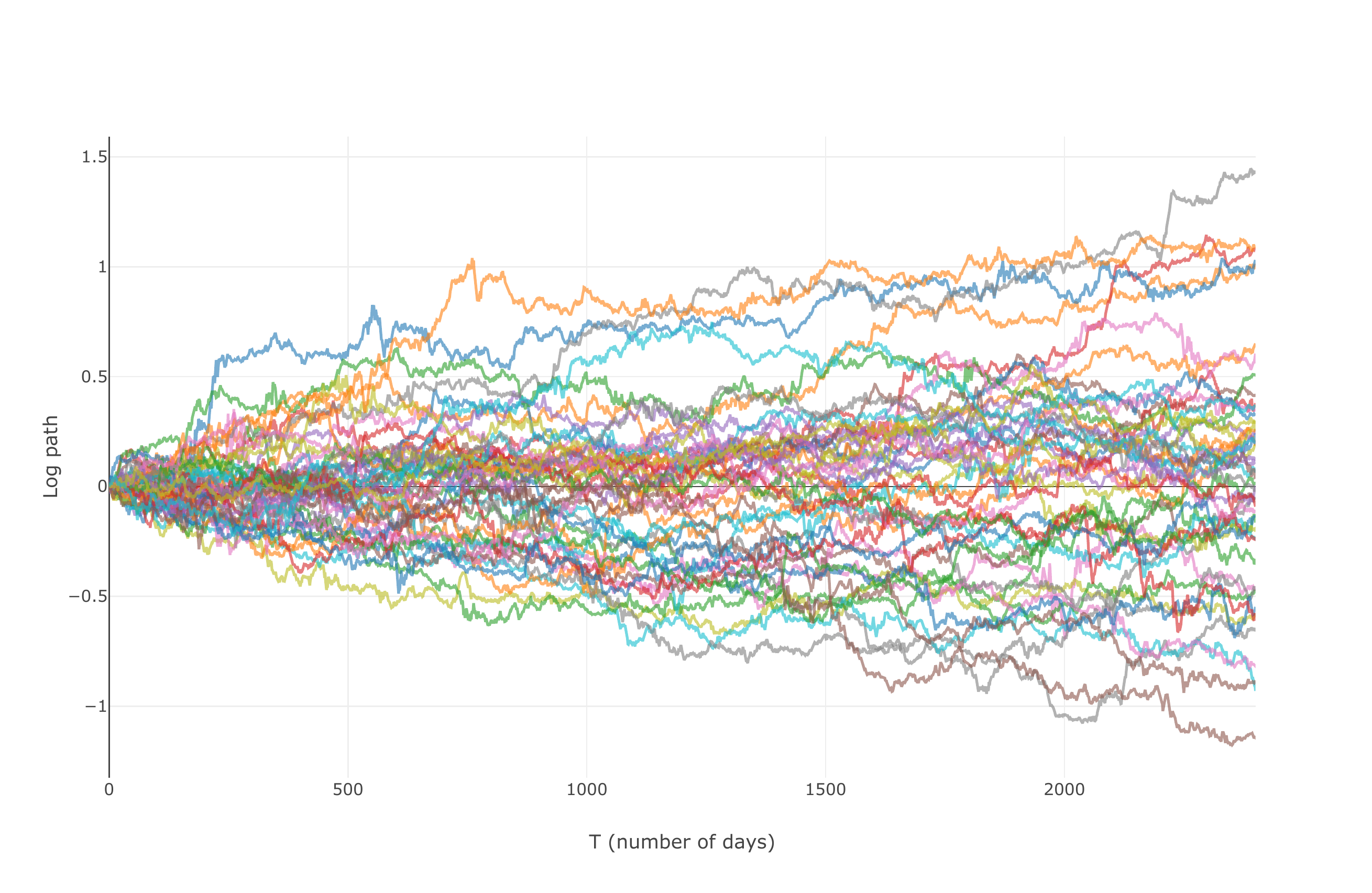}
	\caption{50 generated log paths}
	\label{fig:s&p_garch_drift_log_paths}
\end{figure}

\begin{figure}[H]
	\vspace*{-2em}
	\captionsetup[subfigure]{aboveskip=-2pt,belowskip=-2pt}
	\centering
	\begin{subfigure}[b]{0.49\textwidth}
		\includegraphics[width=0.9\textwidth]{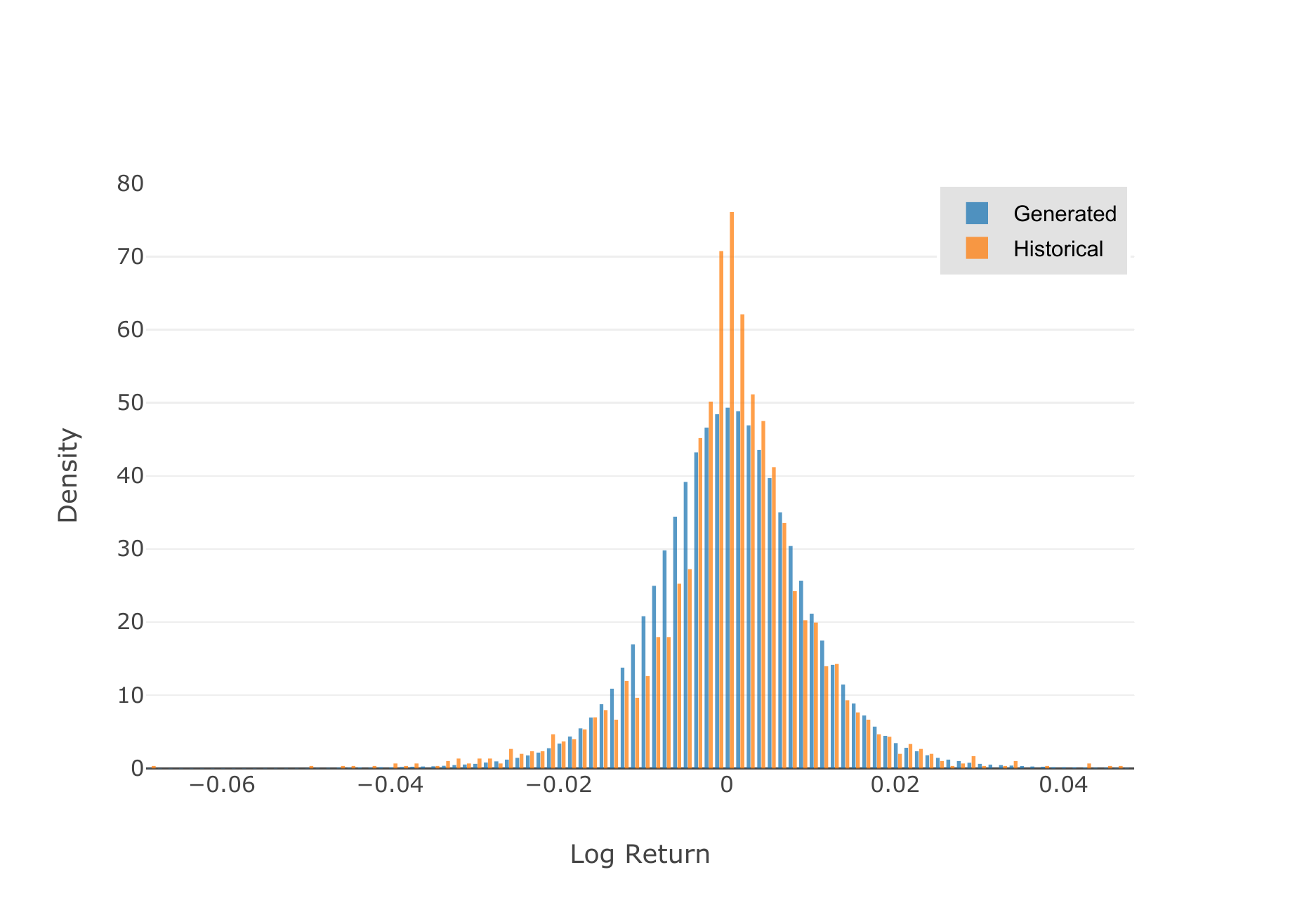}
		\caption{Daily}
		\label{fig:garch_svvn_drift_hist1}
	\end{subfigure}
	\begin{subfigure}[b]{0.49\textwidth}
		\includegraphics[width=0.9\textwidth]{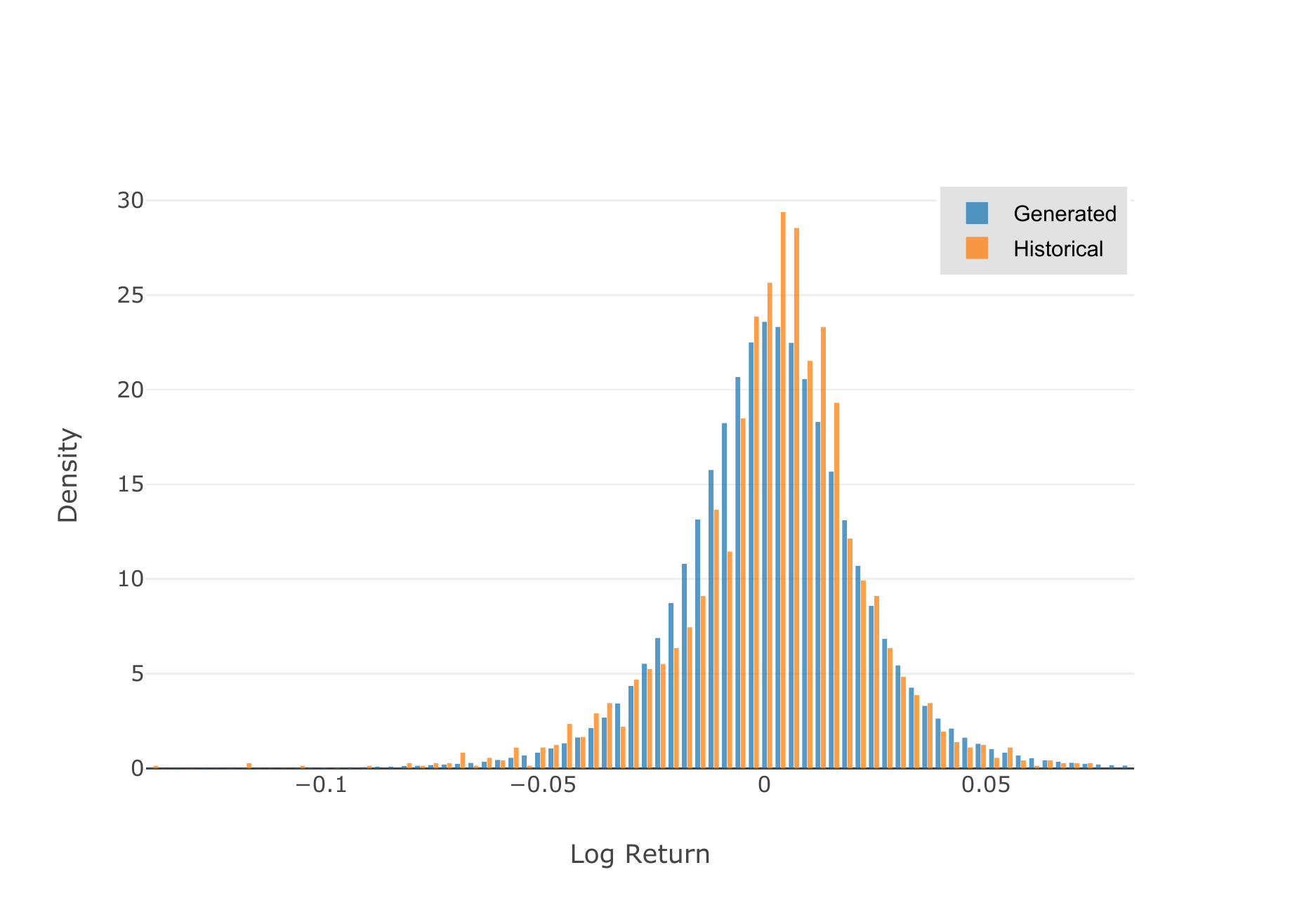}
		\caption{Weekly}
		\label{fig:garch_drift_hist5}
	\end{subfigure}
	\\
	\vspace{-0.05em}
	\begin{subfigure}[b]{0.49\textwidth}
		\includegraphics[width=0.9\textwidth]{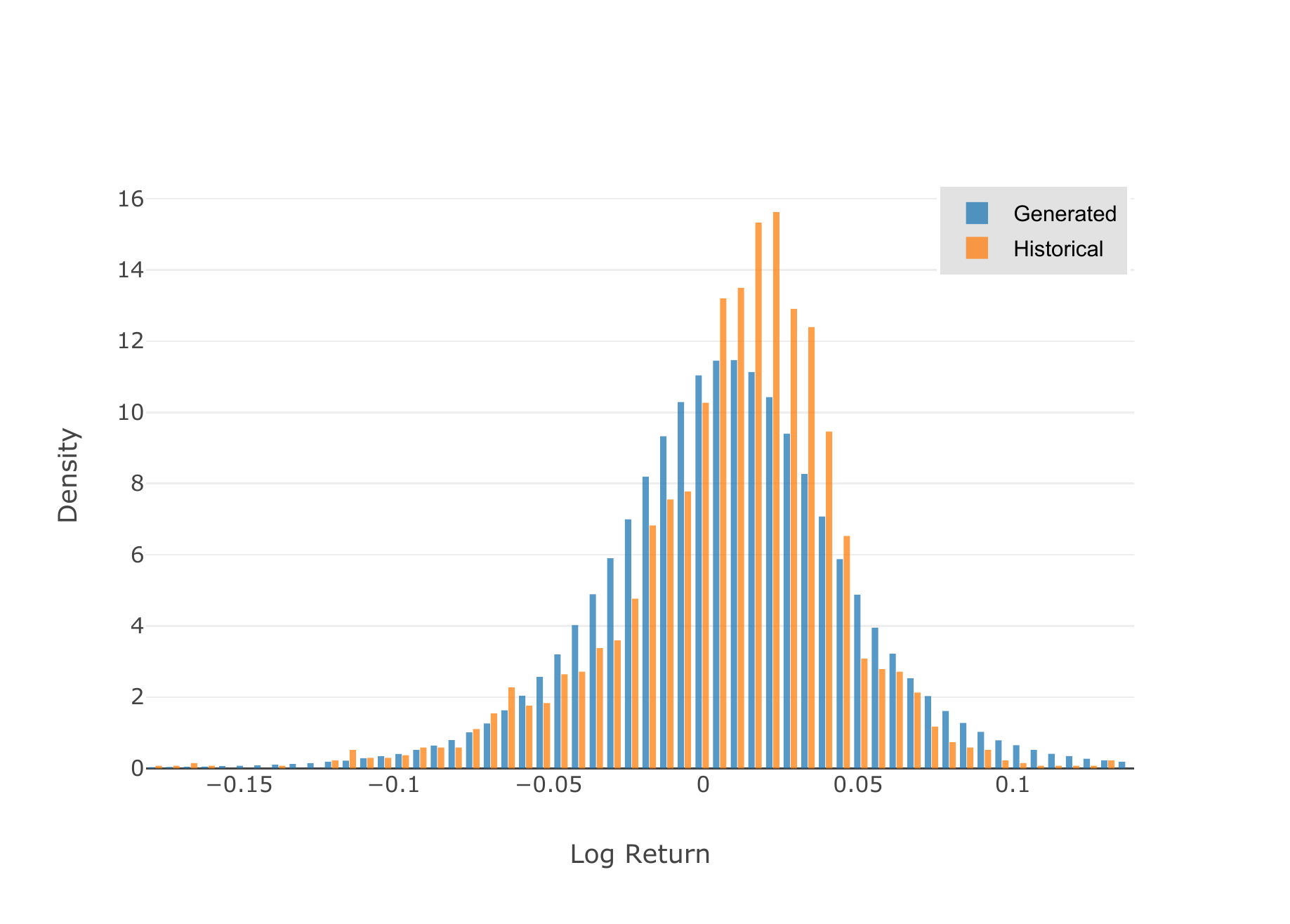}
		\caption{Monthly}
		\label{fig:garch_drift_hist20}
	\end{subfigure}
	\begin{subfigure}[b]{0.49\textwidth}
		\includegraphics[width=0.9\textwidth]{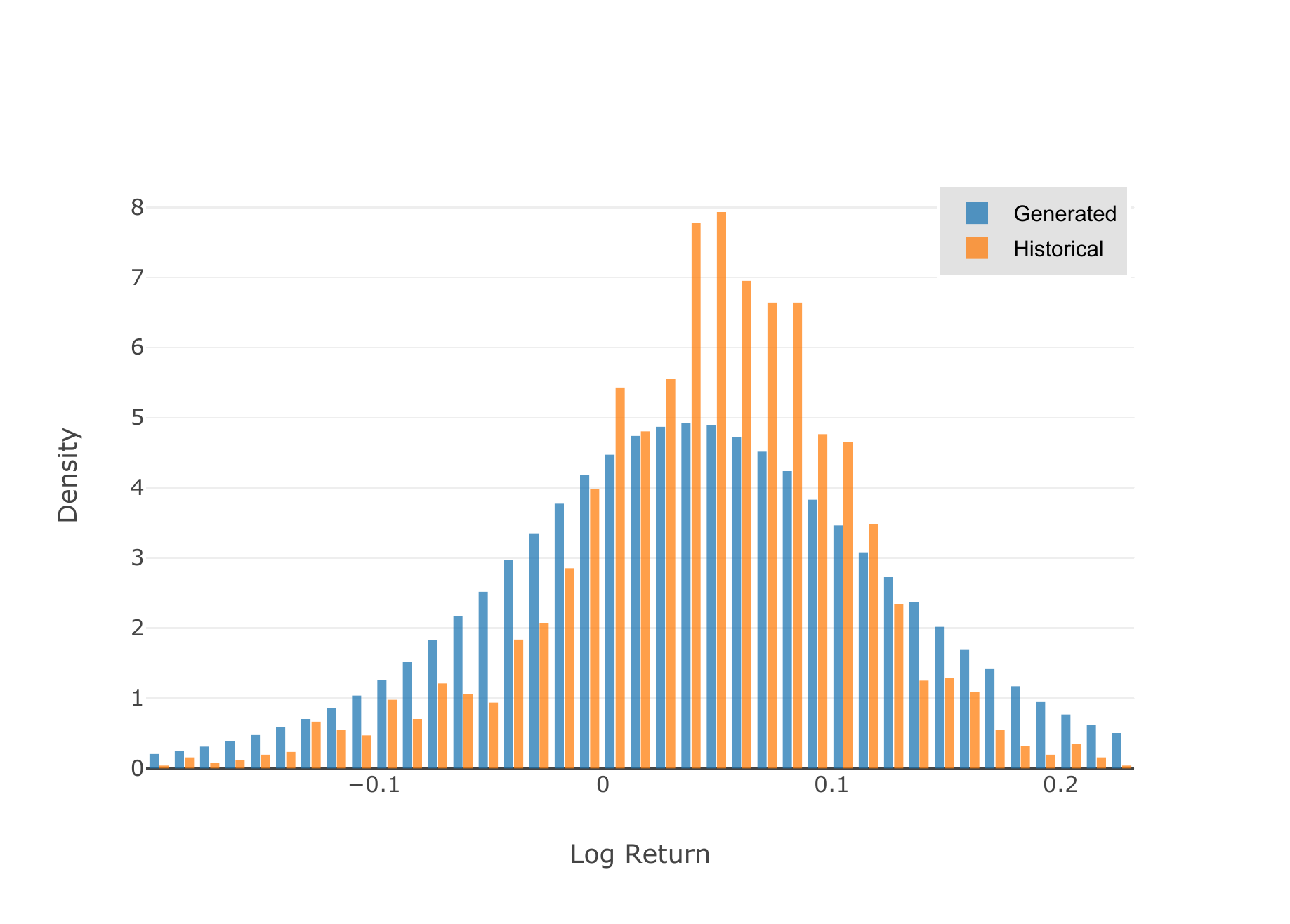}
		\caption{100-day}
		\label{fig:garch_drift_hist100}
	\end{subfigure}
	\caption{Comparison of generated and historical densities of the S\&P500.}
	\label{fig:s&p_garch_drift_hist}
\end{figure}
\vspace{-2.25em}
\begin{figure}[H]
	\captionsetup[subfigure]{aboveskip=-2pt,belowskip=-2pt}
	\centering
	\begin{subfigure}[b]{0.49\textwidth}
		\includegraphics[width=0.9\textwidth]{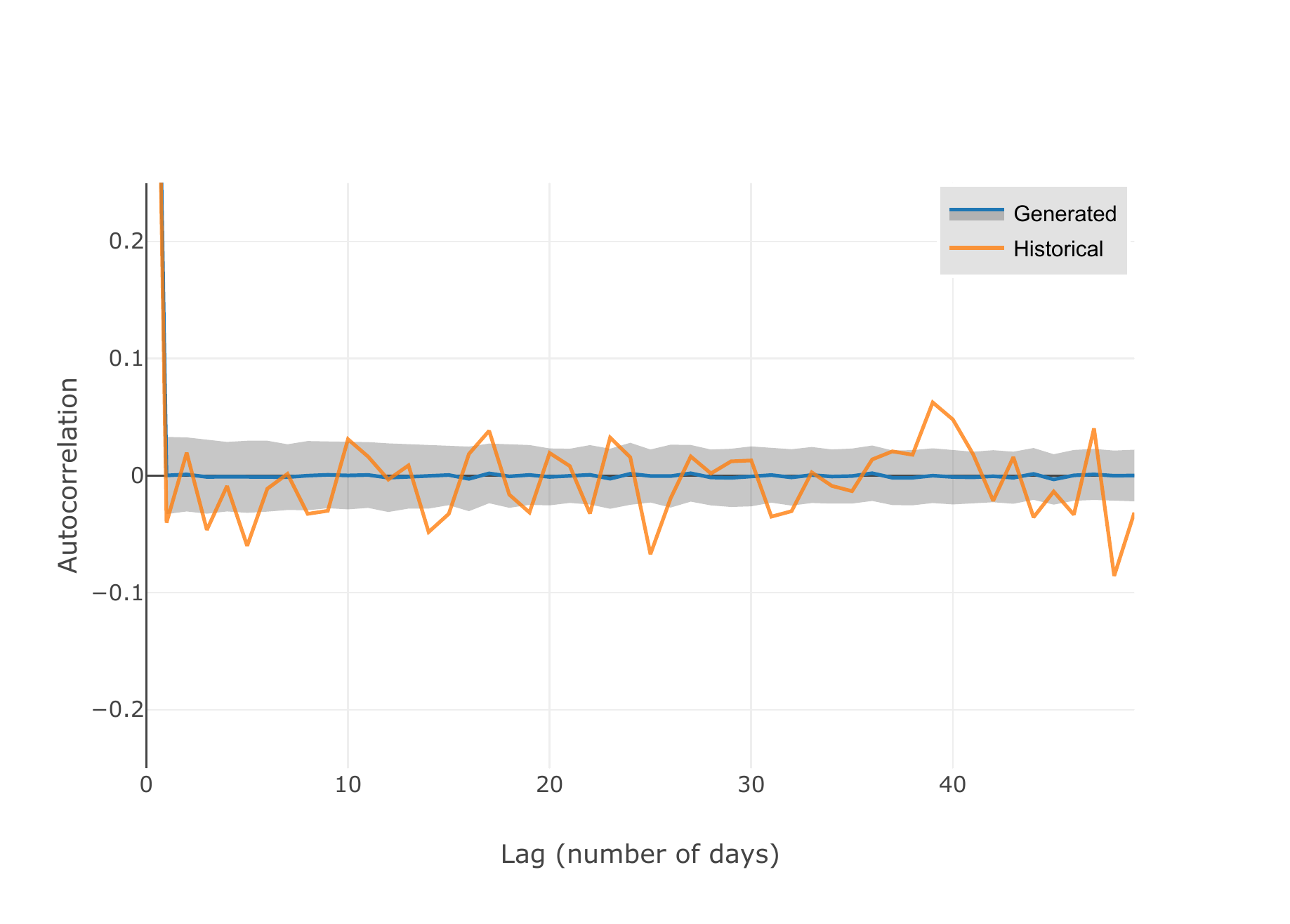}
		\caption{Serial}
		\label{fig:garch_drift_acf_id}
	\end{subfigure}
	\begin{subfigure}[b]{0.49\textwidth}
		\includegraphics[width=0.9\textwidth]{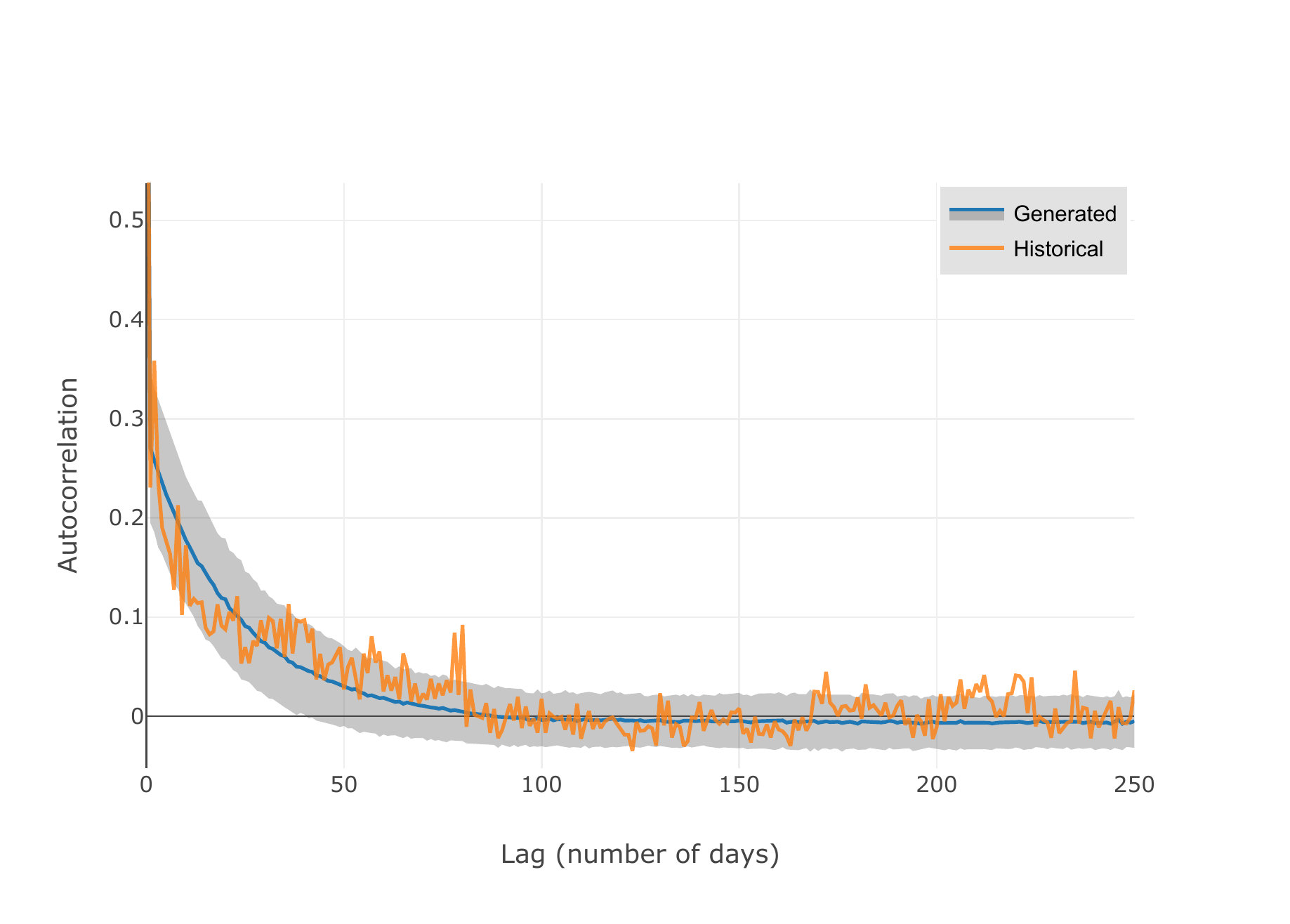}
		\caption{Squared}
		\label{fig:garch_drift_acf_sq}
	\end{subfigure}
	\\
	\vspace{-0.05em}
	\begin{subfigure}[b]{0.49\textwidth}
		\includegraphics[width=0.9\textwidth]{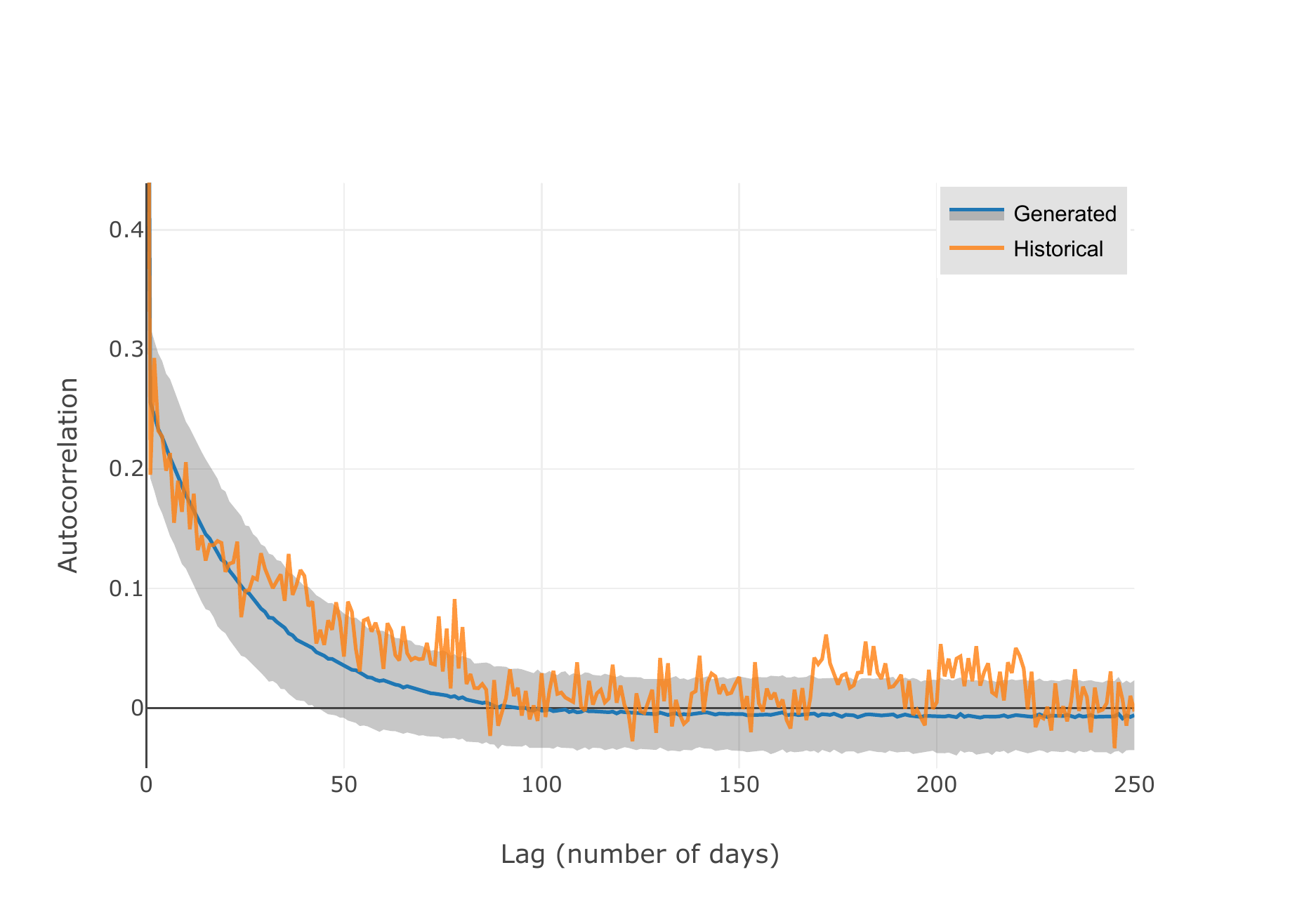}
		\caption{Absolute}
		\label{fig:garch_drift_acf_abs}
	\end{subfigure}
	\begin{subfigure}[b]{0.49\textwidth}
		\includegraphics[width=0.9\textwidth]{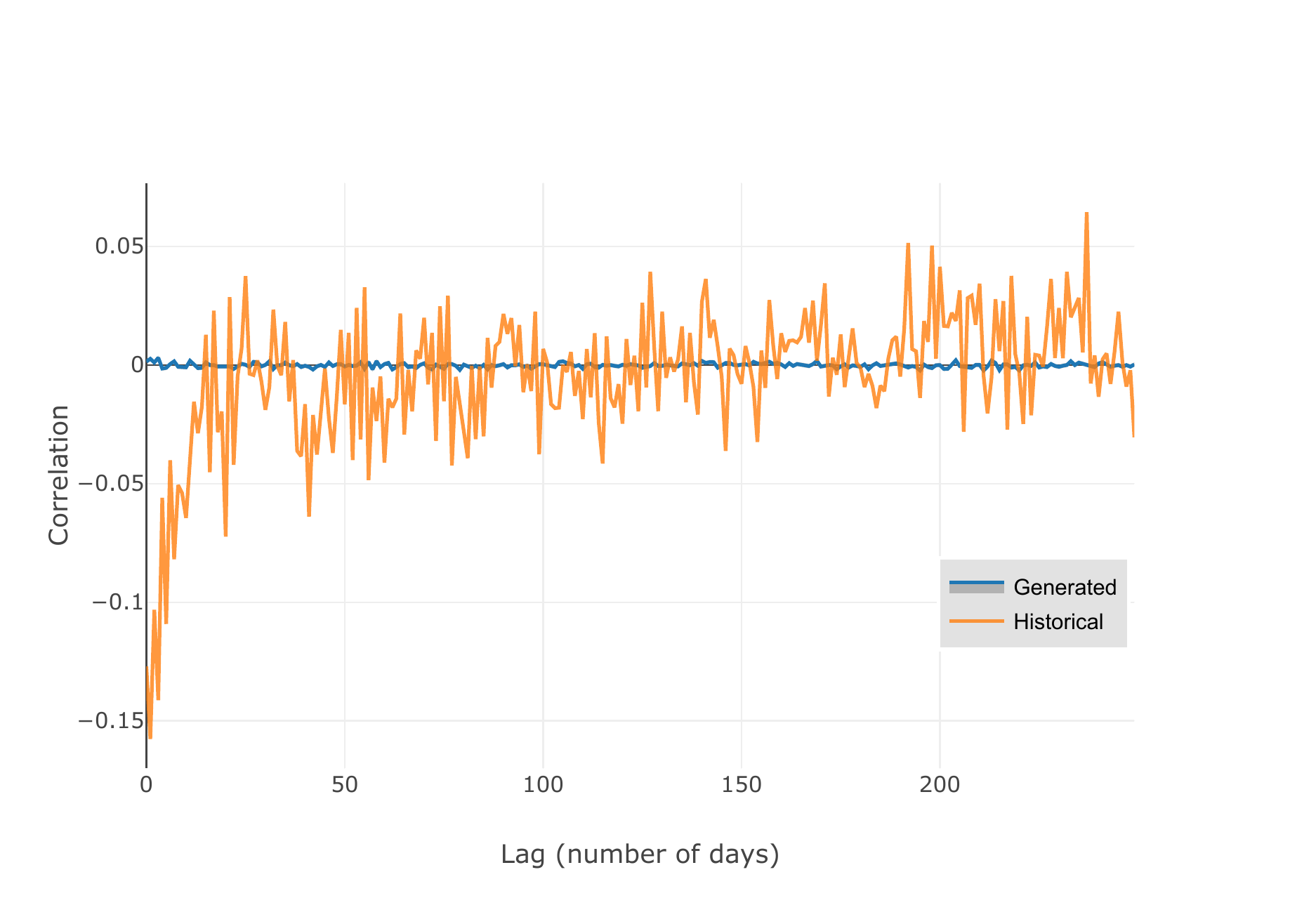}
		\caption{Leverage Effect}
		\label{fig:garch_drift_lev_eff}
	\end{subfigure}
	\caption{Mean autocorrelation function of the absolute, squared and identical log returns and leverage effect.}
	\label{fig:s&p_garch_drift_acf}
\end{figure}

\newpage
\section{Architecture}
\label{appendix_architecture}
For both the generator and the discriminator we used TCNs with skip connections. Inside the TCN architecture \emph{temporal blocks} were used as block modules. A temporal block consists of two dilated causal convolutions and two PReLUs \parencite{prelu_he} as activation functions. The primary benefit of using temporal blocks is to make the TCN more expressive by increasing the number of non-linear operations in each block module. A complete definition is given below. 
\begin{definition}[Temporal block]
	Let $N_I, N_H, N_O \in \mathbb{N}$ denote the input, hidden and output dimension and let $D, K \in \mathbb{N}$ denote the dilation and the kernel size. Furthermore, let $w_1, w_2$ be two dilated causal convolutional layers with arguments $(N_I, N_H, K, D)$ and $(N_H, N_O, K, D)$ respectively and let $\phi_1, \phi_2: \mathbb{R} \to \mathbb{R} $ be two PReLUs. 
	The function $f: \mathbb{R}^{N_I \times (2D(K-1)+1)} \to \mathbb{R}^{N_O}$ defined by 
	\begin{equation*}
	f(X) = \phi_2 \circ w_2 \circ \phi_1 \circ w_1(X)
	\end{equation*}
	is called \emph{temporal block} with arguments $(N_I, N_H, N_O, K, D)$.
\end{definition}
The TCN architecture used for the generator and the discriminator in the pure TCN and C-SVNN model is illustrated in \autoref{tab:used_arch}. \autoref{tab:tcn_parameters} shows the input, hidden and output dimensions of the different models. Here, G abbreviates the generator and D the discriminator.  Note that for all models, except the generator of the C-SVNN, the hidden dimension was set to eighty. The kernel size of each temporal block, except the first one, was two. Each TCN modeled a RFS of 127. 
\begin{table}[htp]
	\centering
	\begin{tabular}{|l|l|}
		\hline
		Module name & Arguments\\
		\hline
		Temporal block 1 & $(N_I, \ N_H, \ N_H, \ 1, \ 1)$ \\
		Temporal block 2 & $(N_H, \ N_H, \ N_H, \ 2, \ 1)$ \\
		Temporal block 3 & $(N_H, \ N_H, \ N_H, \ 2, \ 2)$ \\
		Temporal block 4 & $(N_H, \ N_H, \ N_H, \ 2, \ 4)$ \\
		Temporal block 5 & $(N_H, \ N_H, \ N_H, \ 2, \ 8)$ \\
		Temporal block 6 & $(N_H, \ N_H, \ N_H, \ 2, \ 16)$ \\
		Temporal block 7 & $(N_H, \ N_H, \ N_H, \ 2, \ 32)$ \\
		$1 \times 1$ Convolution    & $(N_H, \ N_O, \ 1, \ 1) $\\
		\hline
	\end{tabular}
	\caption{TCN architecture of the reported models. Note that the TCN architecture includes skip connections. }
	\label{tab:used_arch}
\end{table}

\begin{table}[htp]
	\centering
	\begin{tabular}{|c|c|c|c|c|}
		\hline
		Models & Pure TCN - G & Pure TCN - D & C-SVNN - G & C-SVNN - D \\ \hline
		$N_I$     & $3$          & $1$          & $3$        & $1$        \\ 
		$N_H$     & $80$         & $80$         & $50$       & $80$       \\ 
		$N_O$     & $1$          & $1$          & $2$        & $1$        \\ \hline
	\end{tabular}
	\caption{Configuration of the generator (G) and discriminator (D) of the \emph{pure TCN} and \emph{C-SVNN} model.}
	\label{tab:tcn_parameters}
\end{table}

\end{appendix}
\end{document}